\documentclass[12pt,oneside]{article}
\pagestyle{plain} \textheight 9.4in
\textwidth 7in \oddsidemargin 0.0in \evensidemargin 0.0in
\topmargin 0in

\voffset -.6in
\hoffset -0.35in
\usepackage{amsfonts}
\usepackage{amscd}
\usepackage{amsthm}
\usepackage{amssymb}
\usepackage{amsmath}
\usepackage{epsfig}
\usepackage{graphicx}
\usepackage{psfrag}

\newtheorem{Lem}{Lemma}[section]
\newtheorem{Prop}[Lem]{Proposition}

\theoremstyle{plain}

\newtheorem{Cor}[Lem]{Corollary}

\long\def\symbolfootnote[#1]#2{\begingroup%
\def\thefootnote{\fnsymbol{footnote}}\footnote[#1]{#2}\endgroup}

\theoremstyle{remark}

\begin{document}
\begin{center}
\textbf{{\large A Perturbative Approach to Inelastic Collisions in a Bose-Einstein Condensate}}\\[0pt]

\vspace{12pt} R.B. Mann$^{1,2}$, M. B. Young$^{1,3}$, I. Fuentes-Schuller$^{4,5,}$\symbolfootnote[1]{Published before under
maiden name Fuentes-Guridi.}\vspace{12pt}

\smallskip \textit{$^1$ Department of Physics \& Astronomy, University of Waterloo, \\[0pt]
Waterloo, Ontario, N2L 3G1, Canada }

\smallskip \textit{$^2$ Perimeter Institute for Theoretical Physics, \\ [0pt] Waterloo, Ontario, N2L 2Y5, Canada}

\smallskip \textit{$^3$ Department of Mathematics, State University of New York at Stony Brook, \\[0pt]
Stony Brook, New York, 11794-3651, U.S.A. }

\smallskip \textit{$^5$ Institute for Theoretical Physics, Technical University of Berlin, \\[0pt]
Hardenbergstr. 36, D-10623, Berlin, Germany}

\smallskip \textit{$^4$ Institute of Physics, University of Potsdam, \\[0pt]
Am Neuen Palais 10, D-14469, Potsdam, Germany}

\bigskip

mann@avatar.uwaterloo.ca, myoung@math.sunysb.edu, ivette@physik.uni-berlin.de

\vspace{24pt}
\end{center}
\begin{abstract}
\addtolength{\baselineskip}{1.2mm} \addtolength{\baselineskip}{1.2mm}

It has recently been discovered that for certain rates of mode-exchange collisions analytic solutions can be found for a Hamiltonian describing the two-mode Bose-Einstein condensate.  We proceed to study the behaviour of the system using perturbation theory if the coupling constants only approximately match these parameter constraints.  We find that the model is robust to such perturbations.  We study the effects of degeneracy on the perturbations and find that the induced changes differ greatly from the non-degenerate case.  We also model inelastic collisions that result in particle loss or condensate decay as external perturbations and use this formalism to examine the effects of three-body recombination and background collisions.

\end{abstract}

\addtolength{\baselineskip}{1.5mm} \thispagestyle{empty}

\vspace{48pt}

\setcounter{footnote}{0}

\section{Introduction}

A Bose-Einstein condensate (BEC) is a state of matter in which a large number of bosons occupy the same quantum mechanical ground state.  As such, BECs present the opportunity to study quantum systems which display large scale (macroscopic) collective behaviour.  Recently there has been much interest in multi-component BECs because of their importance to quantum optics \cite{milbu,albie,santo,eva}.  Multi-component BECs are most often formed in a multi-well potential in which the components are spatially separated \cite{albie,eva}.  Alternatively, the multi-component formalism can be used to model a single component BEC that possesses several internal degrees of freedom, such as varying amounts of spin \cite{santo}.

In general, many-body systems are of significant importance in physics.  For example, quantum information processes require the manipulation and control of systems with large numbers of particles.  However, many-body systems are often difficult to treat exactly and are most often studied using numerical or approximate methods.  The realm of applicability of these methods is limited by the number of degrees of freedom of the system. Two common approximate models are the Bose-Hubbard model  of quantum optics \cite{eva} and closely related to it, the Lipkin-Meshkov-Glick model of nuclear physics \cite{lipkin}, both used to describe the two-body interactions of spin-$J$ systems.

A family of exactly solvable many-body systems was introduced in \cite{fuent1} and studied in greater depth in \cite{fuent2}. These models can be used to describe the physics of a two component BEC where both elastic and mode-exchange inelastic collisions are present.  Mode-exchange collisions are known as general nearest neighbor interactions in the context of multi-well BECs \cite{nearest} and as inelastic collisions in the case of a single condensate consisting of particles in two hyperfine levels \cite{julienne}. The family of models is parameterized by a positive integer $n$, and hence the specific models are called $n$-models.  This terminology is used because the $n$-model contains $1,2, \dots, n-$body interactions.  Analytic solutions can be found for the $n$-model when the strengths of the various particle interactions obey specific constraints.  While the $n$-models are of interest to many-body physics in general, in this paper we will largely be concerned with the $2$-model where only single body interactions and two-body collisions are considered.  Microscopic calculations show that mode-exchange collisions occur in BECs as a result of the interaction of a laser field with the system \cite{julienne}.  The 2-model includes the usual Josephson-type interactions, but also allows the effects of mode-exchange collisions to be studied analytically, and as such provides a more realistic framework for studying two-mode BECs than the canonical Josephson Hamiltonian \cite{corne}.

The analytic solution found in \cite{fuent1} requires that the strengths of the various interactions meet certain constraints; if these constraints are not met the solution is invalid.  While in some experimental situations the rates of inelastic and elastic mode-exchange collisions can be controlled externally \cite{roberts}, this is not always the case.  Moreover, even in if the rate of collisions can be manipulated, the constraints will likely still only be approximately satisfied.  It is thus of interest to extend the solution space of the system to the case in which these constraints are only approximately satisfied.  This naturally leads to the consideration of small parameter perturbations in the model to study its robustness.  Along with perturbations within the model it is of interest to include additional interaction terms as perturbations, such as inelastic collisions resulting in particle loss. Particle loss, often suppressed in experimental settings, is studied theoretically by considering classical rate equations \cite{corne}.  More recently, a quantum treatment of 3-body loss has been given in \cite{quantumloss}. Including external perturbations extends the predictive power of the model. Additional interactions of primary interest include three-body recombination, background collisions, spin exchange and dipole relaxation \cite{corne}. To our knowledge, this is the first study which analyzes particle loss as a perturbative effect in BECs.

Motivated by the above considerations, in this paper we carry out a full perturbative analysis of the solvable BEC 2-model proposed in \cite{fuent1}. We begin by reviewing this model and the solutions derived when the constraint equations on the parameters are satisfied \cite{fuent2}.  We then introduce perturbations to the parameters in the model and analyze the effects on the particle distribution, entanglement and the evolution of the relative number operator.  In later sections we discuss the effects of state degeneracy and the inclusion of a general loss term as a perturbation, illustrating the formalism by studying background collisions and three-body recombination.

\section{A Model for Two-Mode Bose-Einstein Condensates with Mode-exchange Collisions}

The $2$-model studied in \cite{fuent1, fuent2} is governed by the Hamiltonian
\begin{equation}
\begin{array}{lcl}
	\displaystyle H_2& = &A_0 + \omega \left(a^{\dagger}a - b^{\dagger}b \right) +\lambda\left( e^{i\phi}a^{\dagger}b +e^{-i\phi}ab^{\dagger} \right) + \mathcal{U}a^{\dagger}ab^{\dagger}b\\ \\ & & +\Lambda \left(e^{2i\phi}a^{\dagger}a^{\dagger}bb + h.c. \right) +\mu\left\{ e^{i\phi}\left(a^{\dagger}a^{\dagger}ab - a^{\dagger}b^{\dagger}bb \right) +h.c. \right\}  \label{becHam}
\end{array}
\end{equation}
where $h.c.$ denotes the Hermitian conjugate of the preceding term.  The two modes  $a$ and $b$ are independent Bose operators satisfying $[a,a^{\dagger}]=1=[b,b^{\dagger}]$ with the commutators of all remaining pairs vanishing.  The term $\omega \left(a^{\dagger}a - b^{\dagger}b \right)$ is the free energy of $\hat{n}_a=a^{\dagger}a$ particles in the $a$ mode and $\hat{n}_b=b^{\dagger}b$ particles in the $b$ mode, with frequency difference $\omega$ between modes.  This frequency difference arises because the model describes atoms in different hyperfine levels, or alternatively unsymmetric spatially separated condensates \cite{steng}.  A Josephson-type or spin flip interaction is included with strength $\lambda$ and phase $\phi$.  In practice such an interaction is induced by an external field, such as a laser \cite{myatt}.  This interaction may also be interpreted as modeling the tunneling of particles between modes with probability proportional to $\lambda$.  Also included are terms corresponding to number-preserving elastic and mode-exchange collisions, namely those terms containing four bose operators.  The interaction with strength $\mu$ is a single dispersive process.  The interaction with strength $\Lambda$ describes a collision where two particles exchange their mode.  The elastic interaction has strength $\mathcal{U}$ and models the collision of a particle from each mode in which the number of particles in each mode is conserved.  We see that the mode-exchange collisions preserve the total particle number but not the relative particle number.

The Hamiltonian given in eq. (\ref{becHam}) can be efficaciously studied by first introducing the simpler Hamiltonian \cite{fuent1}
\begin{equation}
H_0:= H_{0,2}=A_1\left(a^{\dagger}a-b^{\dagger}b\right) + A_2\left( a^{\dagger}a-b^{\dagger}b \right)^2  \label{simpHam}
\end{equation}
with real constants $A_1$ and $A_2$.  Such a Hamiltonian models a two-mode condensate with energy difference $A_1$ between modes and elastic scattering probability proportional to $A_2$. As is clear from the form of $H_0$ the relative number operator $\hat{m}=a^{\dagger}a-b^{\dagger}b$ is a commuting observable so that the number of particles in each mode is conserved.  In particular, there is no probability of spin-flip or tunneling between modes. We  also note that the total number operator $\hat{N}=a^{\dagger}a+b^{\dagger}b$ commutes with $H_0$.  We may thus take the eigenvalues $N$ and $m$ of $\hat{N}$ and $\hat{m}$ as labels of the eigenstates, for which we write $\vert N,m \rangle$.  For fixed $N\in\mathbb{N}$  the values of $m$ are limited to\footnote{See the appendix for details.} $-N, -N +2 , \dots, N-2$ and $N$.
 The energy of the state $\vert N,m \rangle$ is $E_m= A_1 m +A_2m^2$.

For a certain choice of parameters in $H_2$ the solutions $\vert N, m \rangle$ of $H_0$ can be used to obtain analytic solutions of $H_2$.  To see this, define a two-mode displacement operator by $U\left(\xi \right)= \exp \left(\xi a^{\dagger}b -\xi^* ab^{\dagger} \right)$ with displacement parameter $\xi=\frac{1}{2}\theta e^{i\phi}$.  It is clear that $U\left(\xi\right)$ is unitary.  It can then be shown that if the  parameters in $H_2$ satisfy
\begin{subequations}  \label{coeffId}
	\begin{equation}
		A_0= A_2 \left( N^2 \cos ^2 \theta + N \sin^2 \theta  \right)
	\end{equation}
	\begin{equation}
		\omega=A_1\cos \theta
	\end{equation}
	\begin{equation}
		\lambda=A_1\sin\theta
	\end{equation}
	\begin{equation}
		\mathcal{U}=2A_2\left(1-3\cos^2\theta\right)
	\end{equation}
	\begin{equation}
		\Lambda=A_2\sin^2\theta
	\end{equation}
	\begin{equation}
		\mu=2A_2\cos\theta \sin\theta
	\end{equation}
\end{subequations}
then $H_2=U^{\dagger} H_0 U$, a result shown by computing
\begin{equation}
U^{\dagger}\left(
\begin{array}{c} a\\ b \\  a^{\dagger}  \\ b^{\dagger} \\
\end{array}
\right) U= \left(
\begin{array}{cccc} \cos\frac{1}{2}\theta & e^{i\phi}\sin \frac{1}{2}\theta & 0 & 0\\ -e^{-i\phi}\sin\frac{1}{2}\theta & \cos \frac{1}{2}\theta & 0 & 0 \\  0 & 0 &\cos\frac{1}{2}\theta & e^{-i\phi}\sin \frac{1}{2}\theta  \\ 0& 0& -e^{i\phi}\sin\frac{1}{2}\theta & \cos \frac{1}{2}\theta \\
\end{array}
\right) \left(
\begin{array}{c} a\\ b \\  a^{\dagger}  \\ b^{\dagger} \\
\end{array}  \label{trans}
\right).
\end{equation}
Observe that $
U\left(
\begin{array}{cccc} a & b &  a^{\dagger}  & b^{\dagger} \\
\end{array}
\right)^T U^{\dagger}$
may be computed by setting $\theta\mapsto -\theta$ in eq. (\ref{trans}).

Since the eigenvectors of $H_0$ are of the form $\vert N,m\rangle$ the eigenvectors of $H_2$ when satisfying eqs. (\ref{coeffId}) are simply $U^{\dagger}\vert N,m\rangle$ with energy $E_m=A_1 m +A_1m^2$.  An extensive analysis and discussion of these solutions is included in \cite{fuent2}.

For fixed $N\in\mathbb{N}$ the ground state of $H_2$, labeled as $\vert \psi_{m_0} \rangle = U^{\dagger} \vert N, m_0 \rangle$, is found by minimizing the energy $E_m=A_1 m +A_2 m^2$ with respect to $m$.  First assume $A_2 <0$.  In this case, if $A_1<0$ ($A_1 >0$) the minimum occurs at $m_0=N$ ($m_0=-N$).  Secondly, say $A_2 >0$.  If $\left| \frac{A_1}{2A_2} \right| \leq N$, then $m_0$ is the closest allowable\footnote{The integer $m$ can only take the values $-N, -N+2,\dots,N-2$ and $N$; see the appendix for details.} integer to $ -\frac{A_1}{2A_2} $; otherwise $m_0$ is the closest allowable integer to $-\hbox{sgn} (A_1) N$.  In particular, a choice of either $m_0$ or $\frac{A_1}{A_2}$ determines the other.\footnote{In calculations we will assume $A_2=1$, so that a choice of $m_0$ determines $A_1$.}

It is important to recall that, in the case of a double-well BEC, the two-mode approximation must be satisfied \cite{milbu}. This means that collisions taking place in the region where the wavefunctions overlap must be less probable than collisions between particles belonging to the same well. It is possible to find an exact analytic solution to the model in this case by considering $\theta<<1$. It is interesting to observe that it is not possible to find analytical solutions to this model for an exactly symmetric double-well and satisfy, simultaneously, the two mode approximation. Fortunately, in the case of slightly asymmetric double wells (which corresponds to a more realistic situation) it is possible to find exact analytical solutions in the two-mode approximation \cite{fuent2} .

The most general $n$-model Hamiltonian discussed in \cite{fuent1} has as its Hamiltonian $H_n= U^{\dagger} H_{0,n} U$ where $H_{0,n} = \sum_{k=1}^n A_k \hat{m}^k$.  The eigenvectors of $H_n$ are again $U^{\dagger} \vert N, m \rangle$ with the same energy $E_m$.  The same process of matching coefficients  in the expansion of $U^{\dagger} H_{0,n} U$ that was used to obtain eqs. (\ref{coeffId}) can be carried out so that $U^{\dagger} \vert N, m \rangle$ is the exact solution of a Hamiltonian containing  up to $n$-body interactions.  In the general case, there are $n+2$ free variables (the $n$ $A_i$ as well as $\theta$ and $\phi$) in $U H_{0,n}U^{\dagger}$, whereas the number of terms in a general Hamiltonian describing such interactions is $(n+2)(n+4)(n+6)/48$ as shown in Appendix B.  So while the above method continues to give analytic solutions for all $n$, its range of applicability decreases as $n$ grows.  It should also be noted that because $H_{0,n}$ contains only terms with the same number of creation and annihilation Bose operators, this formalism cannot be used to study interactions involving an odd number of Bose operators, such as the interaction $a^{\dagger}ba$ where one particle is lost.

We remark that all of the perturbative analysis carried out below is valid not only for the $2$-model, but also for the more general $n$-models.  This is so because the solutions for all $n$-models are $U^{\dagger}\vert N,m \rangle$.  We need only change the interpretation of the perturbative analysis in each case.

\section{Effects of Parameter Perturbations in the $2$-Model}

While satisfying eqs. (\ref{coeffId}) is sufficient to obtain an analytic solution of $H_2$, the method presented above fails to produce solutions if the parameters in $H_2$ deviate even slightly from these constraints.  Thus, in order to compare this model to experimental results we would like to study the solutions to $H_2$ when the constraints are only approximately satisfied. We proceed with these perturbation calculations in the section below.

We begin by perturbing each of the coupling constants in $H_2$ away from the conditions given by eqs. (\ref{coeffId}), i.e. perturbations of the form $\omega \mapsto \omega + \delta_{\omega}$ where $\omega$ satisfies eqs. (\ref{coeffId}) and $\delta_{\omega}$ is small.    We omit the study of perturbations $A_0 \mapsto A_0 + \delta_{A_0}$ since this simply leads to a shift in the energy of the state by $\delta_{A_0}$.  In what follows, we assume that all eigenstates are non-degenerate; the degenerate case is discussed in Section 7.  Throughout the paper, given an operator $\mathcal{O}$ we define $\tilde{\mathcal{O}} := U\mathcal{O}U^{\dagger}$.  To perform many of the calculations that follow we have made use of a coordinate transformation between the $\left\{ N,m \right\}$ basis and the $\left\{ n_a, n_b \right\}$ basis, $\vert N,m \rangle \mapsto \left| n_a= \frac{N+m}{2}, n_b=\frac{N-m}{2} \right>$ where $n_a$ is a eigenvalue of $\hat{n}_a = a^{\dagger}a$ and similarly for $n_b$.

\subsection{$\omega$ Perturbation}
A change $\omega \mapsto \omega + \delta_{\omega}$ results in the perturbation $H_{\omega}= \delta_{ \omega} \left(a^{\dagger}a - b^{\dagger}b \right)$.  A calculation then shows that the non-vanishing matrix elements of $\tilde{H}_{\omega}$ are
\begin{equation}
\langle N,m \vert \tilde{H}_{\omega} \vert N,m \rangle = \delta_{\omega} m\cos\theta
\end{equation}
and
\begin{equation}
\langle N,m \pm 2\vert \tilde{H}_{\omega} \vert N,m \rangle = -\frac{\delta_{\omega}}{2}e^{\pm i\phi}\sin\theta \sqrt{N(N+2) -m(m\pm 2) }.
\end{equation}

\subsection{$\lambda$ Perturbation}
A change $\lambda \mapsto \lambda + \delta_{\lambda}$ results in the perturbation $H_{\lambda}= \delta_{\lambda} \left( e^{i\phi}a^{\dagger}b +e^{-i\phi}ab^{\dagger} \right)$ and we find that
\begin{equation}
\langle N,m \vert \tilde{H}_{\lambda} \vert N,m \rangle = \delta_{\lambda} m\sin\theta
\end{equation}
and
\begin{equation}
\langle N,m \pm 2\vert \tilde{H}_{\lambda} \vert N,m \rangle = \frac{\delta_{\lambda}}{2}\left( e^{\pm i \phi}\cos^2 \frac{1}{2}\theta - e^{\pm 2i\phi} \sin^2\frac{1}{2}\theta \right) \sqrt{N(N+2) -m(m\pm 2) }.
\end{equation}

\subsection{$\mathcal{U}$ Perturbation}
A change $\mathcal{U} \mapsto \mathcal{U} + \delta_{\mathcal{U}}$ results in the perturbation $H_{\mathcal{U}}= \delta_{\mathcal{U}} a^{\dagger}ab^{\dagger}b$ and we find that
\begin{equation}
\langle N,m \vert \tilde{H}_{\mathcal{U}} \vert N,m \rangle = \frac{\delta_{\mathcal{U}}}{4}\sin^2\theta \left(\frac{N^2 +m^2}{2} +N \right) + \delta_{\mathcal{U}}\cos^2\theta \left(\frac{N^2-m^2}{4} \right),
\end{equation}

\begin{equation}
\langle N,m \pm 2\vert \tilde{H}_{\mathcal{U}} \vert N,m \rangle = \frac{\delta_{\mathcal{U}}}{8}e^{\pm i \phi}\sin^2\theta \sqrt{\left(N \pm m\right) \left(N \pm m + 2 \right) } \left( m \pm 1\right)
\end{equation}
and
\begin{equation}
\langle N,m \pm 4\vert \tilde{H}_{\mathcal{U}} \vert N,m \rangle = -\frac{\delta_{\mathcal{U}}}{16}e^{\pm 2i \phi}\sin^2\theta \sqrt{\left[N(N+2) -m(m\pm 2) \right] \left[ N(N+2) -(m \pm 2)(m\pm 4) \right] }.
\end{equation}

\subsection{$\Lambda$ Perturbation}
A change $\Lambda \mapsto \Lambda + \delta_{\Lambda}$ results in the perturbation $H_{\Lambda}= \delta_{\Lambda} \left( e^{2i\phi}a^{\dagger}a^{\dagger}bb + h.c. \right)$ and we find that
\begin{equation}
\langle m \vert \tilde{H}_{\Lambda} \vert m \rangle = \frac{\delta_{\Lambda}}{2} \sin^2 \theta \left(\frac{3m^2-N^2}{2}-N\right),
\end{equation}

\begin{equation}
\langle m \pm 2 \vert \tilde{H}_{\Lambda} \vert m \rangle = \frac{\delta_{\Lambda}}{4} e^{\mp i\phi} \sin 2\theta \sqrt{ \left( N\mp m \right) \left(N\pm m+2\right)}\left(m\pm 1\right),
\end{equation}
and
\begin{equation}
\langle m \pm 4 \vert \tilde{H}_{\Lambda} \vert m \rangle = \frac{\delta_{\Lambda}}{8} e^{\mp 2i\phi} \left( 1 + \cos^2 \theta \right) \sqrt{ \left(N \pm m+2\right)\left(N\pm m+4\right)\left(N\mp m\right)\left(N \mp m-2\right)}.
\end{equation}

\subsection{$\mu$ Perturbation}
A change $\mu \mapsto \mu + \delta_{\mu}$ results in the perturbation $H_{\mu}= \delta_{\mu} \left\{ e^{i\phi}\left( a^{\dagger}a^{\dagger}ab - a^{\dagger}b^{\dagger}bb \right) +h.c. \right\}$ and we find that
\begin{equation}
\langle m \vert \tilde{H}_{\mu} \vert m \rangle = \frac{\delta_{\mu}}{2}\sin 2\theta \left(\frac{3m^2-N^2}{2}-N\right),
\end{equation}

\begin{equation}
\langle m \pm 2 \vert \tilde{H}_{\mu} \vert m \rangle = \frac{\delta_{\mu}}{2} e^{\mp i\phi} \cos 2\theta  \sqrt{ \left(N\mp m\right)\left(N\pm m+2\right)}\left(m\pm 1\right)
\end{equation}
and
\begin{equation}
\langle m \pm 4 \vert \tilde{H}_{\mu} \vert m \rangle = -\frac{\delta_{\mu}}{8} e^{\mp 2i\phi}  \sin 2\theta  \sqrt{ \left(N \pm m+2\right)\left(N\pm m+4\right)\left(N\mp m\right)\left(N \mp m-2\right)}.
\end{equation}

In each of the cases above the perturbation matrix elements simplify significantly in the limiting cases $\theta=0, \pi$, in which the Josephson coupling vanishes ($\lambda =0$), and $\theta=\frac{\pi}{2}$ in which the potential well is symmetric ($\omega =0$).  We elaborate below on the effect of $\theta$ on the particle distributions.

\section{Perturbative Effects on Particle Distribution}

In the unperturbed case we can use the analytic solution $\vert \psi_{m_0} \rangle = U^{\dagger} \vert N, m_0 \rangle$ for parameters satisfying eqs. (\ref{coeffId}) to find an explicit expression for the particle distribution:
\begin{equation}
P=  \vert \langle N,m \vert U^{\dagger} \vert N,m_0 \rangle \vert^2.  \label{probDens}
\end{equation}
Using the homomorphism described in the appendix  to relate the Schwinger $\mathfrak{su}(2)$ representation to the angular momentum representation we  write $P = \vert d_{m,m_0}^N \vert^2$ where\footnote{$d_{m,m_0}^N$ are the Wigner rotation matrix elements under the effect of the Lie algebra homomorphism discussed in the appendix. To see why these appear, observe that $U^{\dagger}=e^{-i\theta\mathbf{n} \cdot \mathbf{J}}$ where $\mathbf{n}=\left(\sin\phi, \cos\phi ,0 \right)$ and $\mathbf{J}=\left( \hat{J}_x, \hat{J}_y, \hat{J}_z \right)$ is the total angular momentum vector operator.  That is, $U^{\dagger}$, and hence $U$, are rotations of the algebra generated by $a$ and $b$.} \cite{gottf}
\begin{equation}
d_{m,m_0}^N=(-1)^{\frac{1}{2}\left(m-m_0\right)} \sqrt{\frac{\left(\frac{N+m}{2}\right)!\left(\frac{N-m}{2}\right)!}{\left(\frac{N+m_0}{2}\right)!\left(\frac{N-m_0}{2}\right)!}} \left(\cos\frac{\theta}{2}\right)^{N} \Sigma
\end{equation}
and
\begin{equation}
\Sigma= \sum_{k=k_-}^{k_+} (-1)^k\left(
\begin{array}{c} \frac{1}{2}\left(N+m_0\right)\\ k \\
\end{array}
\right) \left(
\begin{array}{c} \frac{1}{2}\left( N-m_0 \right)\\ \frac{1}{2}\left(N-m\right)-k \\
\end{array}
\right)\left(\tan \frac{\theta}{2} \right)^{\frac{1}{2}\left(m-m_0\right)+2k}.
\end{equation}
The integers $k_{\pm}$ are chosen so that the arguments of the combinatorial symbols are non-negative; explicitly, $k_-=\hbox{max}\left\{ 0,\frac{m_0-m}{2} \right\}$ and $k_+=\hbox{min} \left\{ \frac{N-m}{2}, \frac{N+m_0}{2} \right\}$.   We plot in Figure \ref{partDistPlot} the unperturbed particle distribution for $N=1000$, $\theta=1$ and $m_0=1000,\;998$ and $996$.  Observe that the particle distribution is independent of phase $\phi$.  We will see that this does not hold in the perturbed case.  As $\theta$ varies the distributions shift along the $m$-axes; for $\theta$ small the maxima shift toward $m=N$, and move toward $m=-N$ as $\theta$ grows.

The canonical $2$-mode BEC predicts (under certain circumstances) that the ground state solution is a superposition of two peaked distributions.  From Figure \ref{partDistPlot} we see that for $m_0 < N-2$ the ground state is a superposition of more than two distributions, an effect that is due to the mode-exchange collisions not included in the canonical model \cite{fuent2}.

\begin{figure}[t]
\psfrag{P}{$P$}
\psfrag{La}{$m$}
\psfrag{1000}{\textit{a)}$m_0=1000$}
\psfrag{998}{\textit{b)}$m_0=998$}
\psfrag{996}{\textit{c)}$m_0=996$}
\centering
\includegraphics[width=1.6in, height=1.6in]{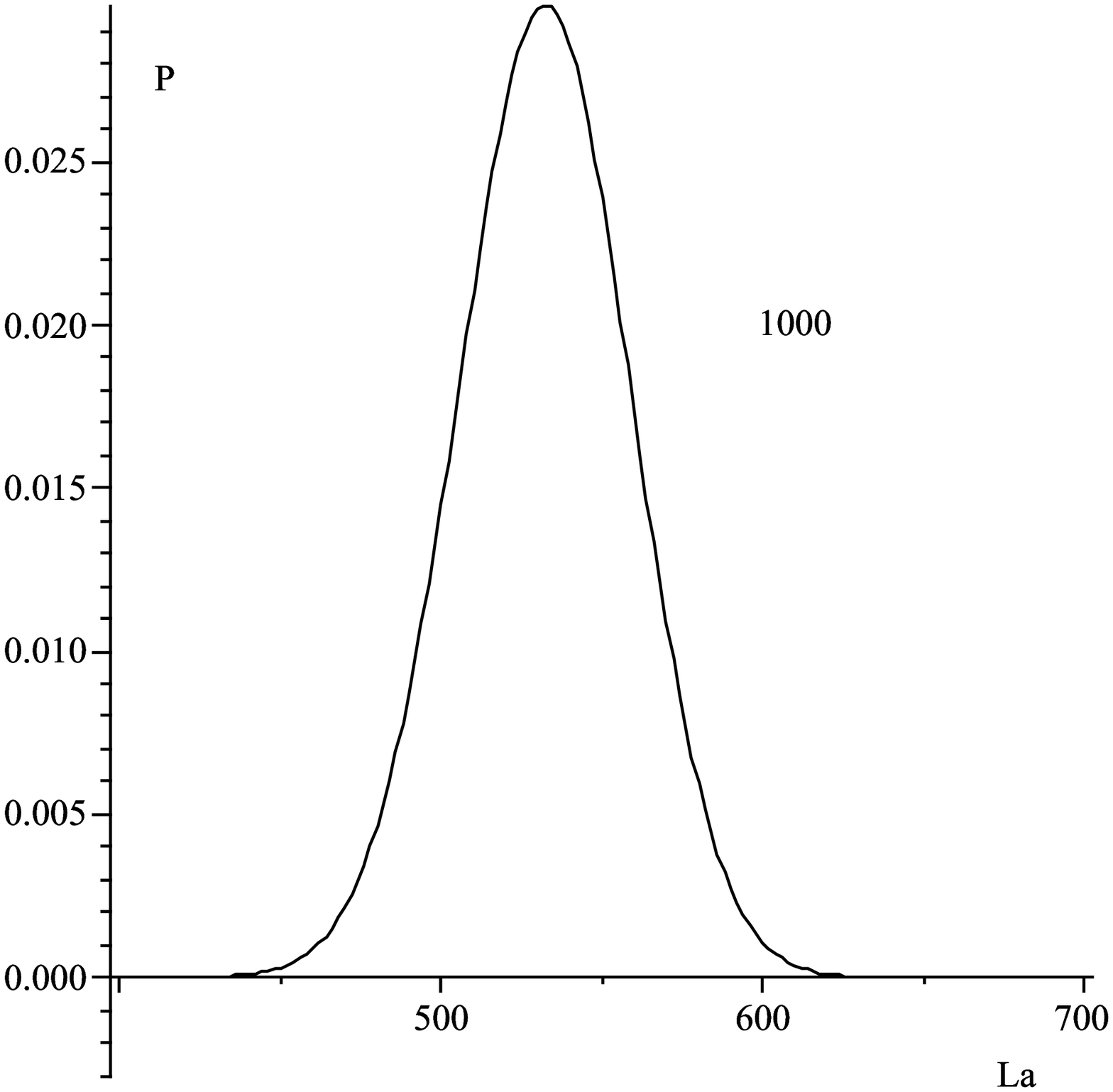}
\hspace{0.65in}
\includegraphics[width=1.6in, height=1.6in]{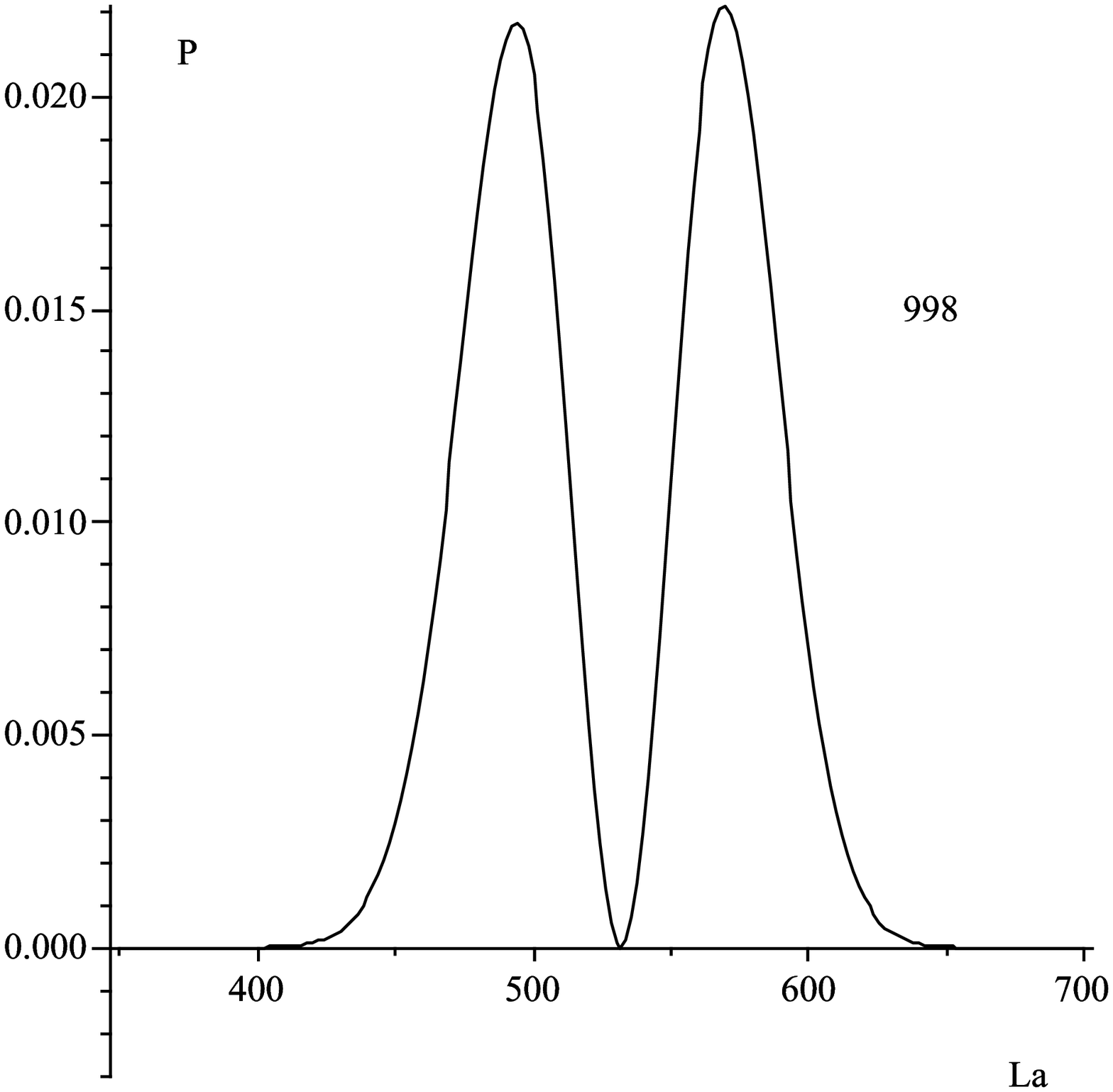}
\hspace{0.65in}
\includegraphics[width=1.6in, height=1.6in]{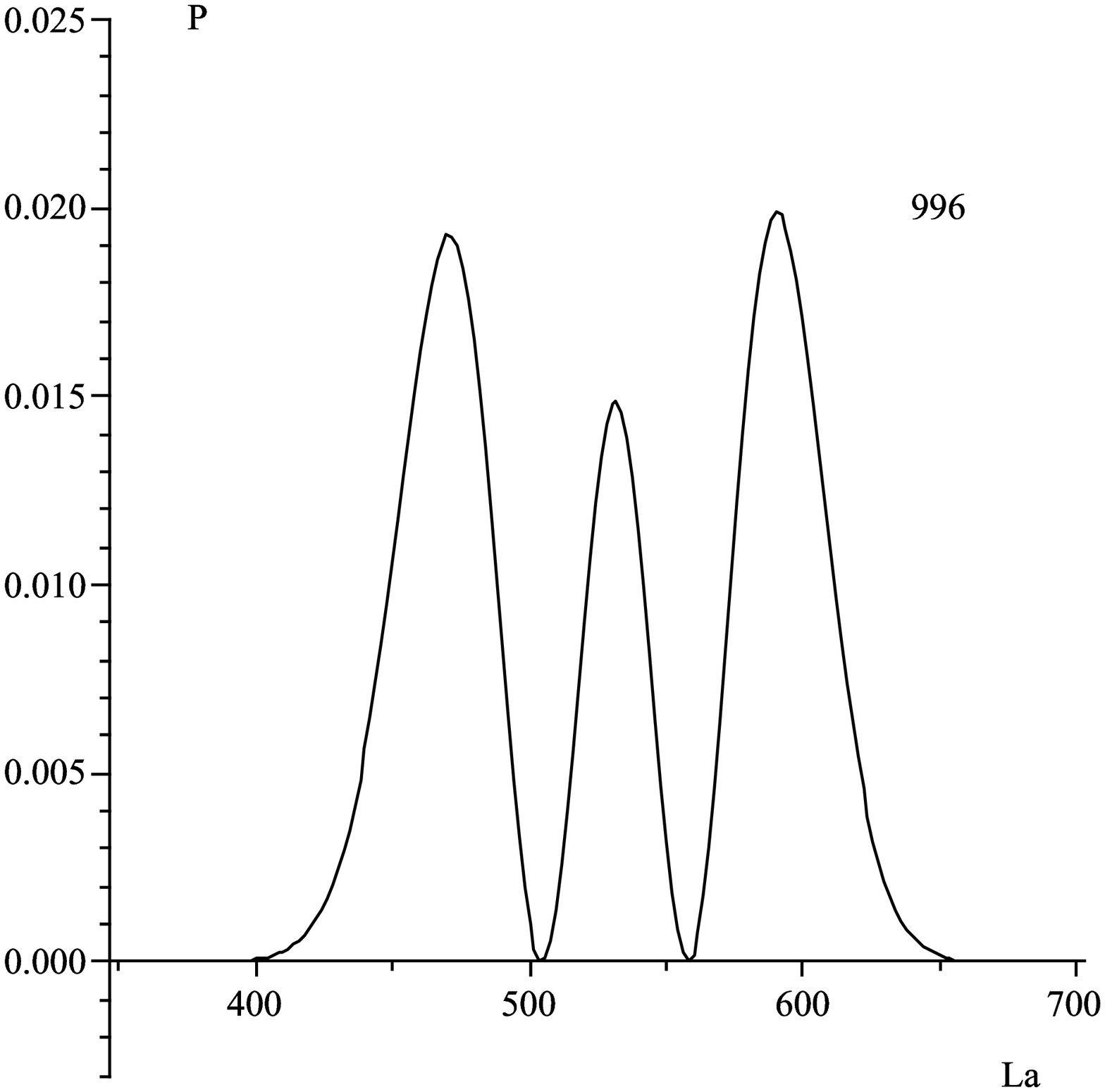}
\caption{The figures show the unperturbed particle distributions $P$ given by eq. (\ref{probDens}) for $N=1000$, $\theta=1$ and \textit{a)} $m_0=1000$, \textit{b)} $m_0=998$ and \textit{c)} $m_0=996$.}
\label{partDistPlot}
\end{figure}

In order to include perturbative effects in the particle distribution we replace the zeroth order ground state wave function $U^{\dagger}\vert N , m_0 \rangle$ in eq. (\ref{probDens}) with that including first order corrections,
\[
U^{\dagger}\vert N,m_0 \rangle^{(0+1)} = U^{\dagger} \left(\vert N, m_0 \rangle + \vert N,m_0 \rangle^{(1)} \right).
\]
Recall that in the non-degenerate case a perturbation $H^{\prime}$ induces a first order wave function correction given by
\begin{equation}
\vert N, m_0 \rangle^{(1)}= \sum_{m\neq m_0} a_{m_0,m} \vert N, m \rangle :=  \sum_{m \neq m_0} \frac{\langle N, m \vert \tilde{H}^{\prime} \vert N,m_0 \rangle}{E_{m_0} - E_m} \vert N,m \rangle    \label{nonDegPert}
\end{equation}
where the superscripts indicate to what order in the perturbation the term corresponds; when no superscript is present the result is to zeroth order. With this, we find that to first order the particle distribution is given by\footnote{Throughout the rest of the paper, where appropriate, we write $\displaystyle\sum_{n=-N}^{N}$ with the understanding that we actually mean $\displaystyle\sum_{n\in \left\{ -N, -N+2, \dots, N-2, N \right\}}$. Summations of this form arise because of the restriction on the eigenvalues of the relative number operator.}
\begin{equation}
P^{(0+1)} = \vert d_{m,m_0}^N \vert^2 +2d_{m,m_0}^N \sum_{n=-N}^N  \hbox{Re}\left( a_{m_0,n} \right)d_{m,n}^N .  \label{pertPD}
\end{equation}
We show in Figures \ref{partDistOmega} - \ref{partDistMu} the particle distributions under each of the five parameter perturbations considered above.  Note that in certain cases $P^{(0+1)}$ may be negative because the first order term in eq. (\ref{pertPD}) depends linearly on the perturbation coefficients $a_{m,n}$; this of course is just a reflection that only first order corrections have been taken into account.

We see from eq. (\ref{pertPD}) and the fact that $a_{m,n}$ is proportional to the sign of the perturbation that each of the perturbations can be used to either enhance or diminish the particle distributions in the vertical direction for a given $m$. That is, given a fixed relative number $m$, by choosing the sign of the perturbation appropriately we can increase or decrease the probability of the system being in the state $\vert N, m \rangle$.   We also see that  as $\vert m_0 \vert$ decreases the number of local maxima of the particle distribution increases, and the asymmetry of the perturbations becomes more significant, vertically diminishing probability amplitudes on one side of the central maxima while vertically stretching the amplitudes on the opposite side. The figures also show that the system is much less sensitive to perturbations in $\omega$ and $\lambda$ than it is to perturbations in the collision coupling constants; in the figures the perturbations of $\omega$ and $\lambda$ are between $7$ and $100$ times greater than those in the collision perturbations ($\Lambda, \mu$ and $\mathcal{U}$), while the corrections in all of the cases are of the same scale. Moreover, out of the collision parameters the particle distribution is most sensitive to perturbations in coupling constants of the \textit{mode-exchange} collisions ($\mu$ and $\Lambda$).  It is thus beneficial to maximize the matching of the mode-exchange collision constraints  (\ref{coeffId}) through redefinition of parameters, allowing the perturbation to have a larger effect on the $\omega$, $\lambda$ and $\mathcal{U}$ terms.

The effect on the particle distributions of varying $\theta$ is to shift the position of the central maxima on the $m$-axis: for $\theta=\frac{\pi}{2}$, corresponding to the case in which the Josephson-type interaction is maximal ($\lambda=A_1$) and the energy of each mode is equal ($\omega=0$), the distribution is centered around $m=0$, while for $\theta = 0$ ($\theta=\pi$), in which the Josephson-type interaction vanishes ($\lambda=0$) and the energy difference of the modes is maximal ($\vert \omega \vert= A_1$) it is centered $m=N$ ($m=-N$).  In each of the cases, despite being centered around different values of $m$ the perturbations display the same qualitative behaviour for any choice of $\theta$.  Throughout the paper we have chosen $\theta=1$ in the figures as this lies between the extremes of $\theta=0$ and $\theta = \frac{\pi}{2}$. As $\theta$ varies the effects of the perturbations vary in size, but show the same characteristics.  For example, as $\theta$ decreases from $1$ to $0$ (but with $\delta$ fixed), the perturbations have a larger effect on the particle distributions, which indicates that for small values of $\theta$, we must be more careful about the size of the chosen $\delta$.  This can be explained as follows.  In the perturbed case the area under the particle distributions sums to unity only to first order in the perturbation.  This sum is also a function of $\theta$ (since the perturbed terms are), unlike in the unperturbed case.  We have found that as $\theta$ decreases the sum of probabilities decreases too, which explains why there is a difference in the particle distributions when we keep the size of the perturbations $\delta$ fixed.  If we decrease both $\theta$ and $\delta$, then we find that the particle distributions essentially only shift along the $m$-axis, but not in their general shape.

\begin{figure}[t]
\psfrag{P}{$P$}
\psfrag{om}{$m$}
\psfrag{1000}{\textit{a)}$m_0=1000$}
\psfrag{998}{\textit{b)}$m_0=998$}
\psfrag{996}{\textit{c)}$m_0=996$}
\centering
\includegraphics[width=1.6in, height=1.6in]{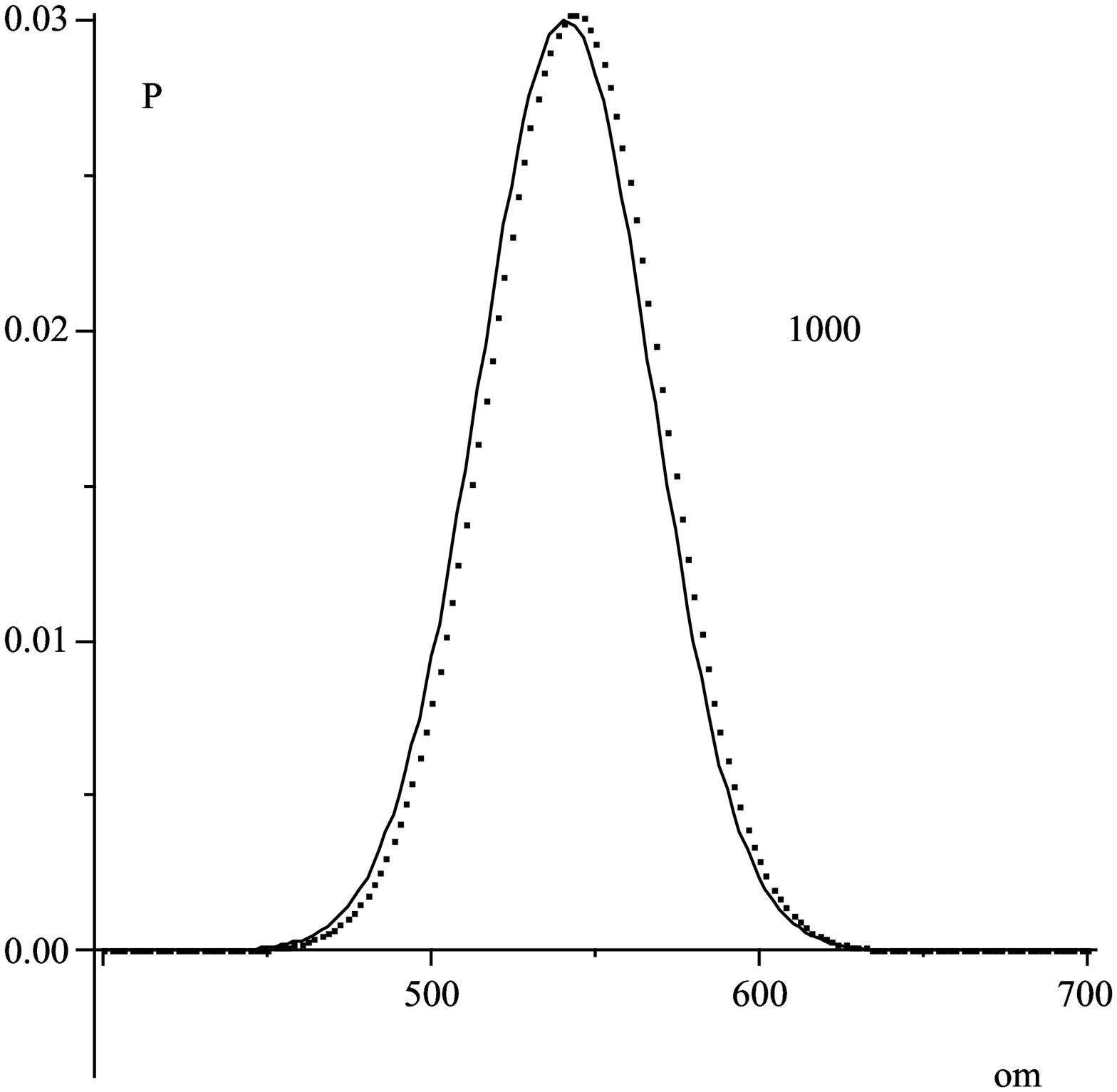}
\hspace{0.65in}
\includegraphics[width=1.6in, height=1.6in]{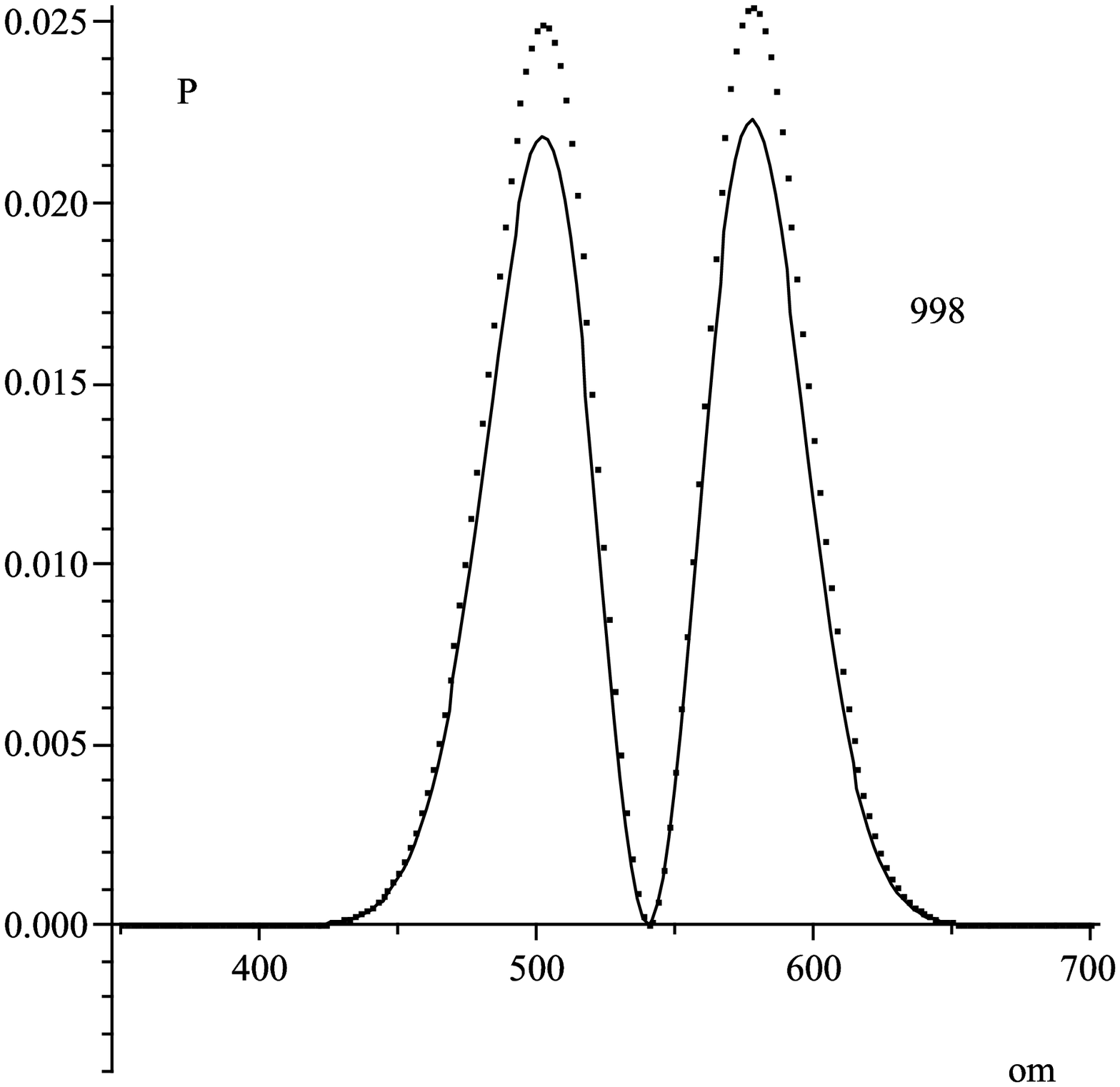}
\hspace{0.65in}
\includegraphics[width=1.6in, height=1.6in]{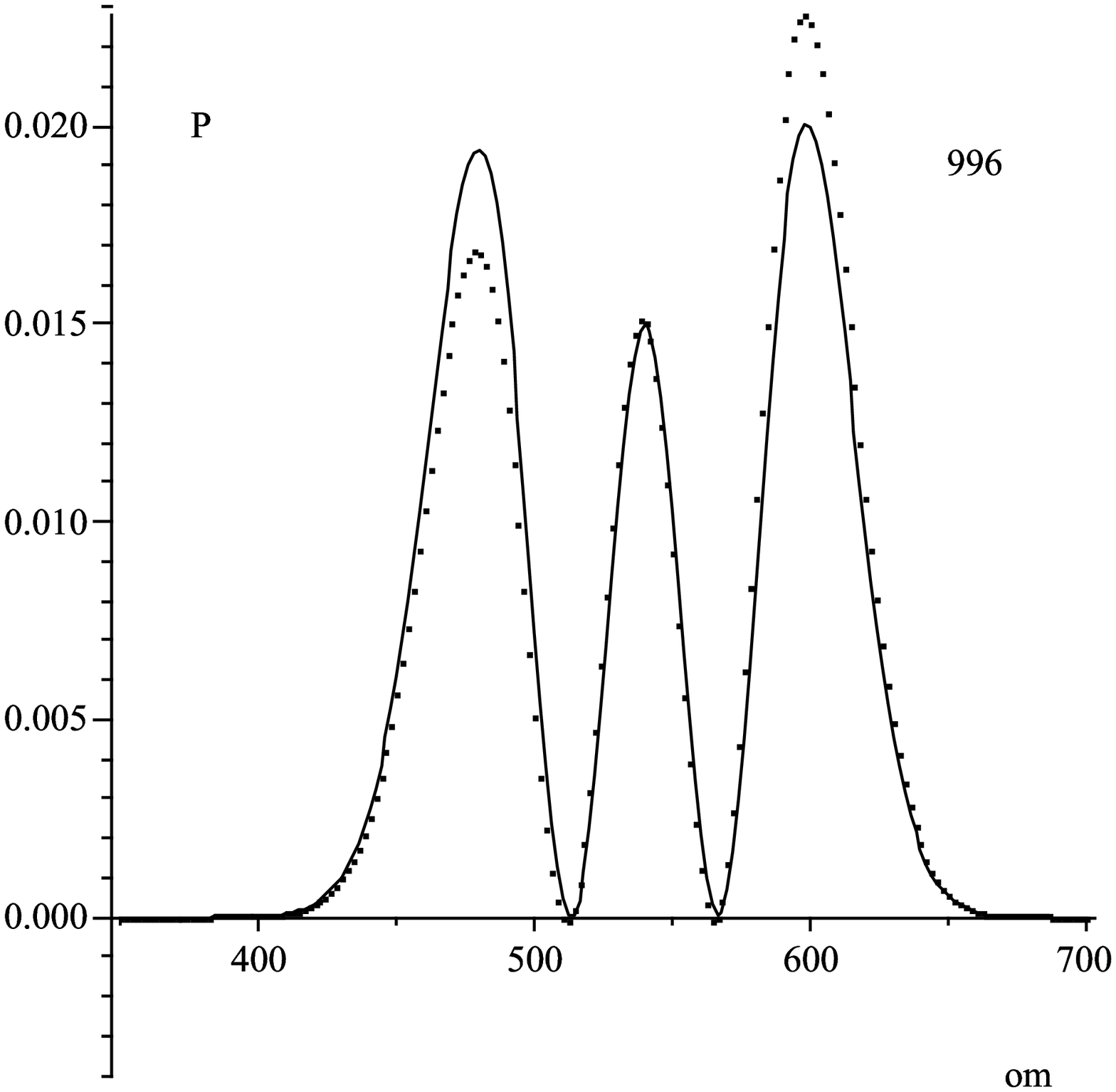}
\caption{The figures show the unperturbed (solid) and perturbed (dot) particle distributions $P$ given by eqs. (\ref{probDens}) and (\ref{pertPD}) for $N=1000$, $\theta=1$, $\delta_{\omega} =15$ and \textit{a)} $m_0 =1000$, \textit{b)} $m_0=998$ and \textit{c)} $m_0=996$.}
\label{partDistOmega}
\end{figure}

\begin{figure}[t]
\psfrag{P}{$P$}
\psfrag{la}{$m$}
\psfrag{1000}{\textit{a)}$m_0=1000$}
\psfrag{998}{\textit{b)}$m_0=998$}
\psfrag{996}{\textit{c)}$m_0=996$}
\centering
\includegraphics[width=1.6in, height=1.6in]{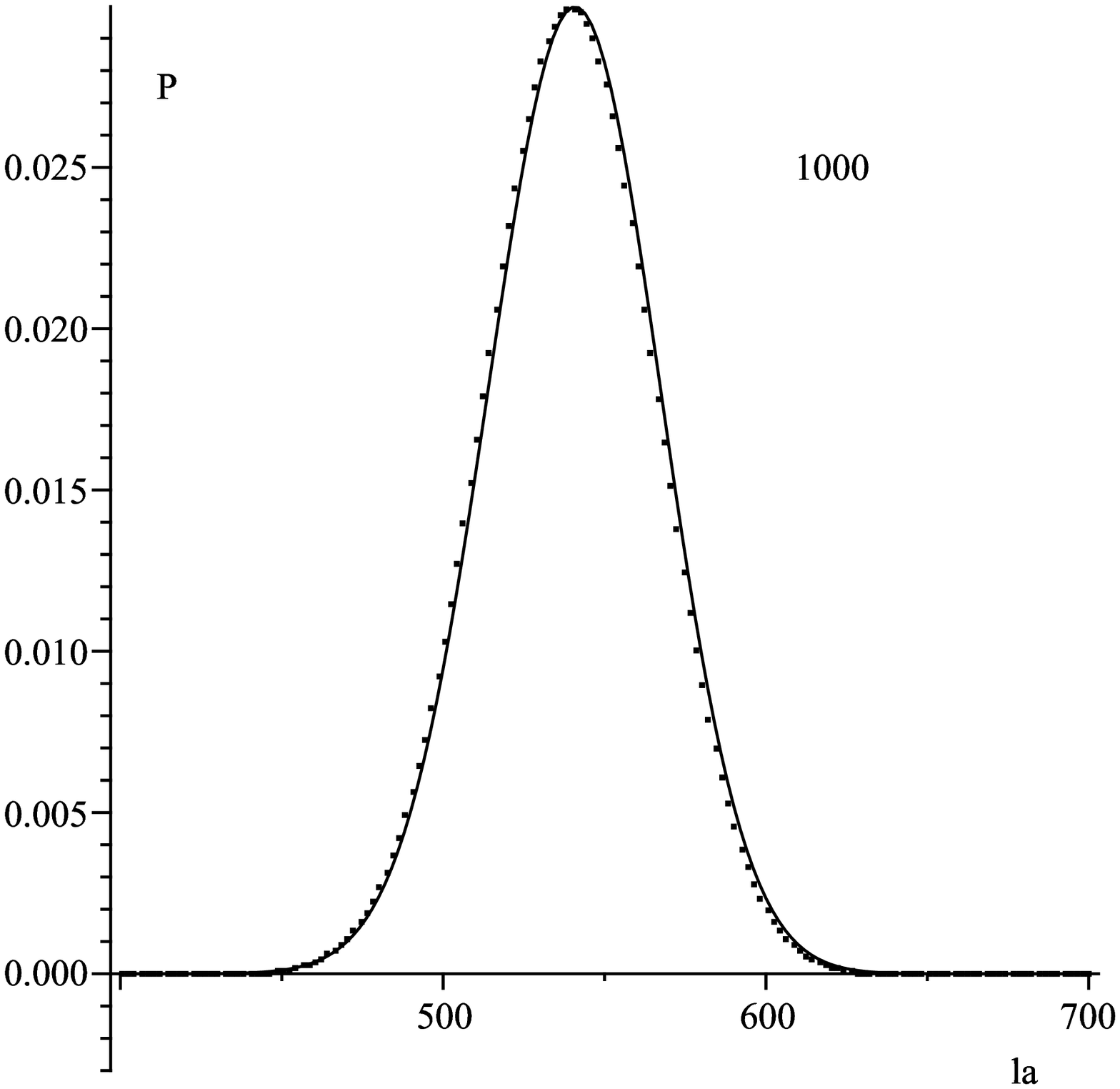}
\hspace{0.65in}
\includegraphics[width=1.6in, height=1.6in]{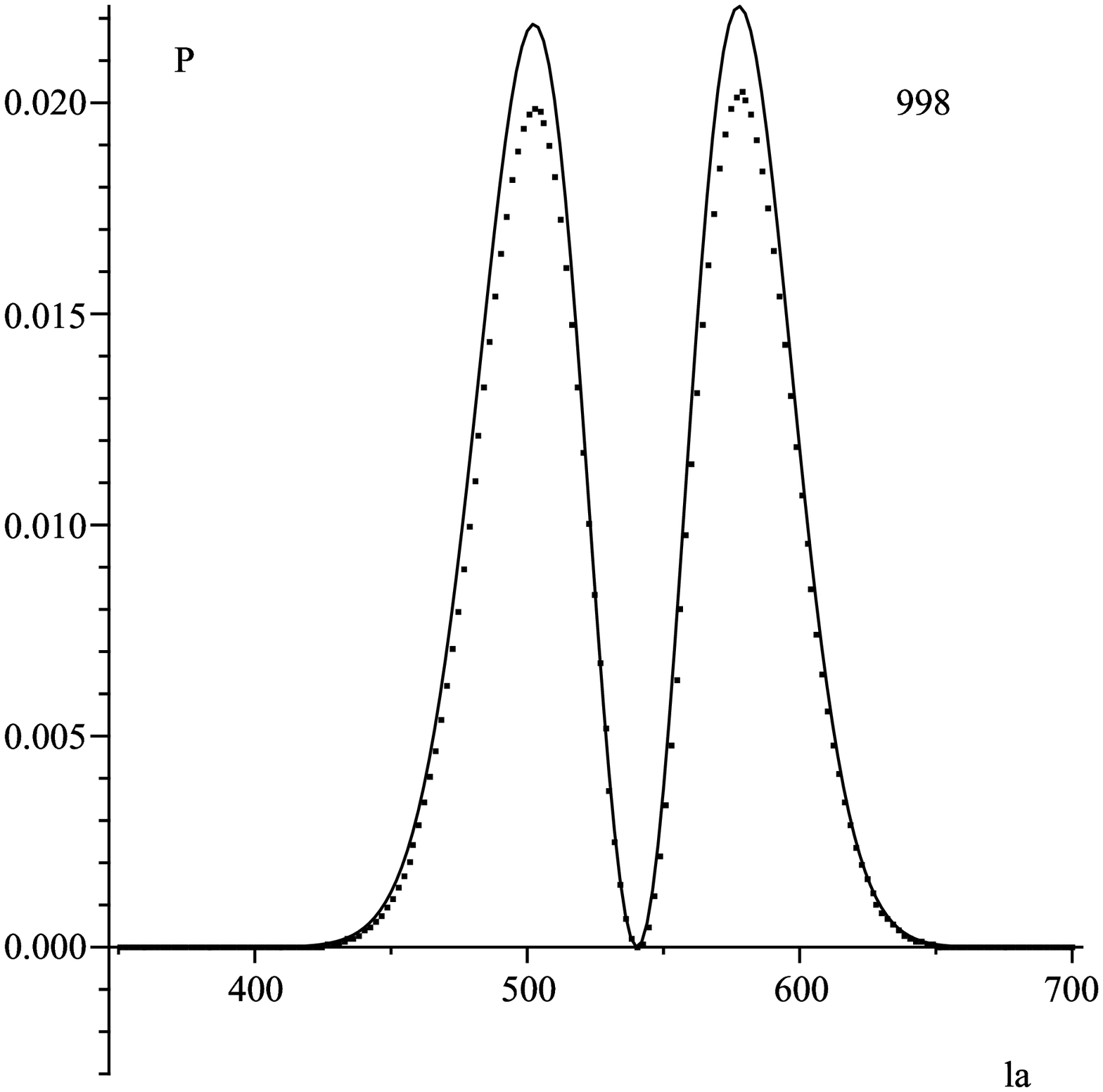}
\hspace{0.65in}
\includegraphics[width=1.6in, height=1.6in]{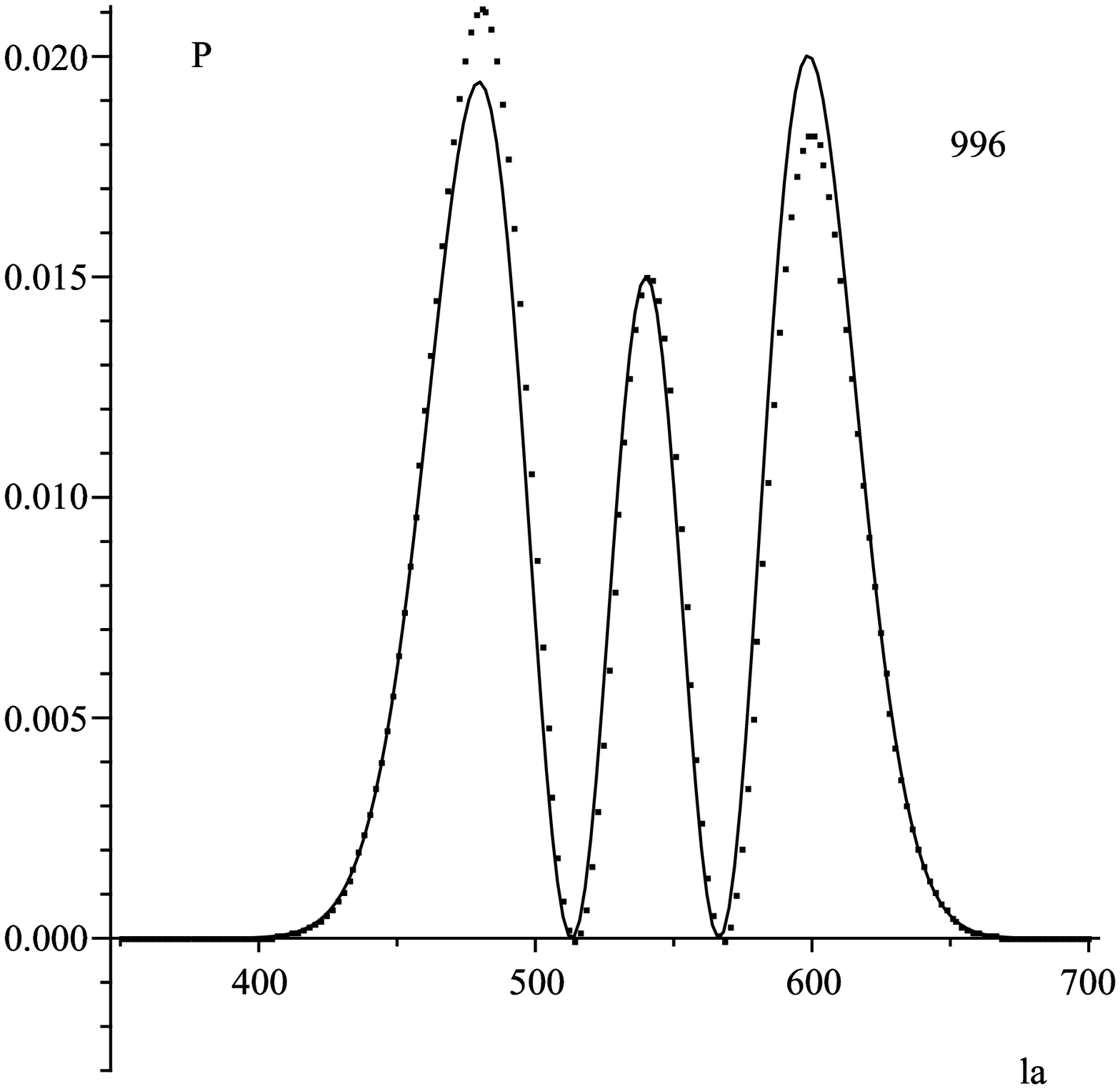}
\caption{The figures show the unperturbed (solid) and perturbed (dot) particle distributions $P$ given by eqs. (\ref{probDens}) and (\ref{pertPD}) for $N=1000$, $\theta=1$, $\delta_{\lambda} =15$ and \textit{a)} $m_0 =1000$, \textit{b)} $m_0=998$ and \textit{c)} $m_0=996$.}
\label{partDistlambda}
\end{figure}

\begin{figure}[t]
\psfrag{P}{$P$}
\psfrag{U}{$m$}
\psfrag{1000}{\textit{a)}$m_0=1000$}
\psfrag{998}{\textit{b)}$m_0=998$}
\psfrag{996}{\textit{c)}$m_0=996$}
\centering
\includegraphics[width=1.6in, height=1.6in]{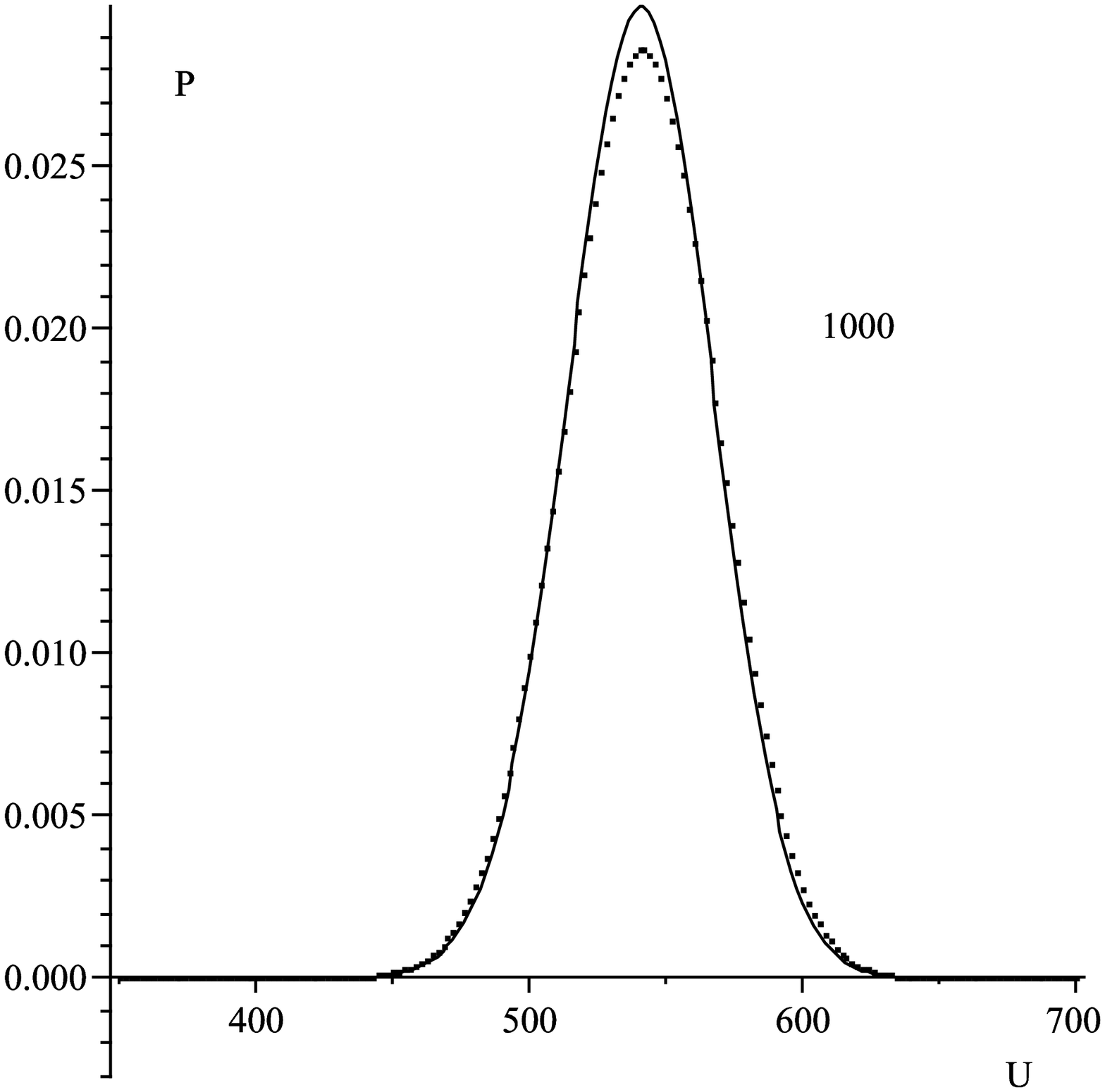}
\hspace{0.65in}
\includegraphics[width=1.6in, height=1.6in]{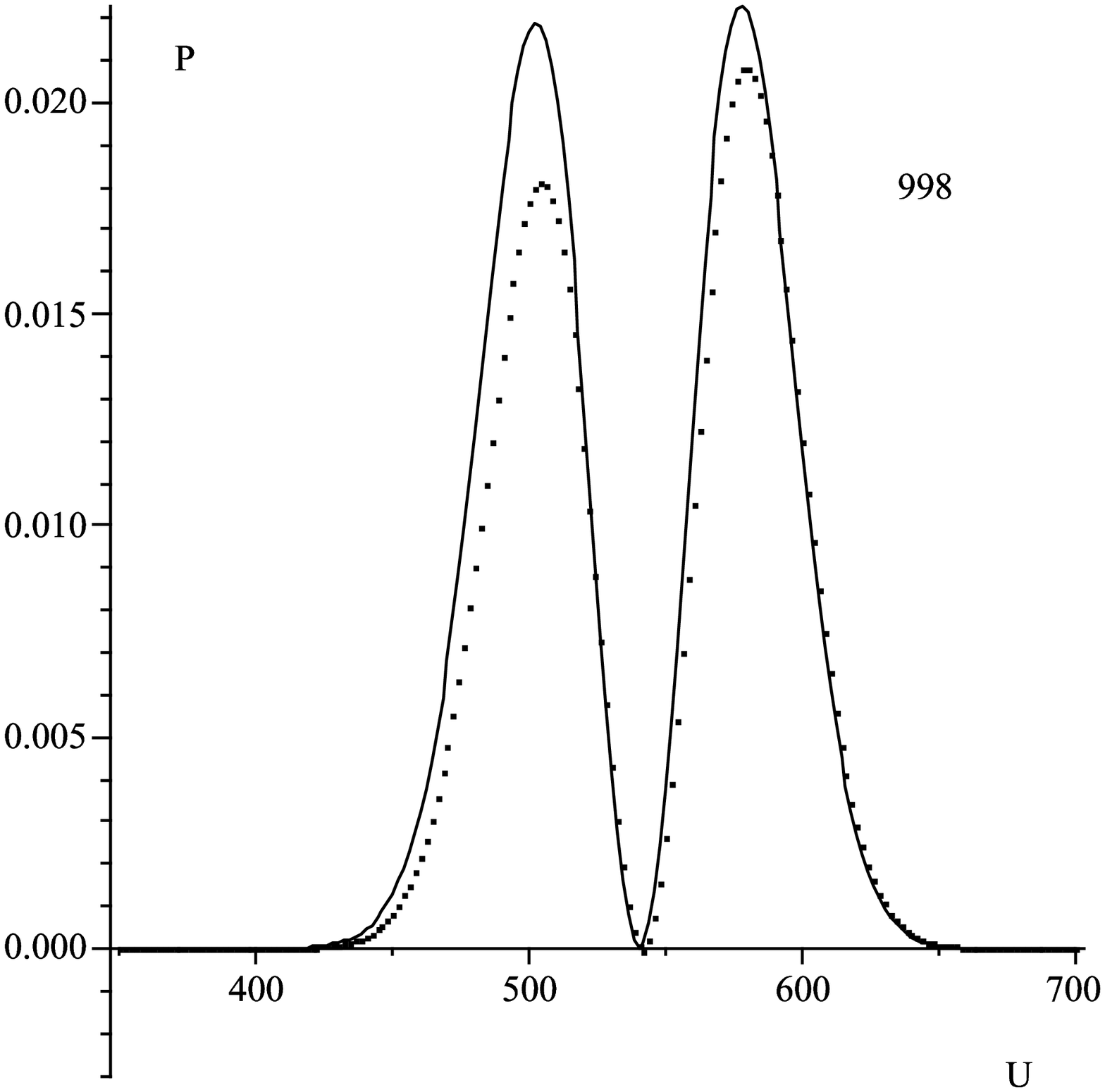}
\hspace{0.65in}
\includegraphics[width=1.6in, height=1.6in]{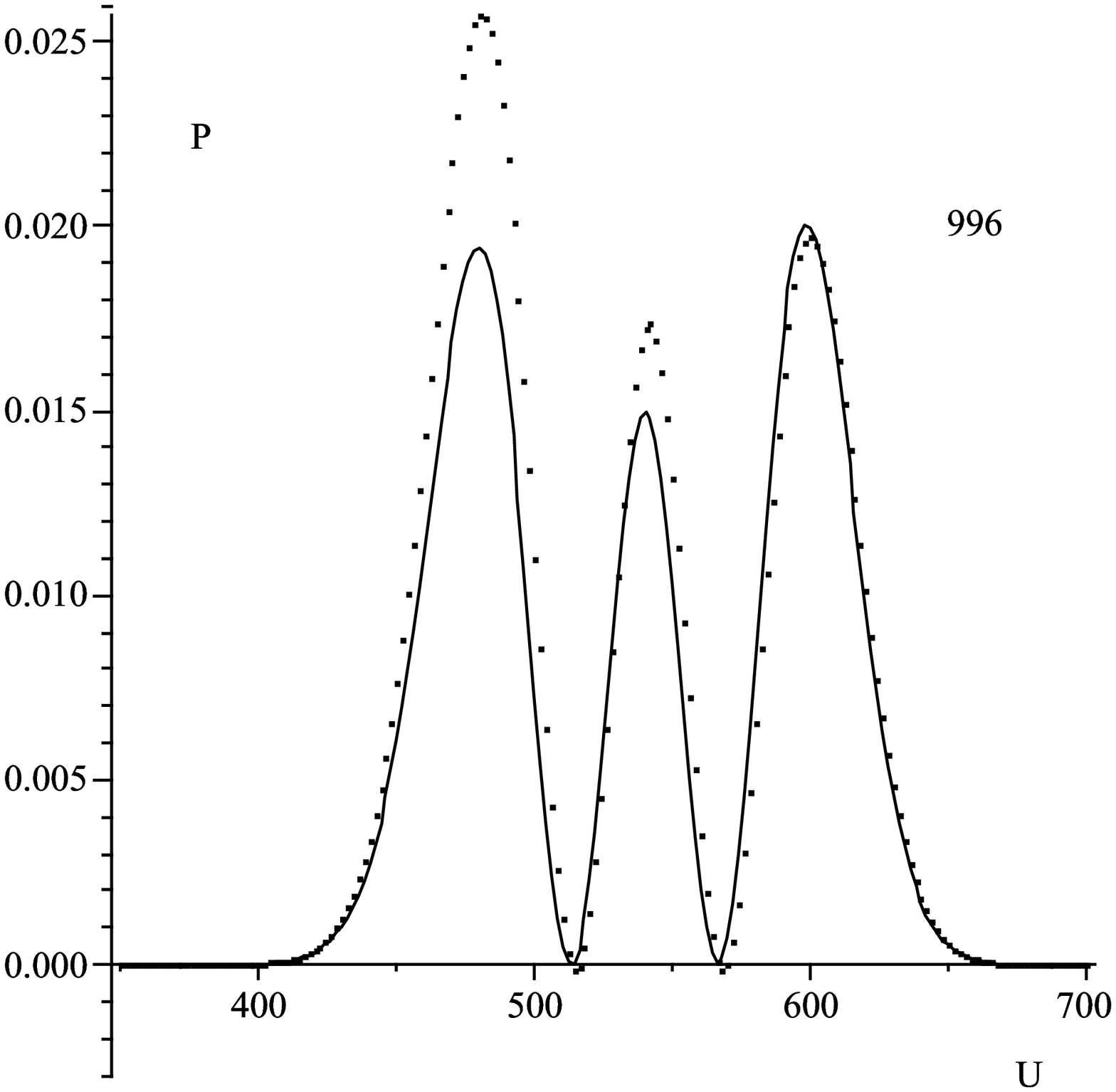}
\caption{The figures show the unperturbed (solid) and perturbed (dot) particle distributions $P$ given by eqs. (\ref{probDens}) and (\ref{pertPD}) for $N=1000$, $\theta=1$, $\delta_{\mathcal{U}} =2$ and \textit{a)} $m_0 =1000$, \textit{b)} $m_0=998$ and \textit{c)} $m_0=996$.}
\label{partDistU}
\end{figure}

\begin{figure}[t]
\psfrag{P}{$P$}
\psfrag{La}{$m$}
\psfrag{1000}{\textit{a)}$m_0=1000$}
\psfrag{998}{\textit{b)}$m_0=998$}
\psfrag{996}{\textit{c)}$m_0=996$}
\centering
\includegraphics[width=1.6in, height=1.6in]{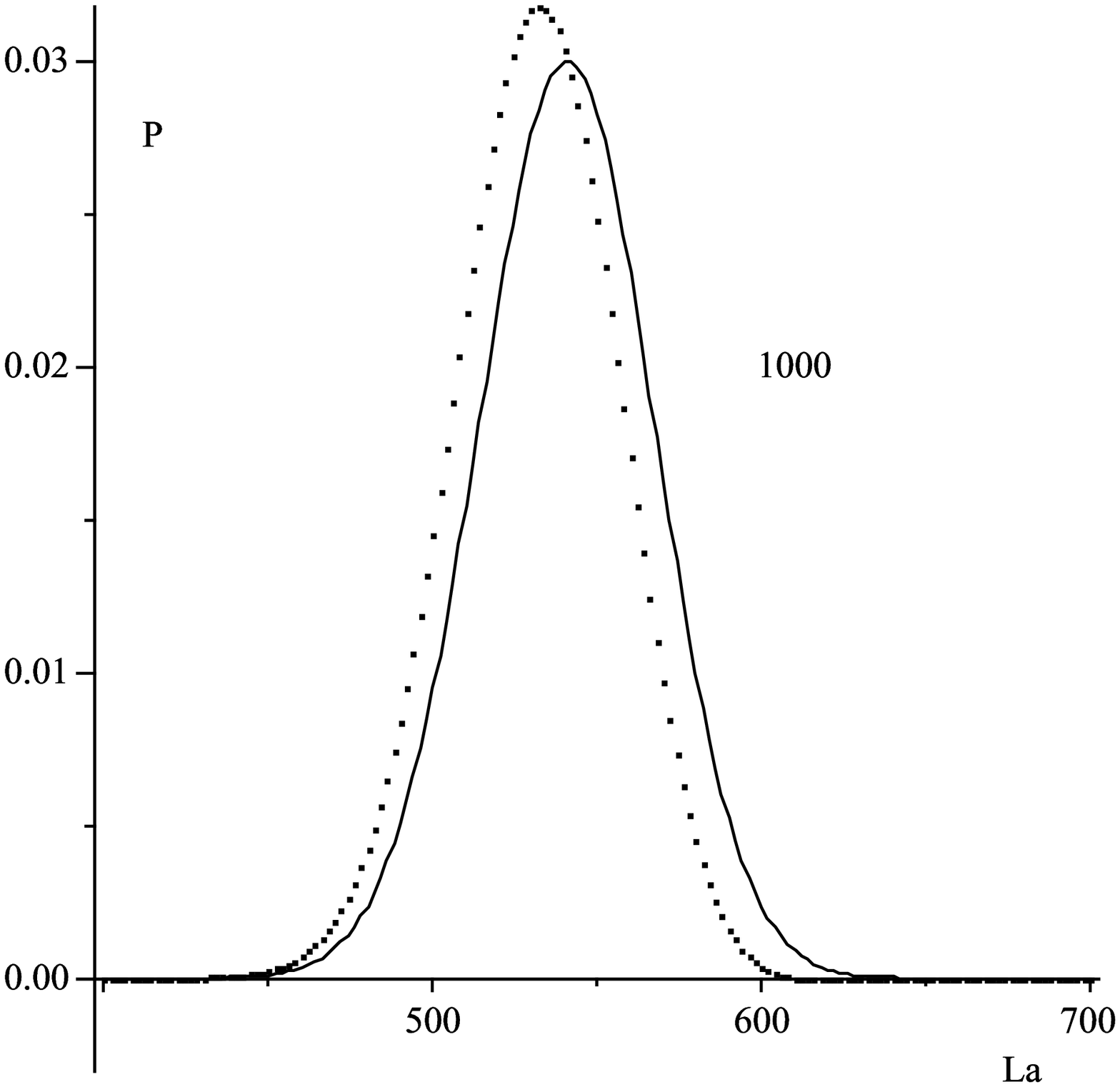}
\hspace{0.65in}
\includegraphics[width=1.6in, height=1.6in]{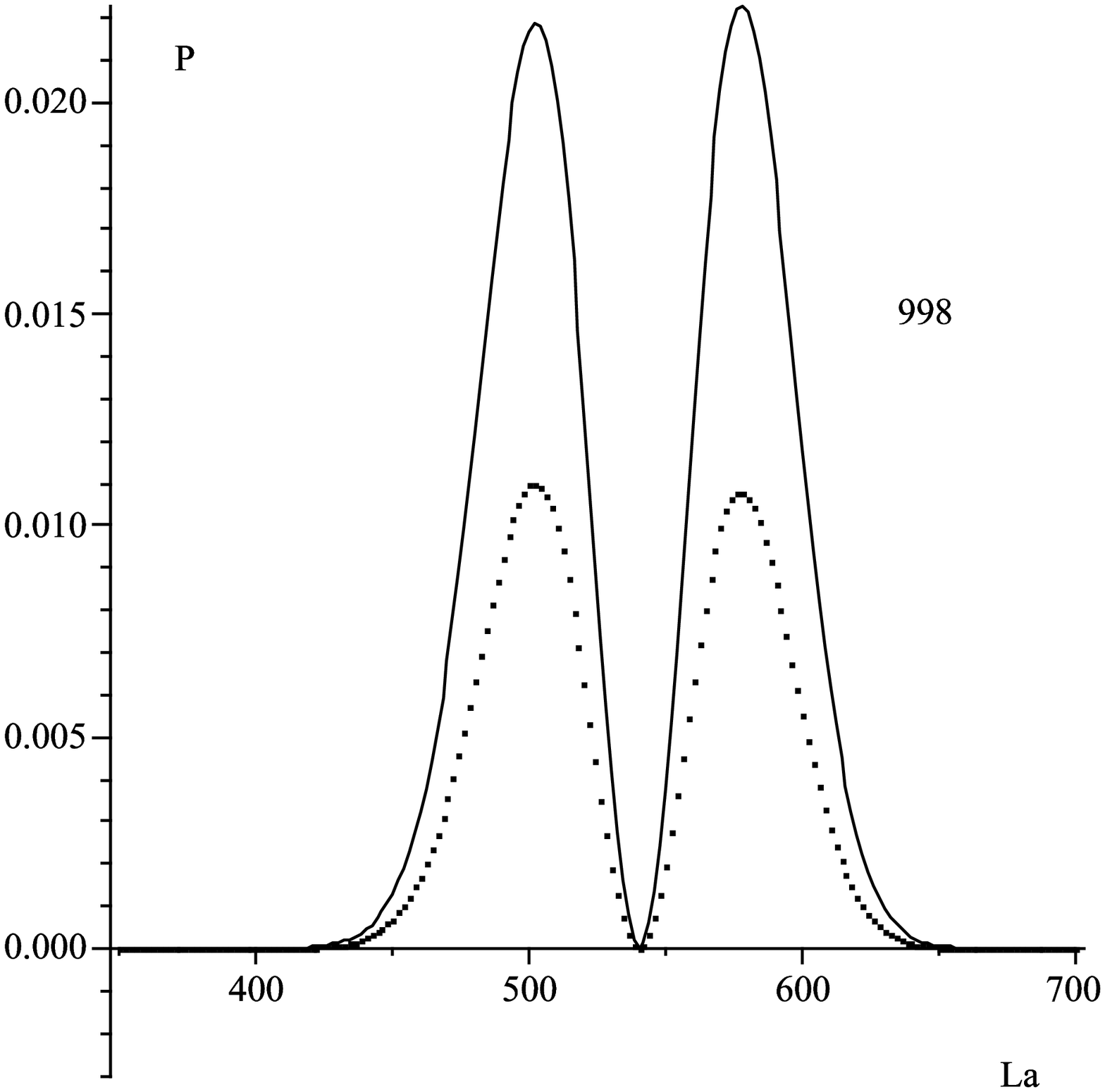}
\hspace{0.65in}
\includegraphics[width=1.6in, height=1.6in]{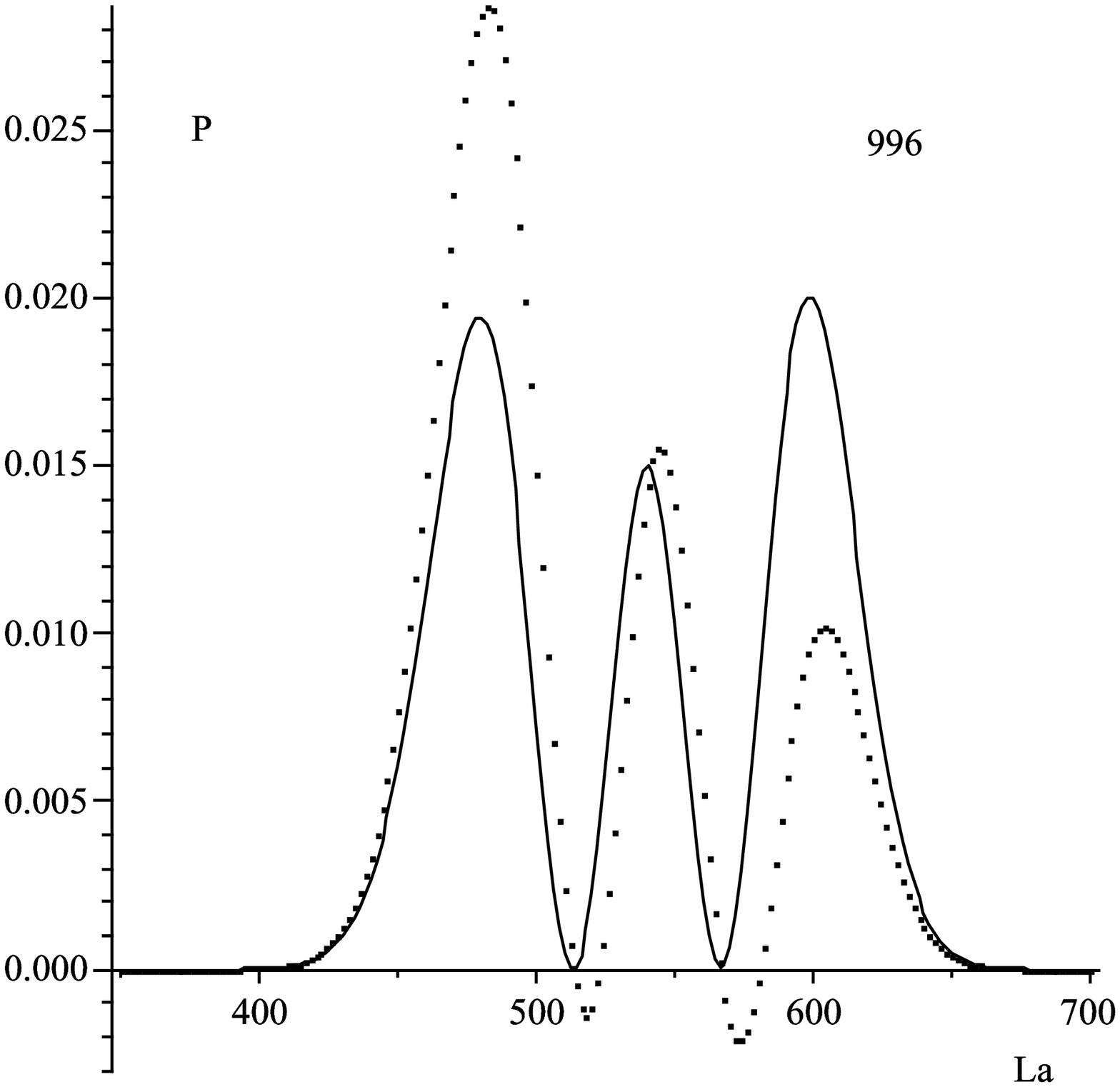}
\caption{The figures show the unperturbed (solid) and perturbed (dot) particle distributions $P$ given by eqs. (\ref{probDens}) and (\ref{pertPD}) for $N=1000$, $\theta=1$, $\delta_{\Lambda} =0.1$ and \textit{a)} $m_0 =1000$, \textit{b)} $m_0=998$ and \textit{c)} $m_0=996$.}
\label{partDistLambda}
\end{figure}

\begin{figure}[t]
\psfrag{P}{$P$}
\psfrag{mu}{$m$}
\psfrag{1000}{\textit{a)}$m_0=1000$}
\psfrag{998}{\textit{b)}$m_0=998$}
\psfrag{996}{\textit{c)}$m_0=996$}
\centering
\includegraphics[width=1.6in, height=1.6in]{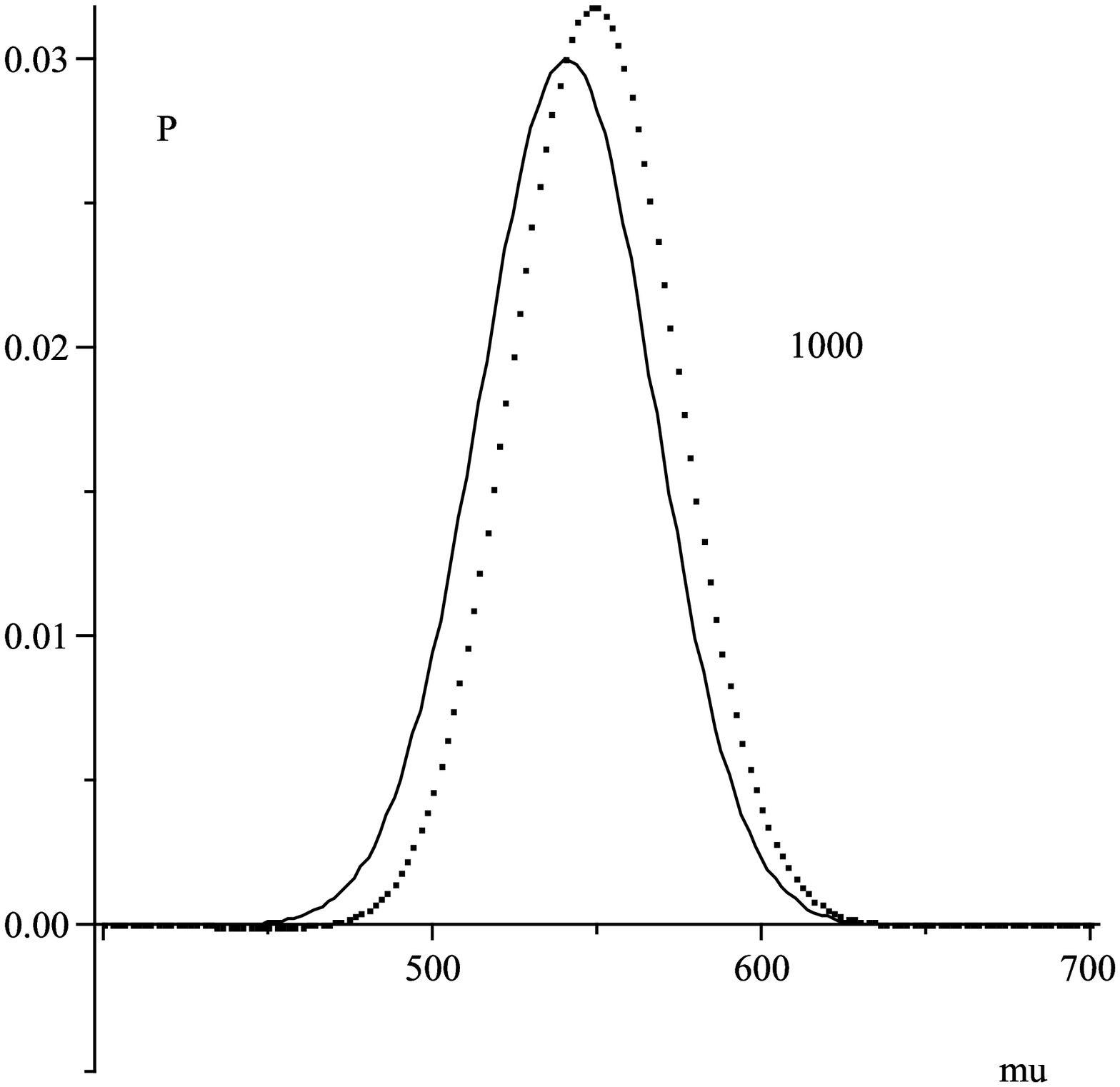}
\hspace{0.65in}
\includegraphics[width=1.6in, height=1.6in]{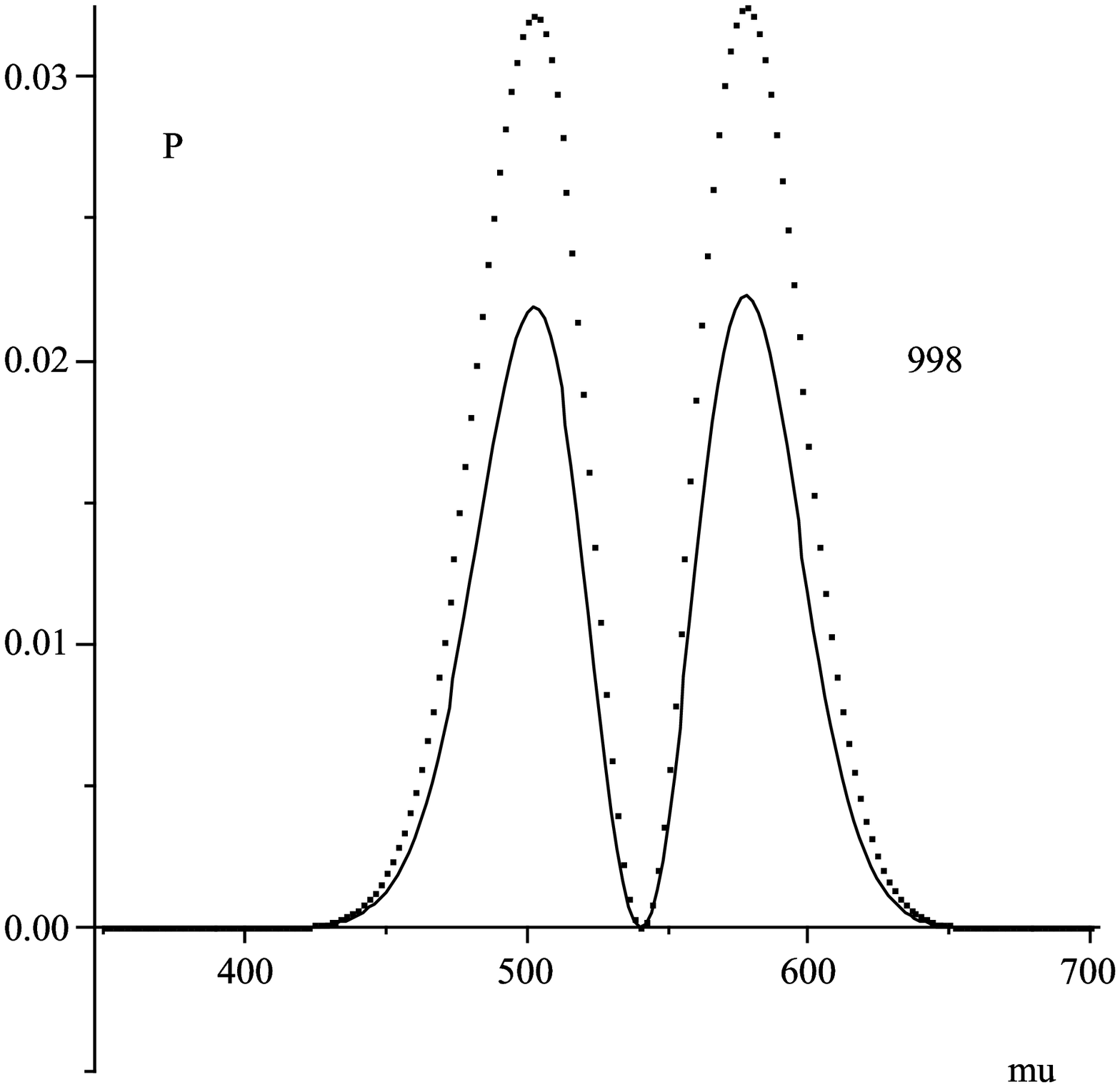}
\hspace{0.65in}
\includegraphics[width=1.6in, height=1.6in]{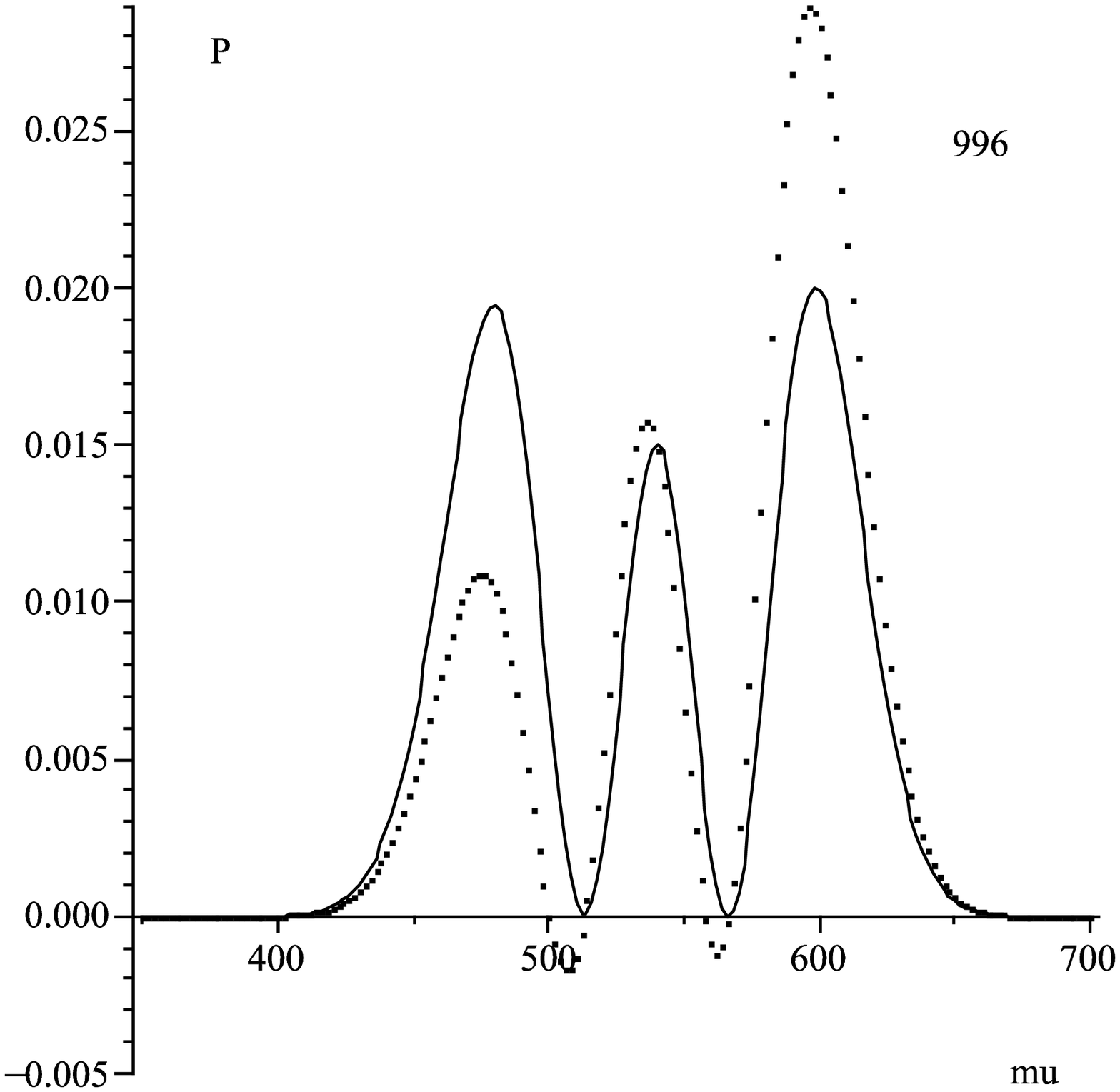}
\caption{The figures show the unperturbed (solid) and perturbed (dot) particle distributions $P$ given by eqs. (\ref{probDens}) and (\ref{pertPD}) for $N=1000$, $\theta=1$, $\delta_{\mu} =0.1$ and \textit{a)} $m_0 =1000$, \textit{b)} $m_0=998$ and \textit{c)} $m_0=996$.}
\label{partDistMu}
\end{figure}

\section{Perturbative Effects on Entanglement}

Entanglement arises in many-body quantum systems because of the superposition principle and the tensor product structure of the Hilbert space\footnote{Note that we have been using the shorthand $\vert N, m\rangle \equiv \vert n_a=\frac{N+m}{2} \rangle \otimes \vert n_b=\frac{N-m}{2}  \rangle$.}, a property of utmost importance in quantum control and quantum information.  In particular, states with high entanglement are desirable because of their utility in carrying out quantum information tasks \cite{niels}.  We would like to determine whether or not by careful choice of  the perturbations we can increase the entanglement from that of the unperturbed case.

For a bipartite quantum system the von-Neumann entropy $S=- Tr \left(\rho \log_2 \rho \right)$ is a standard measure of the entanglement of the system, $\rho$ being the reduced density matrix.  In the case at hand this reduces to
\begin{equation}
S(N,m_0, \theta)= -\sum_{m=-N}^N  P(N,m_0,m, \theta) \log_2 P(N,m_0,m, \theta)   \label{ent_Form}
\end{equation}
where $P$ is either the perturbed or unperturbed particle distribution, depending on the situation.  As noted in the section above, the first order particle distribution given by eq. (\ref{pertPD}) can be negative for certain choices of perturbations. To resolve this problem we could either replace $P^{(0+1)}$ with $\left| P^{(0+1)} \right|$ or we could include the second order term $\left| \sum_{n=-N}^N a_{m_0,n}d_{m,n}^N \right|^2$ in $P^{(0+1)}$; we use the former approach to keep the calculations strictly to first order.

When the particle distribution contains first order corrections, $P^{(0+1)}=P^{(0)}+P^{(1)}$, the entanglement is given by
\begin{equation}
S^{(0+1)}=S^{(0)}-\sum_{m=-N}^N P^{(1)}(N,m_0,m, \theta)\left( \log_2 P^{(0)}(N,m_0,m, \theta) + \frac{1}{\ln 2} \right).   \label{pertEnt}
\end{equation}
For the case at hand we read off that $P^{(1)}=2d_{m,m_0}^N \sum_{n=-N}^N  \hbox{Re}\left( a_{m_0,n} \right) d_{m,n}^N $.  Hence we have a criterion for increasing the entanglement of the system.  Namely, we increase the entanglement (to first order) precisely when
\begin{equation}
\sum_{m=-N}^N P^{(1)}(N,m_0,m, \theta)\left( \log_2 P^{(0)}(N,m_0,m, \theta) + \frac{1}{\ln 2} \right) <0.	 \label{incEnt}
\end{equation}
Note that $P^{(1)}(N,m_0,m, \theta)$ is proportional to the perturbation strengths, so that we can increase or decrease the entanglement by choosing the sign of the perturbation appropriately.  Continuing, we note that for each $m$ we have $\log_2 P^{(0)}(N,m_0,m, \theta) + \frac{1}{\ln 2} >0$, so that to maximize the entanglement  we should maximize each $P^{(1)}$.  Of course, $\vert P^{(1)} \vert $ can be made arbitrarily large simply by choosing $\delta$ arbitrary large.  However, we are limited in such a choice since we must keep $\delta$ small so that perturbation theory can be trusted.

\begin{figure}[t]
\centering
\includegraphics[width=2in, height=2in]{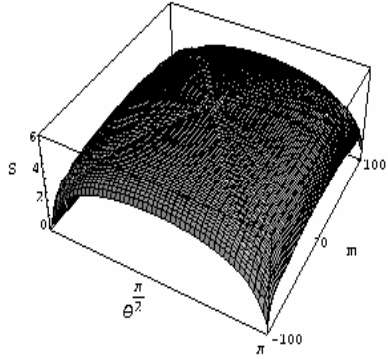}
\caption{The figure shows the unperturbed entanglement using eq. (\ref{ent_Form}) for $N=100$.}
\label{ent_Unpert}
\end{figure}

We plot in Figure \ref{ent_Unpert} the unperturbed entanglement as a function of $\theta$ and $m_0$.  See \cite{fuent2} for an extensive analysis. In Figures \ref{ent_omega}-\ref{ent_mu} we plot the perturbed entanglements $S^{(0+1)}$ for $\delta=0.01$ as well as the differences $\Delta S := S^{(0+1)}-S^{(0)}$ between the perturbed and unperturbed entanglements for $\delta=0.1$; we choose a different $\delta$ in the latter case so that the differences are more evident.   In each of the plots of $\Delta S$ we see that the largest changes occur approximately along the diagonal lines, connecting the points $(\theta,m_0)=(\pi,-N)$ and $(\pi,N)$ and the points $(-\pi,-N)$ and $(\pi,N)$; these lines in $\theta-m_0$ space correspond to certain strengths of the coupling constants in the Hamiltonian $H_2$  viewed as functions of $m_0$.  Note this type of behaviour also occurs along these lines in the unperturbed entanglement plots, as shown in Figure \ref{ent_Unpert}.  Comparing the magnitude of the perturbative effects we observe that perturbations to the mode-exchange collision terms ($\mathcal{U},\Lambda, \mu$) have a much greater effect, their maximum difference being about an order of magnitude larger than those for $\lambda$ and $\omega$. It is also interesting to observe that the coherent states, which correspond to $U^\dagger|N,N\rangle$ and $U^\dagger|N,-N\rangle$ are the states which are maximally affected by perturbations. In such states $A_{1}$ is much larger than $A_{2}$ so that the rate of collisions is relatively small.  To further study this we have plotted in Figure \ref{Entan_Cut_Plots} a two-dimensional cut of Figures  \ref{ent_omega}\textit{b)}-\ref{ent_mu}\textit{b)}, where we have fixed $\theta=\frac{\pi}{4}$ and allowed $m_0$ to vary.  We observe that in each of the plots $\Delta S$ has both a local maximum and local minimum as $m_0$ approaches $N$.

We also observe that the most significant changes occur in the region $m_0 >0$, where more particles lie in the $a$ mode. Positive $m_0$ implies that, for fixed $A_1 > 0$, the scattering length for same-mode collisions is positive and as $m_0$ approaches $N$, the collision rate becomes smaller. We then conclude that condensates with negative scattering lengths are more resilient to parameter perturbations and high same-mode collision rates help stabilize the condensate.

\begin{figure}[t]
\centering
\includegraphics[width=2in, height=2in]{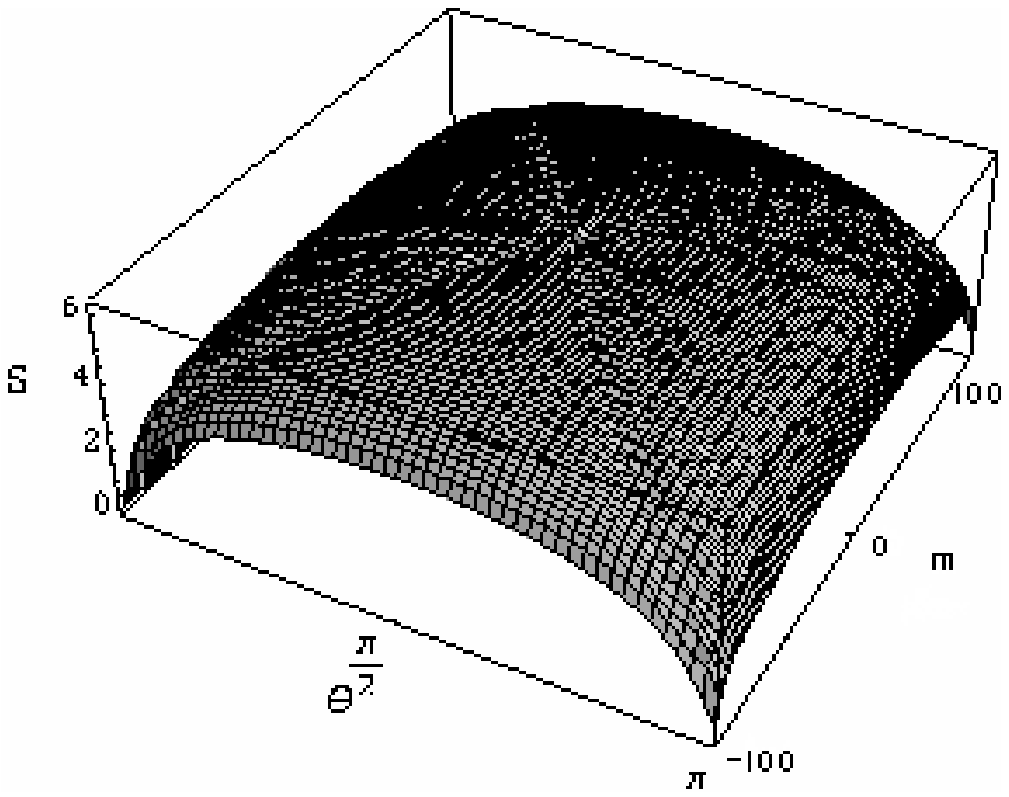}
\hspace{0.4in}
\includegraphics[width=2in, height=2in]{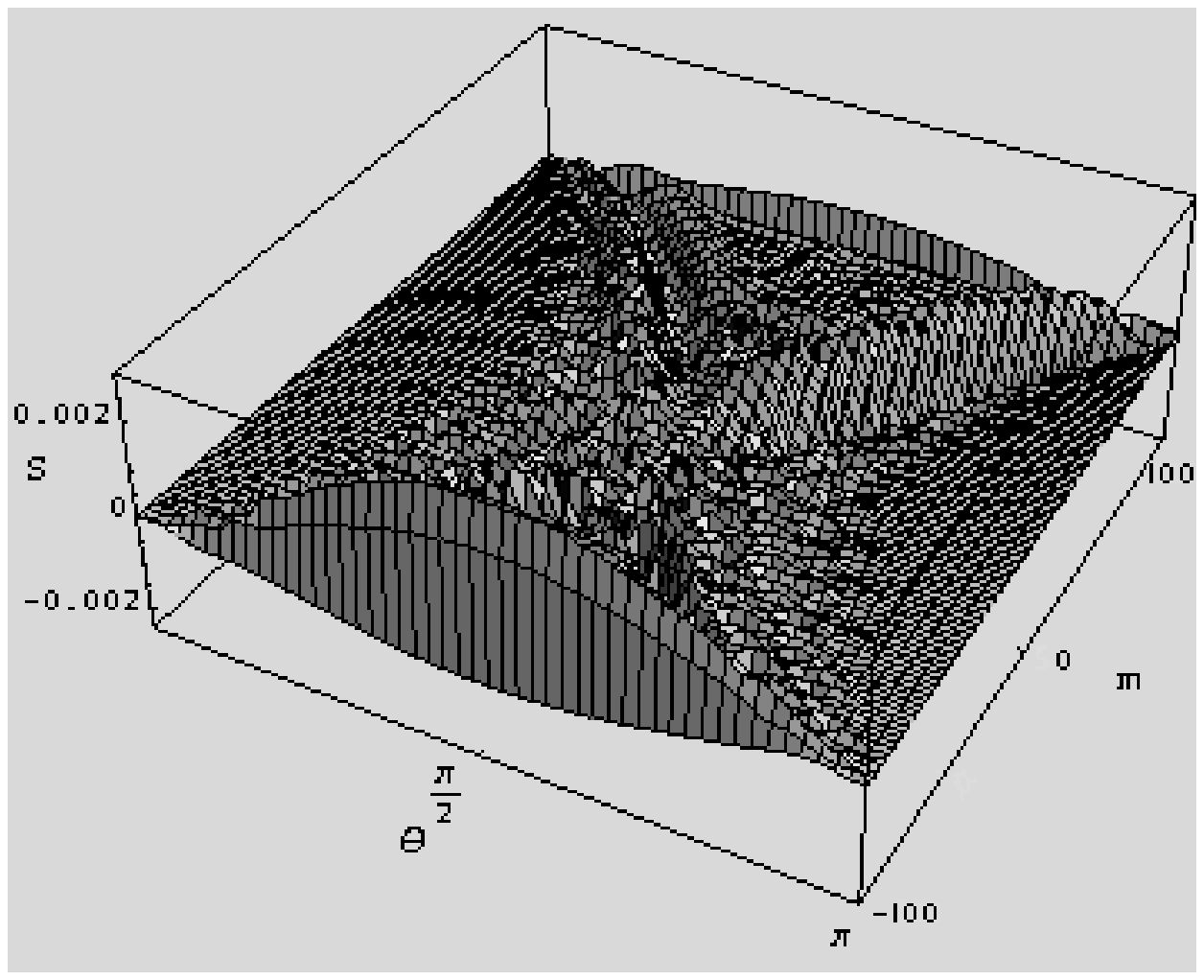}
\caption{The figures show the \textit{a)} perturbed entanglement for $\delta_{\omega}=0.01$ as a function of $m_0$ and $\theta$ for $N=50$ and \textit{b)} the difference $S^{(0+1)}-S^{(0)}$ for $N=100$ and $\delta_{\omega}=0.1$.}
\label{ent_omega}
\end{figure}

\begin{figure}[t]
\centering
\includegraphics[width=2in, height=2in]{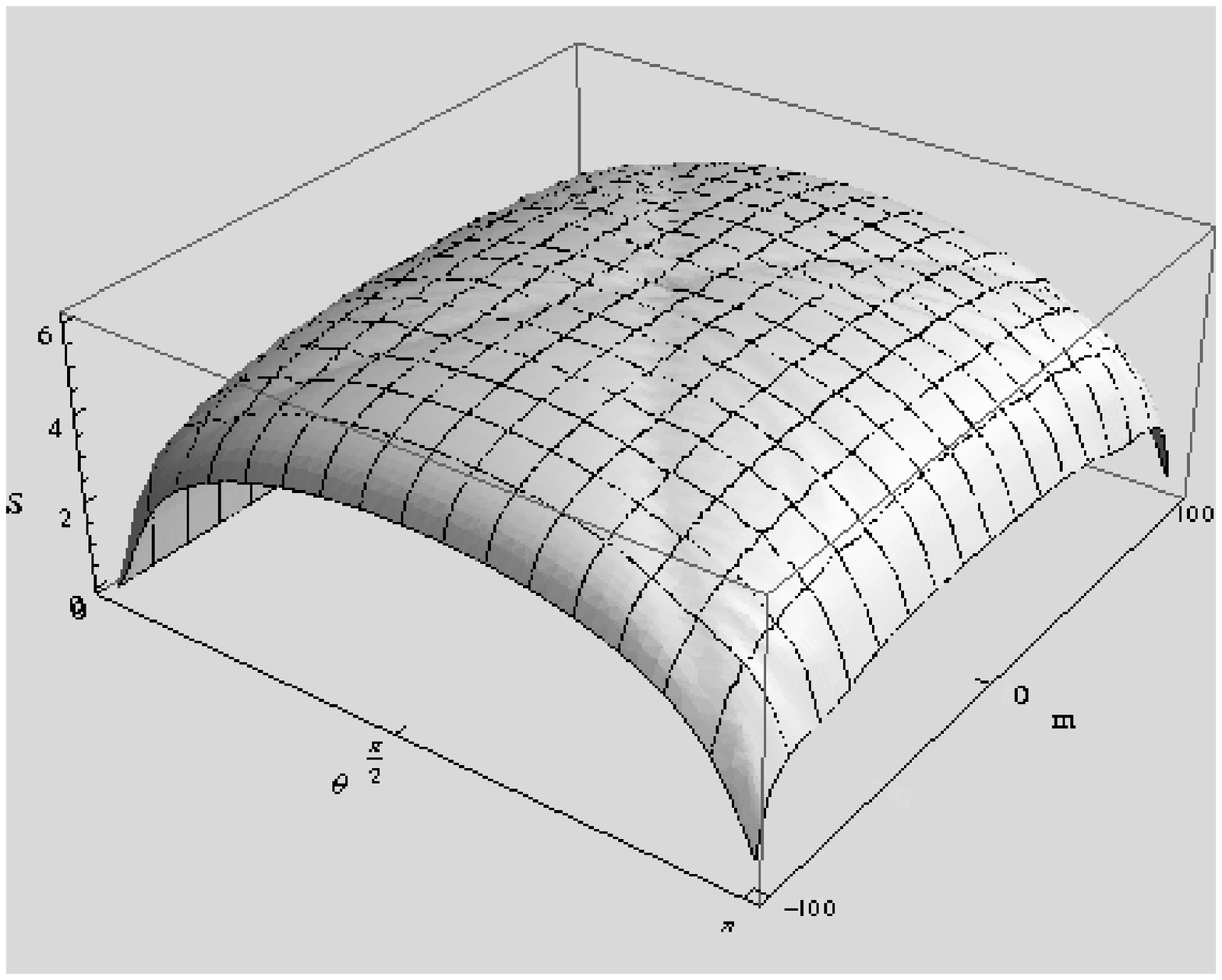}
\hspace{0.4in}
\includegraphics[width=2in, height=2in]{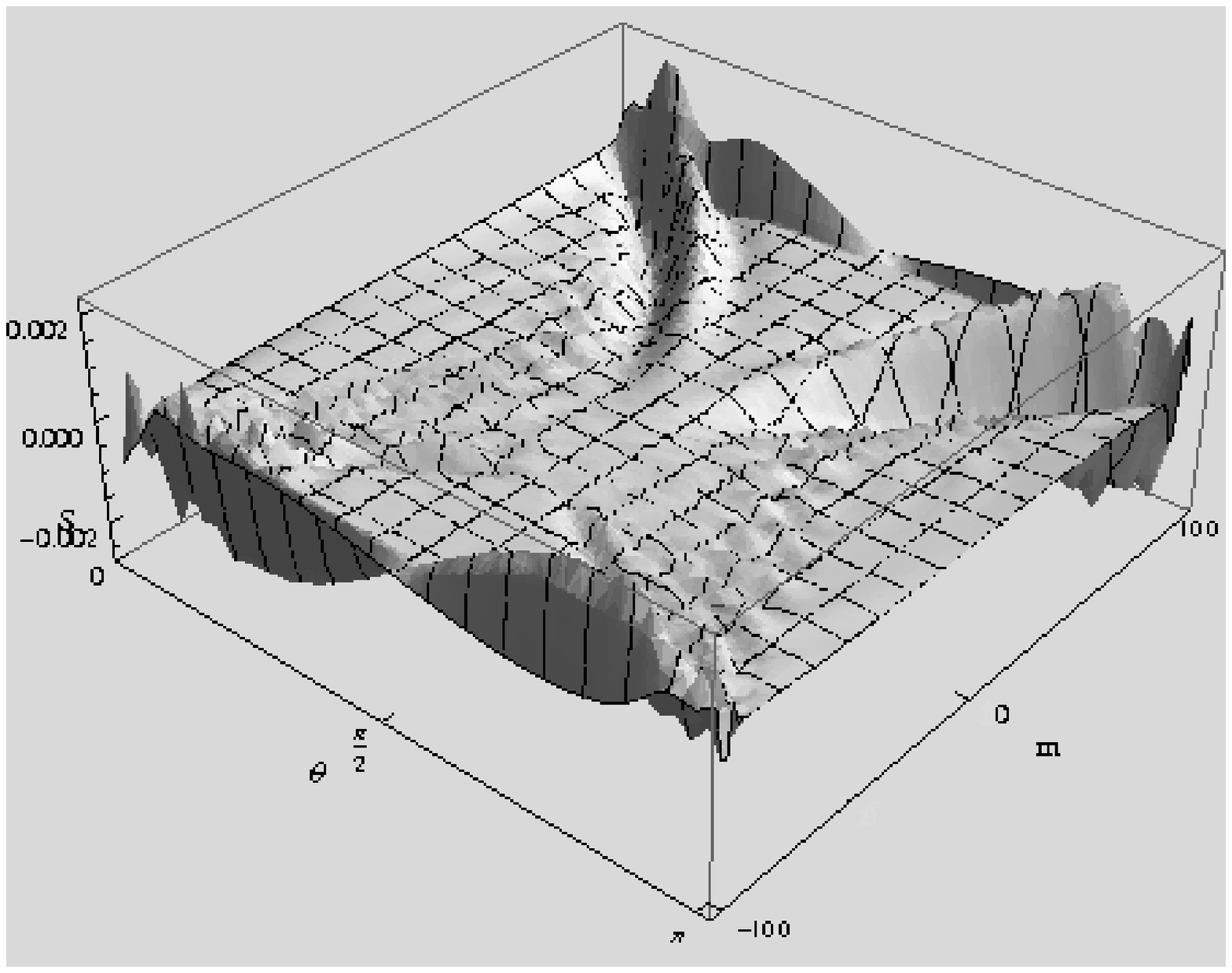}
\caption{The figures show the \textit{a)} perturbed entanglement for $\delta_{\lambda}=0.01$ as a function of $m_0$ and $\theta$ for $N=100$ and \textit{b)} the difference $S^{(0+1)}-S^{(0)}$ for $N=100$ and $\delta_{\lambda}=0.1$.}
\label{ent_small_lambda}
\end{figure}

\begin{figure}[t]
\centering
\includegraphics[width=2in, height=2in]{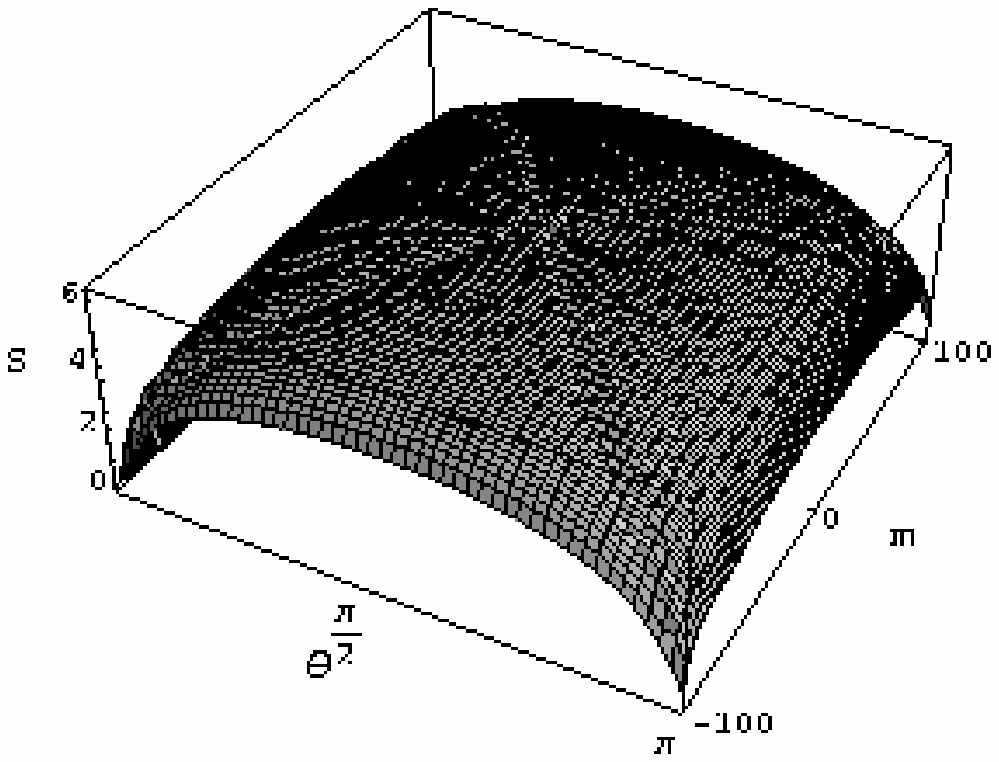}
\hspace{0.4in}
\includegraphics[width=2in, height=2in]{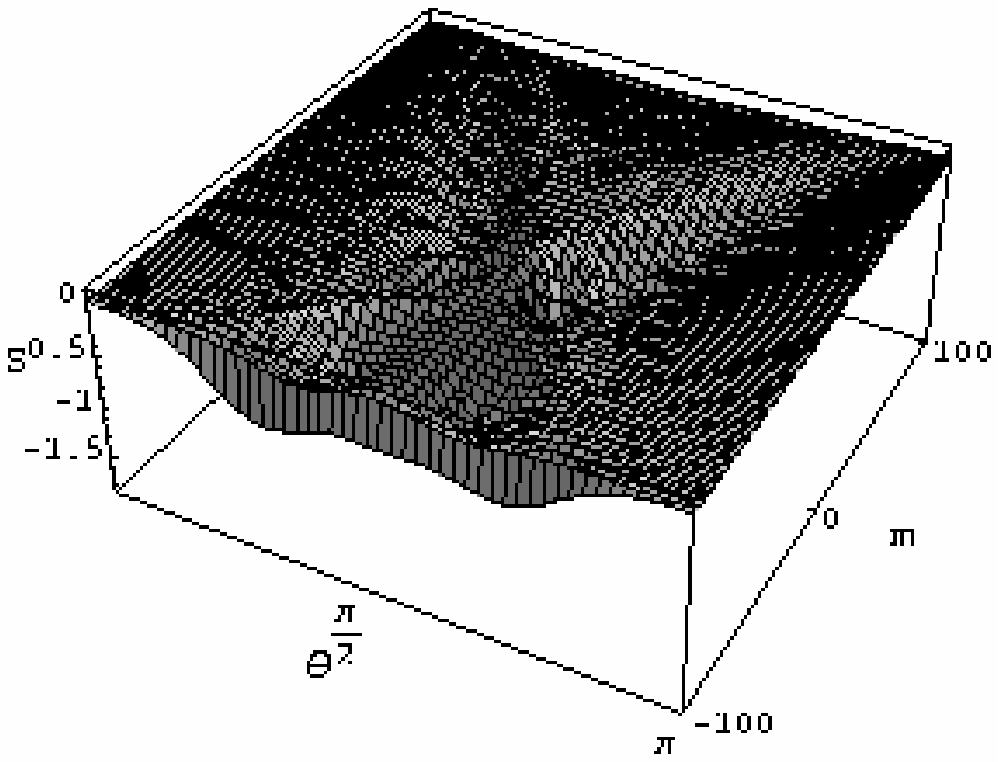}
\caption{The figures show the \textit{a)} perturbed entanglement for $\delta_{\mathcal{U}}=0.01$ as a function of $m_0$ and $\theta$ for $N=100$ and \textit{b)} the difference $S^{(0+1)}-S^{(0)}$ for $N=100$ and $\delta_{\mathcal{U}}=0.1$.}
\label{ent_U}
\end{figure}

\begin{figure}[t]
\centering
\includegraphics[width=2in, height=2in]{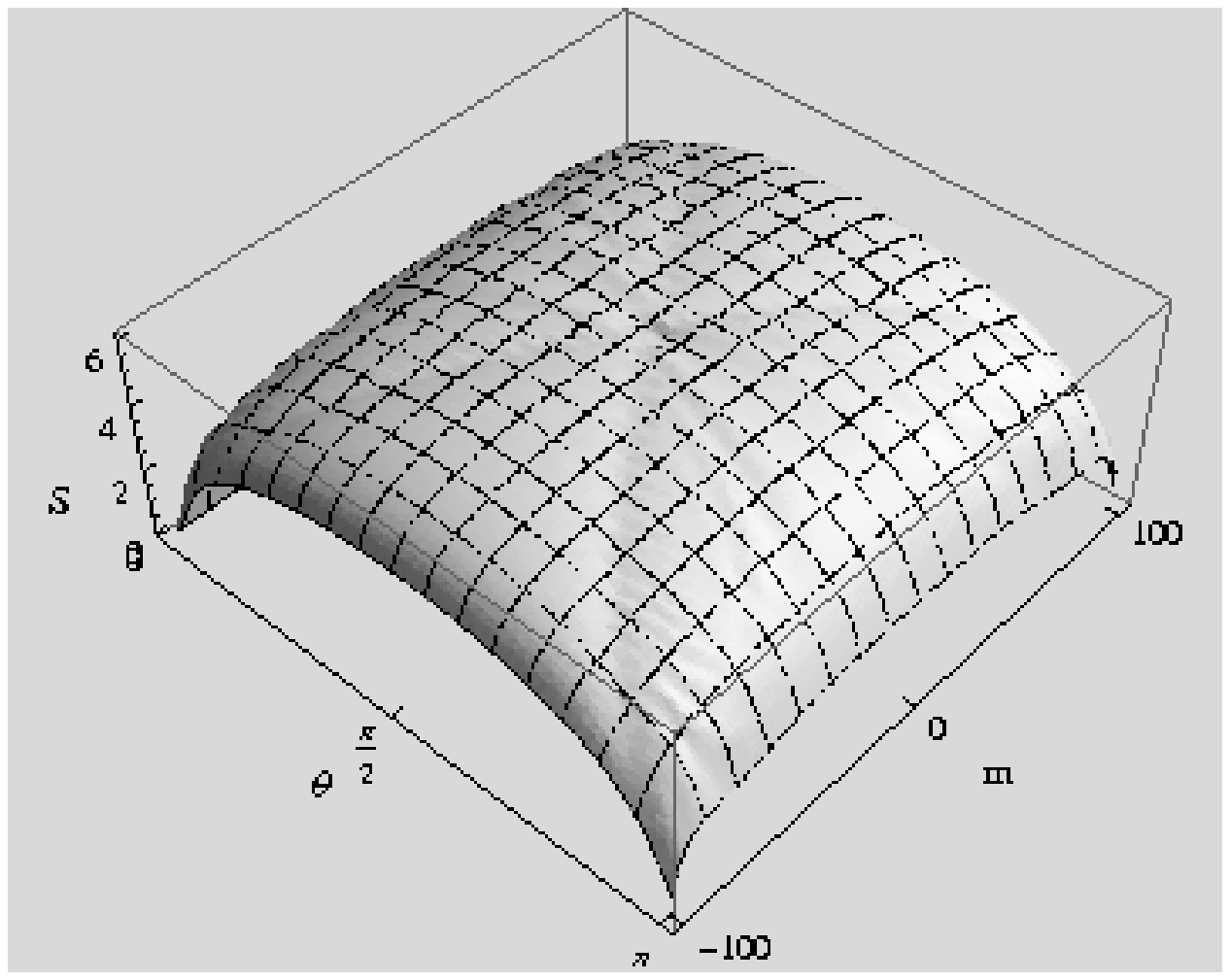}
\hspace{0.4in}
\includegraphics[width=2in, height=2in]{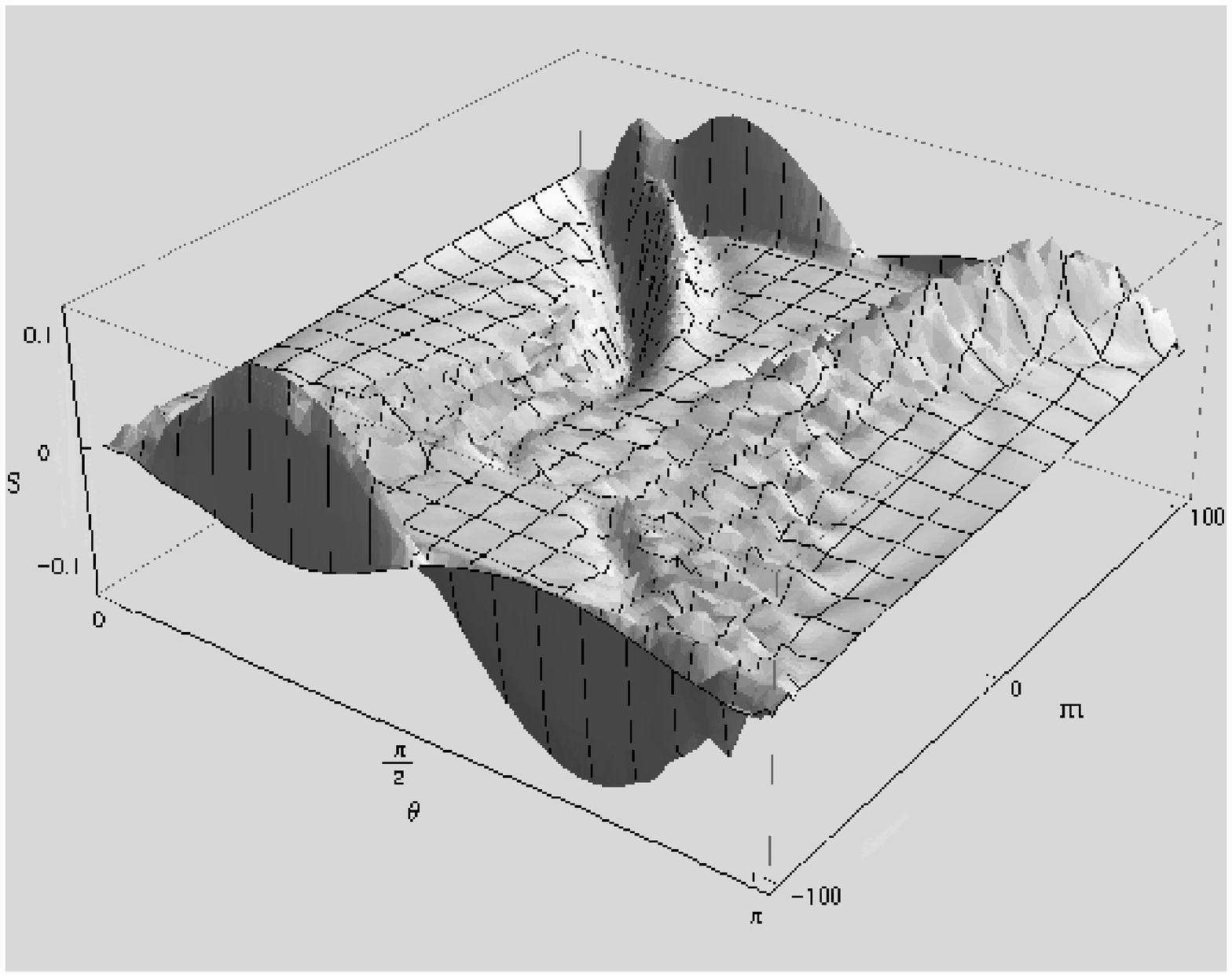}
\hspace{0.4in}
\caption{The figures show the \textit{a)} perturbed entanglement for $\delta_{\Lambda}=0.01$ as a function of $m_0$ and $\theta$ for $N=100$ and \textit{b)} the difference $S^{(0+1)}-S^{(0)}$ for $N=100$ and $\delta_{\Lambda}=0.1$.}
\label{ent_Lambda}
\end{figure}

\begin{figure}[t]
\centering
\includegraphics[width=2in, height=2in]{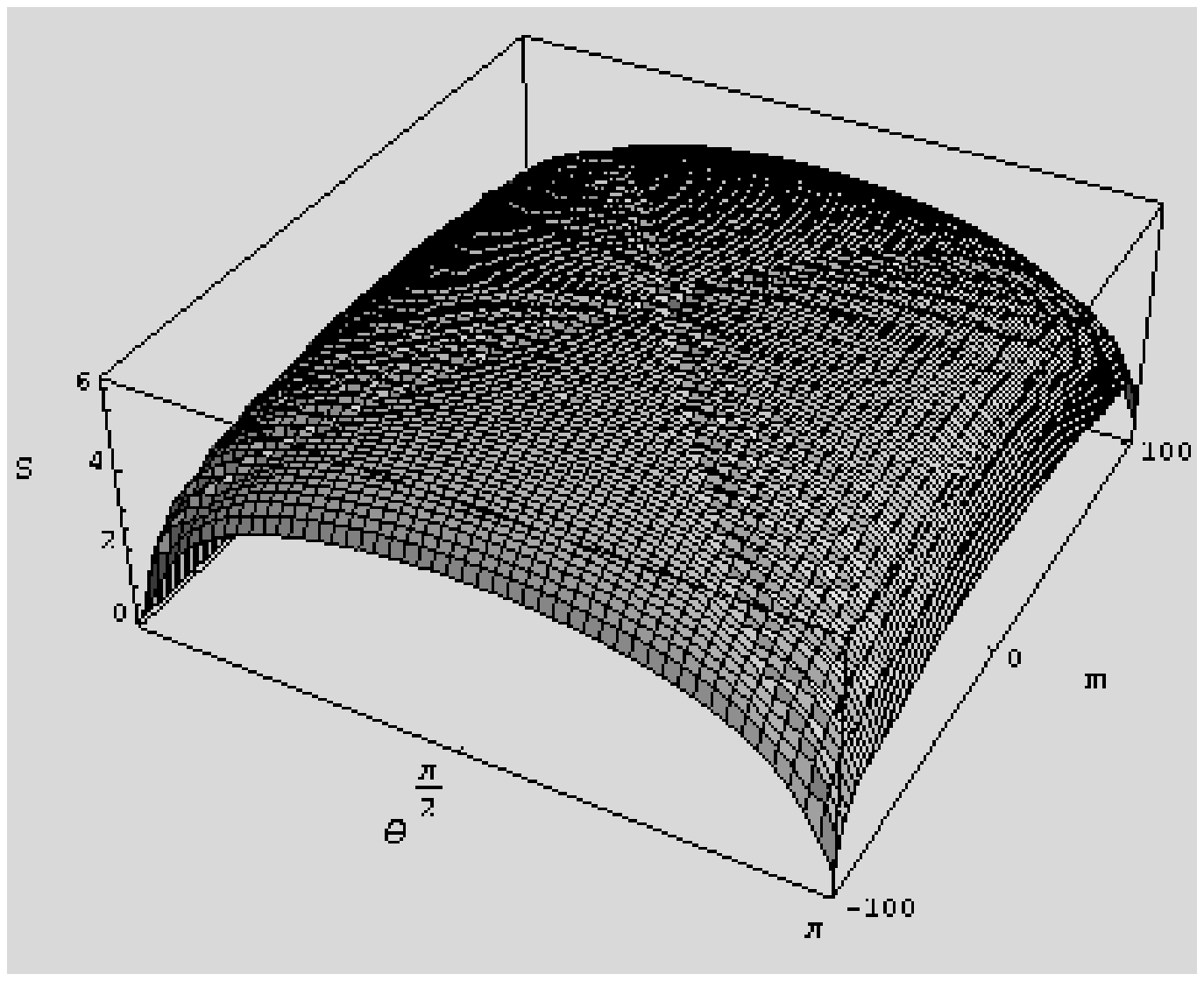}
\hspace{0.4in}
\includegraphics[width=2in, height=2in]{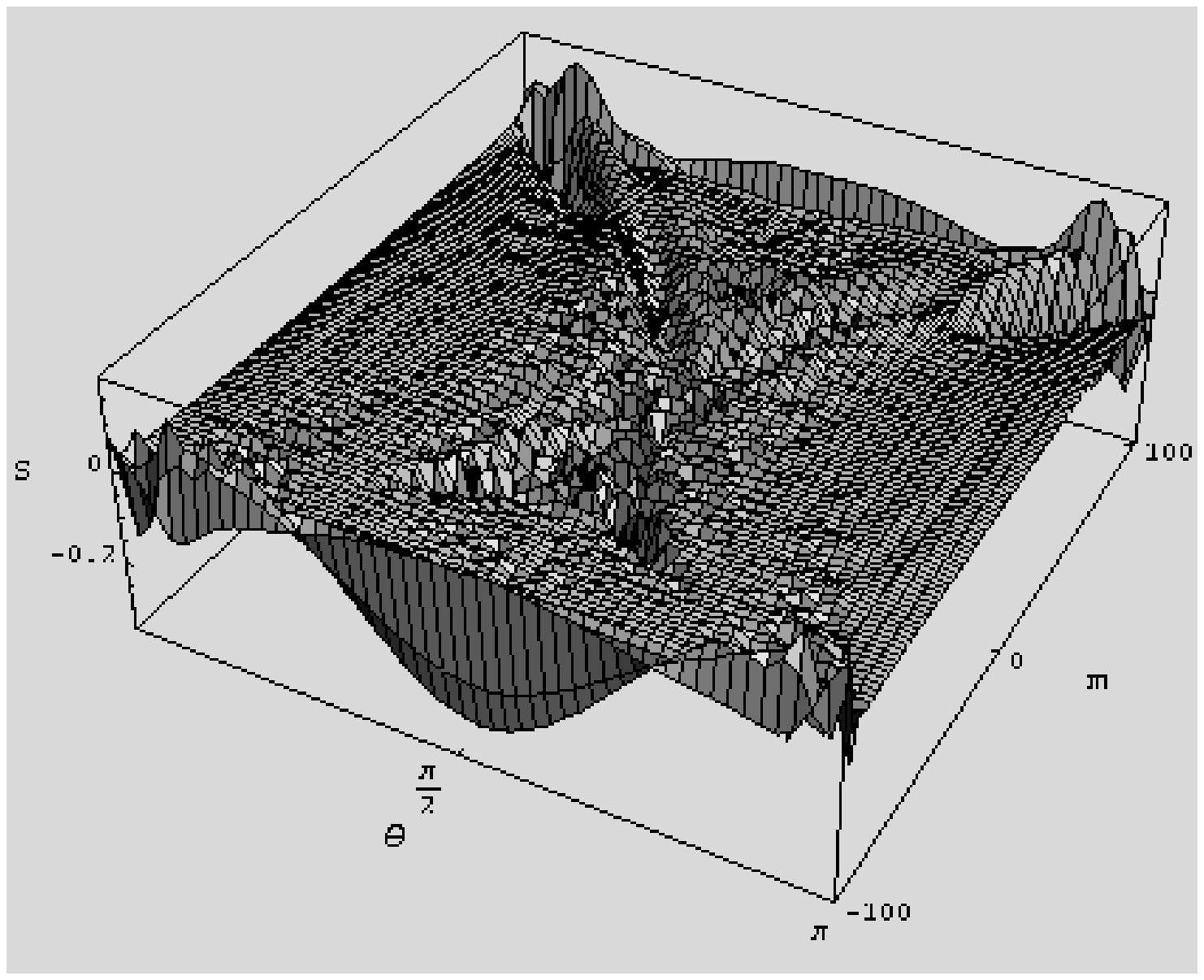}
\hspace{0.4in}
\caption{The figures show the \textit{a)} perturbed entanglement for $\delta_{\mu}=0.01$ as a function of $m_0$ and $\theta$ for $N=100$ and \textit{b)} the difference $S^{(0+1)}-S^{(0)}$ for $N=100$ and $\delta_{\mu}=0.1$.}
\label{ent_mu}
\end{figure}

\begin{figure}[t]
\psfrag{a}{\textit{a})}
\psfrag{b}{\textit{b})}
\psfrag{c}{\textit{c})}
\psfrag{d}{\textit{d})}
\psfrag{e}{\textit{e})}
\centering
\includegraphics[width=1.2in, height=1.1in]{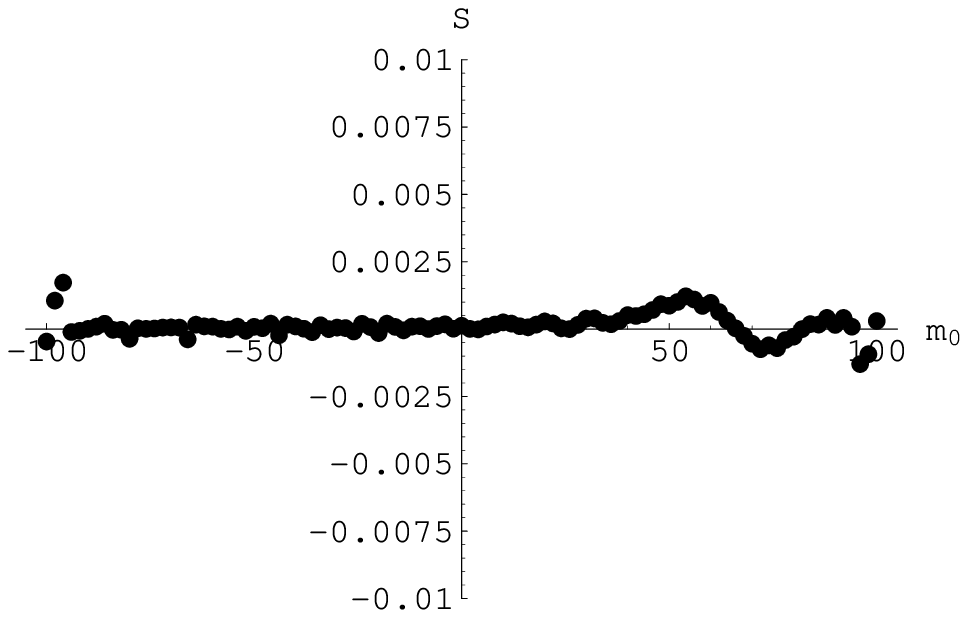}
\hspace{0.1in}
\includegraphics[width=1.2in, height=1.1in]{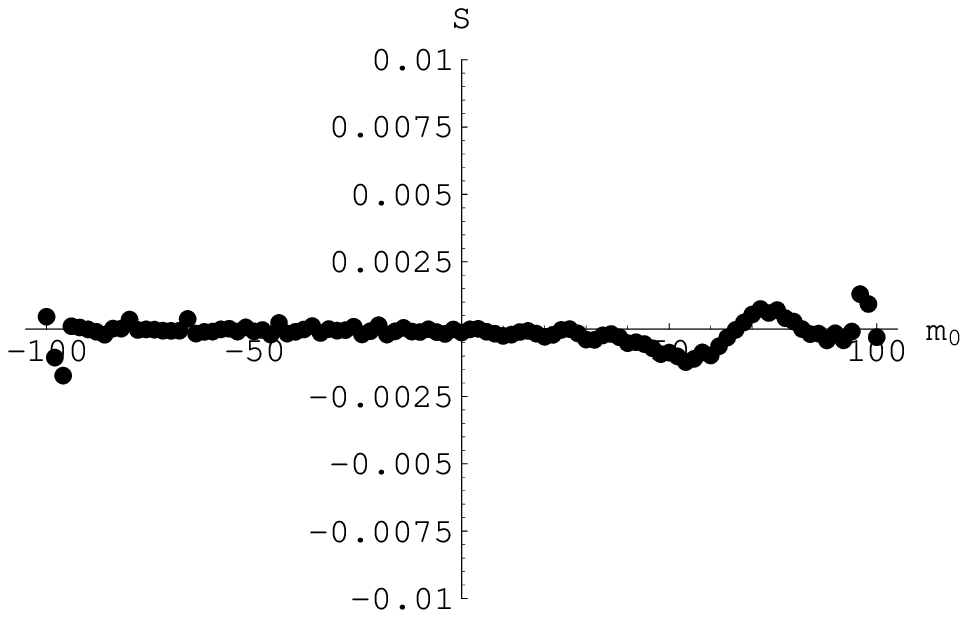}
\hspace{0.1in}
\includegraphics[width=1.2in, height=1.1in]{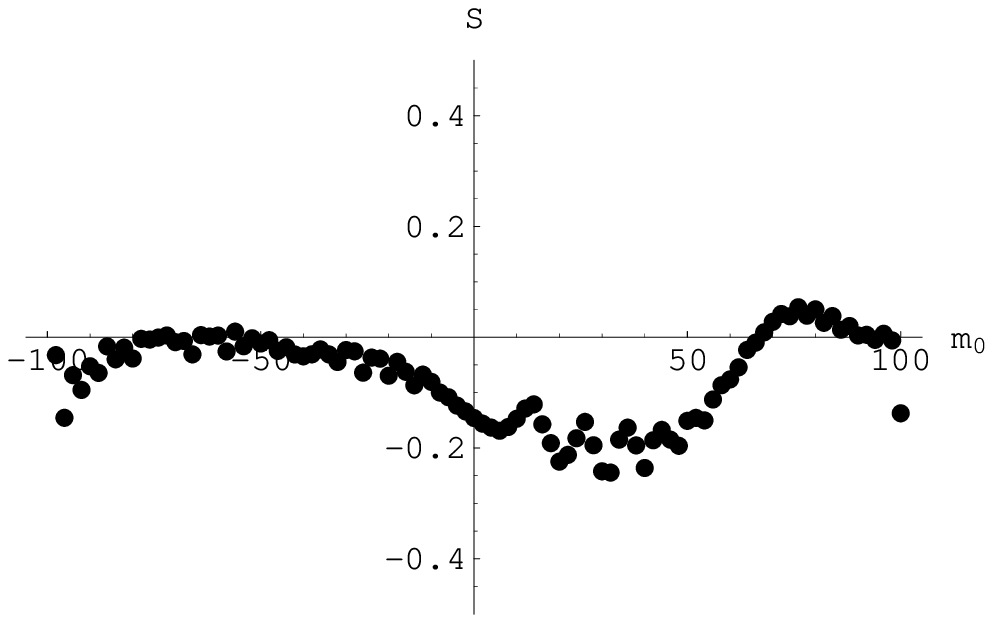}
\hspace{0.1in}
\includegraphics[width=1.2in, height=1.1in]{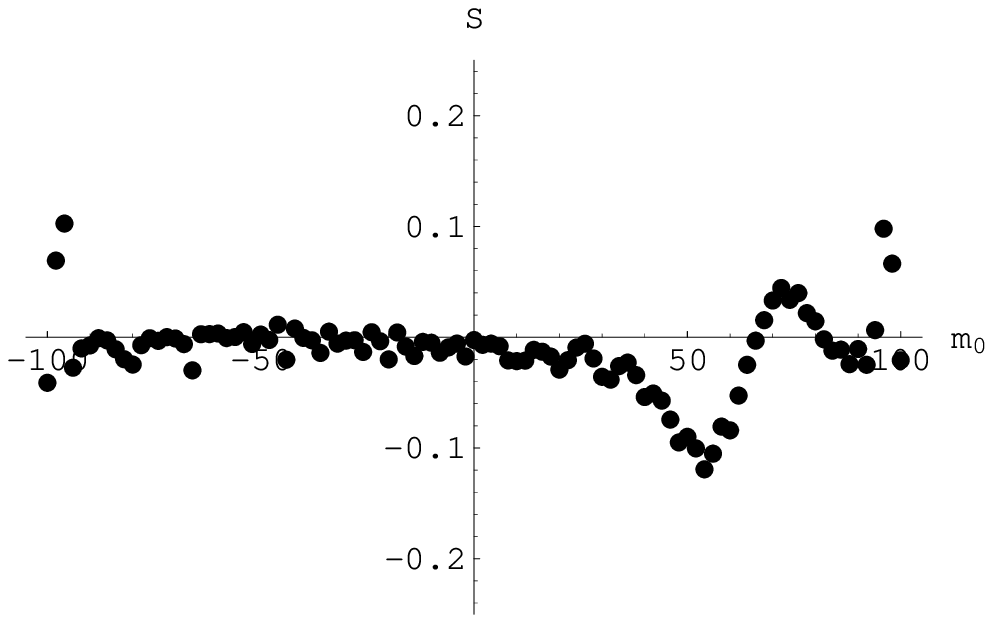}
\hspace{0.1in}
\includegraphics[width=1.2in, height=1.1in]{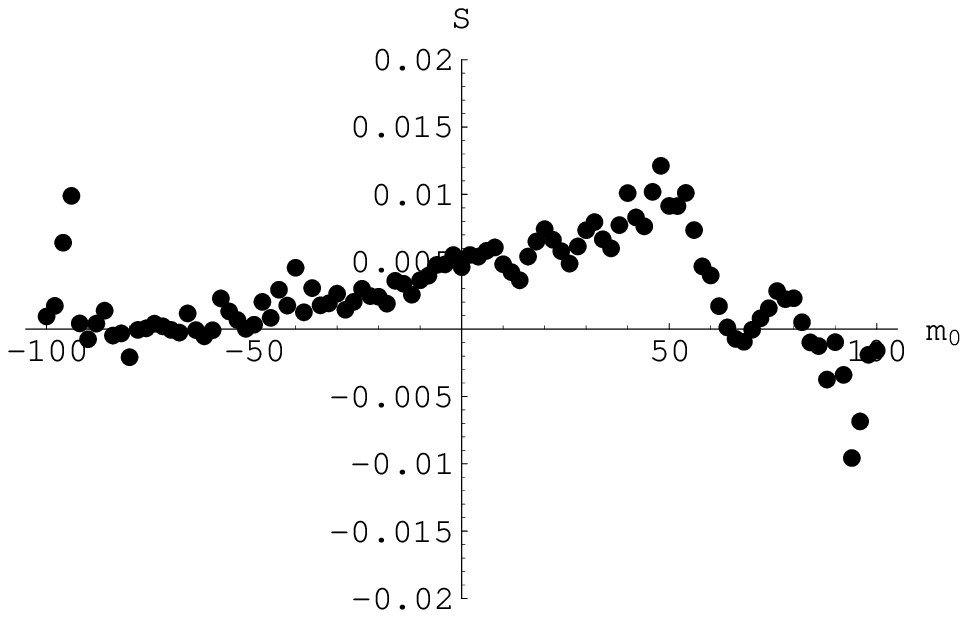}
\caption{Shown in  \textit{a)} - \textit{e)} are plots of $S^{(0+1)}-S^{(0)}$  for  $\omega, \lambda, \mathcal{U}, \Lambda$ and $\mu$, respectively, for $\theta=\frac{\pi}{4}$ as a function of $m_0$.  The perturbation strengths used are all $\delta=0.1$.}
\label{Entan_Cut_Plots}
\end{figure}

\section{Evolution of Relative Population}

With an analytic solution to the system we may study the evolution of the relative population $\langle \hat{m} \rangle (t) := \langle \psi (t) \vert \hat{m} \vert \psi (t) \rangle$ as a function of time, where the initial state is given by $\vert \psi (t=0) \rangle = \sum_{m=-N}^N C_m U^{\dagger}\vert N,m \rangle$.   For simplicity we restrict our study to the case in which $C_m \in \mathbb{R}$.  To first order in the perturbation parameter we have
\[
\vert \psi (t) \rangle^{(0+1)} = \sum_{m=-N}^N C_m e^{-iE_m^{(0+1)} t}U^{\dagger}\left\{ \vert N,m \rangle +\sum_{k=-N}^N a_{m,k}\vert N,k \rangle \right\}
\]
We compute  $\langle \hat{m} \rangle^{(0+1)} (t) = \langle \hat{m} \rangle (t) + \langle \hat{m} \rangle^{(1)} (t)$ where
\begin{equation}
\langle \hat{m} \rangle (t) = \cos \theta \sum_{m=-N}^N m C_m^2 + \sin \theta \sum_{m=-N}^{N-2} C_mC_{m+2} \sqrt{ N(N+2) - m(m+2) } L_m (t)  \label{unpertEv}
\end{equation}
with $L_m :=\cos\left( \phi +\left( E_{m+2}-E_m \right)t \right)$ and
\begin{equation}
\begin{array}{lcl}
	\displaystyle \langle \hat{m} \rangle^{(1)} (t)& = &\displaystyle 2 \sum_{m=-N}^N  \Biggl\{\sum_{l=-N}^N a_{m,l} C_mC_l  \cos\theta \cos(E_l - E_m)t + \\ & &  \displaystyle \sin\theta \left( \sum_{l=-N}^{N-2} a_{m,l}    C_m C_{l+2} \sqrt{N(N+2) - l (l+2)} \cos \left[ \phi + (E_{l+2} - E_m )t \right] + \right.  \\ && \displaystyle   \left. \sum_{l=-N+2}^{N} a_{m,l} C_m C_{l-2} \sqrt{N(N+2) - l (l-2)} \cos \left[ \phi - (E_{l-2} - E_m )t \right] \right) \Biggr\}. \label{pertEv}
\end{array}
\end{equation}

\begin{figure}[t]
\psfrag{m}{$\langle \hat{m} \rangle (t)$}
\psfrag{t}{$t$}
\psfrag{a}{\textit{a)}}
\psfrag{b}{\textit{b)}}
\psfrag{c}{\textit{c)}}
\centering
\includegraphics[width=2in, height=2in]{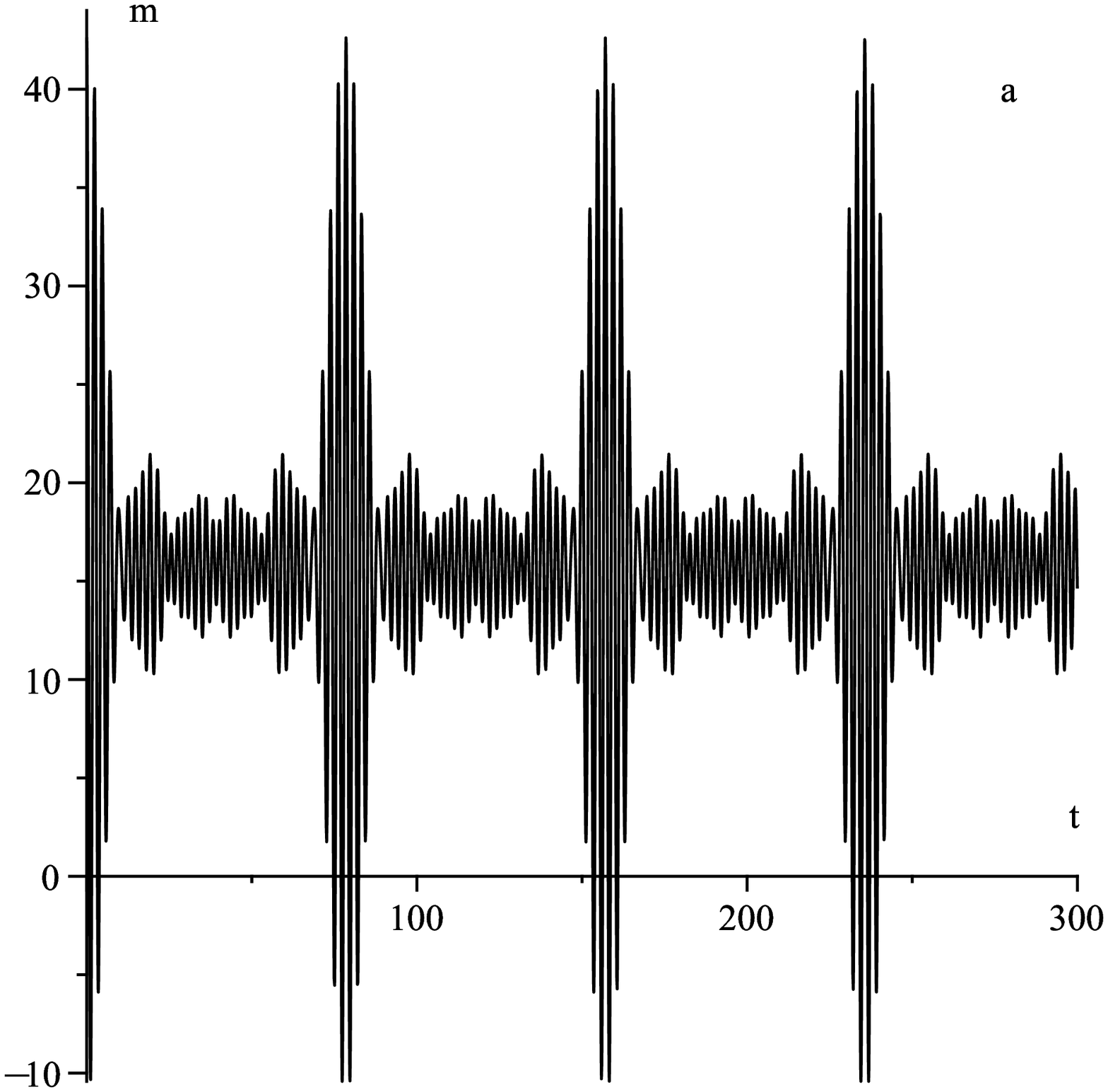}
\hspace{0.4in}
\includegraphics[width=2in, height=2in]{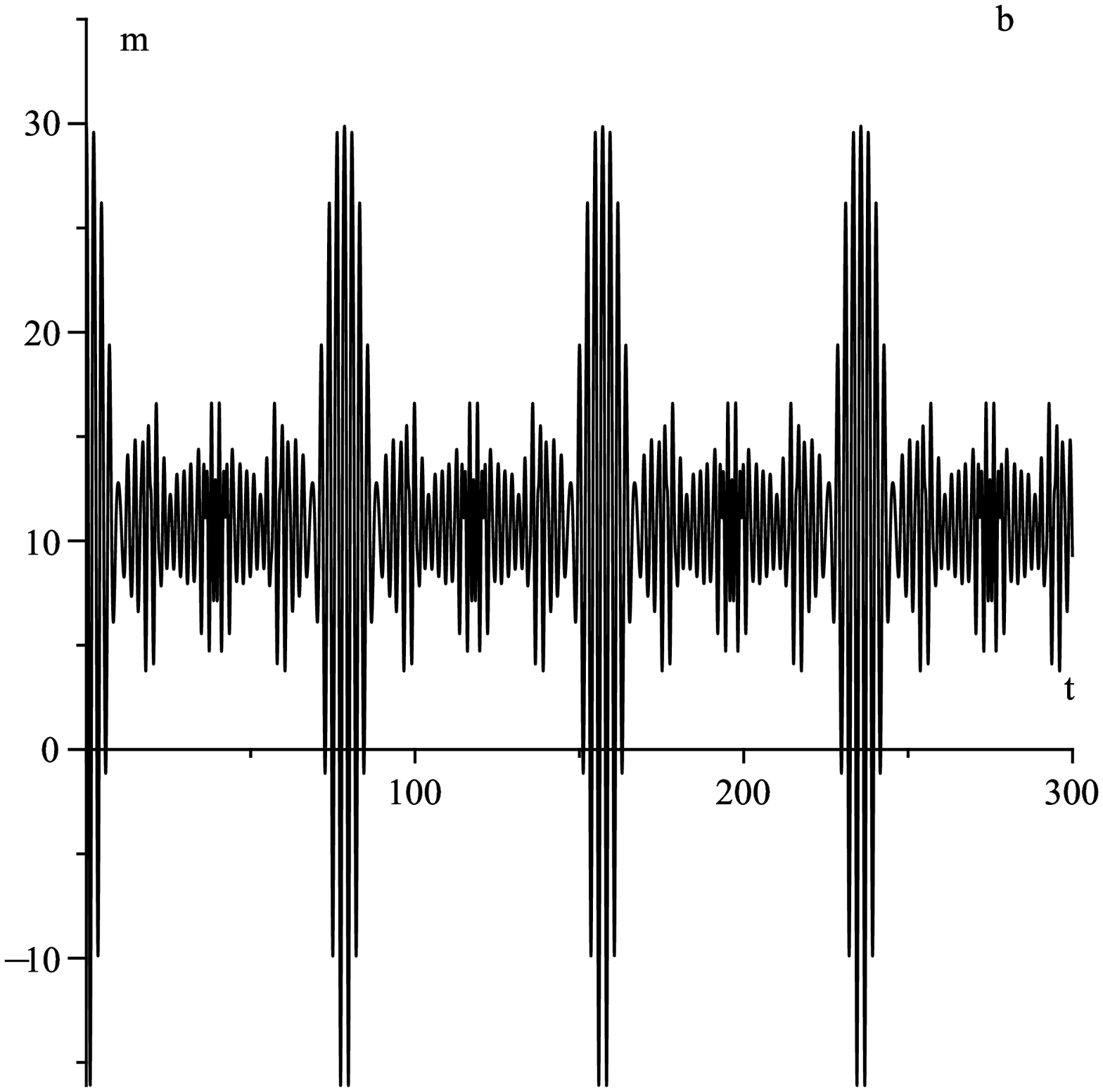}
\hspace{0.4in}
\includegraphics[width=2in, height=2in]{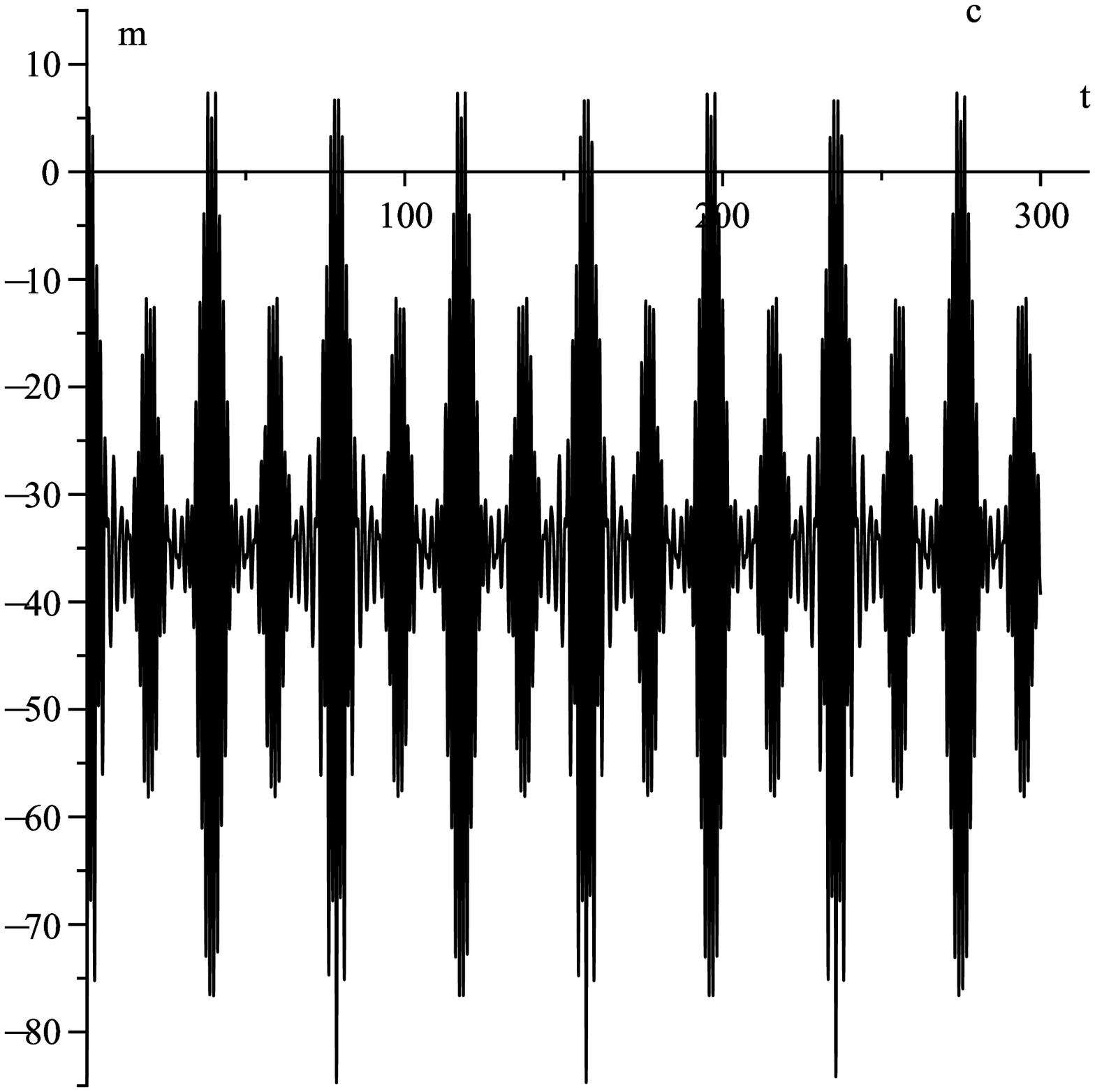}
\caption{Figure \textit{a)} plots the unperturbed evolution of relative probability $\langle \hat{m} \rangle (t)$ for $N=50$. Figures \textit{b)} and \textit{c)} plot the perturbed evolution $\langle \hat{m} \rangle^{(0+1)} (t)$ given by eqs. (\ref{unpertEv}) and (\ref{pertEv}) for $N=50$ with $\delta_{\omega}=\frac{1}{200}$ and $\delta_{\omega}=\frac{1}{20}$, respectively. }
\label{evoPlot}
\end{figure}

We plot in Figure \ref{evoPlot} \textit{a)} the unperturbed evolution of relative population given by eq. (\ref{unpertEv}).  We see the Rabi-like oscillations with relative population collapse and revival.  In Figures \ref{evoPlot} \textit{b)} and \textit{c)} we plot the evolution of relative population under the parameter perturbations $\delta_{\omega}=0.005$ and $\delta_{\omega}=0.05$, respectively.  We observe that as $\delta_{\omega}$ increases the time-averaged value of $\langle \hat{m} \rangle^{(0+1)}$ decreases;  from eq. (\ref{pertEv}) changing the sign of $\delta_{\omega}$ would have increased this average value.  We see in Figure \ref{evoPlot} \textit{c)} that the perturbation has broken down; the maximum value of $\vert \langle \hat{m} \rangle^{(0+1)} \vert$ is greater than the total particle number.  This breakdown  reminds us that we must not let the perturbation grow too large for our analysis to be reliable. We also see (in Figure \ref{evoPlot} \textit{c}) that as $\delta_{\omega}$ grows the time of population collapse significantly decreases.  Because of the complexity of the correction term eq. (\ref{pertEv}) an analytic study of the effects of the perturbations on population collapse and revival times is not possible.

>From eq. (\ref{unpertEv}) we see that in the unperturbed case an initially pure state remains pure.  Eq. (\ref{pertEv}) shows that the same holds true even in the perturbed cases.  Hence, $ \langle \hat{m} \rangle^{(0+1)} (t)$ has non-trivial time-dependence if and only if the initial state is entangled.

\section{Degenerate Perturbations}
Throughout the above analysis we have assumed that the unperturbed states in question are non-degenerate.  We proceed now to study the degenerate case, which results for specific values of the constants $A_1$ and $A_2$.  For fixed total particle number $N$ two distinct states, labeled by relative population numbers $m_1$ and $m_2$, have the same energy precisely when $m_1+m_2= -\frac{A_1}{A_2}$.  From the analysis of the perturbations completed above we know that the perturbation matrix elements are non-vanishing only if $m_1-m_2 \in \left\{ 2,4\right\}$, where we have assumed without loss of generality that $m_1 > m_2$.  Combining these two observations we see that there are at most two pairs of degenerate states.

Let us consider as an example the perturbation $\omega \mapsto \omega + \delta_{\omega}$ with $A_1=-\left[ N+(N-2) \right] A_2$, so that the only pair of degenerate states is $\vert N, N \rangle$ and $\vert N, N-2 \rangle$; each have energy $A_2N\left( 2-N \right)$. The matrix of interest is
\begin{equation}
\Delta_{\omega}= \left(
\begin{array}{cc} \langle N,N \vert \tilde{H}_{\omega} \vert N,N \rangle & \langle N,N \vert \tilde{H}_{\omega} \vert  N, N-2 \rangle \\
\langle N, N-2 \vert \tilde{H}_{\omega} \vert N, N \rangle & \langle N, N-2 \vert \tilde{H}_{\omega} \vert  N, N-2 \rangle \\
\end{array}
\right);
\end{equation}
its eigenvalues and eigenvectors yield the first-order energy and wave function corrections.  Using the calculations performed above it remains to find the solutions $\epsilon_{\pm}$ of the quadratic $\hbox{det}\left(\Delta_{\omega}-\epsilon I_{2\times 2} \right)=0$. The eigenvalues (energy corrections) and corresponding eigenvectors (wave function corrections) of $\Delta_{\omega}$ are
\begin{subequations}
\begin{equation}
\frac{ \epsilon_{\pm}}{\delta_{\omega} } = (N-1)\cos\theta \pm \sqrt{ N - (N-1)\cos^2 \theta }  \label{degen_En_omega}
\end{equation}
\begin{equation}
\vert \pm \rangle =a_{\pm} \left( \sqrt{N}\sin\theta \vert N, N-2 \rangle + \left( \cos\theta \pm \sqrt{N - (N-1)\cos^2 \theta} \right) \vert N, N \rangle \right)
\end{equation}
\end{subequations}
with $a_{\pm}$  a suitable normalization constant.

Figure \ref{epsilonPlot_omega} \textit{a)} plots $\epsilon_{\pm}$ as a function of $\theta$ for $N=1000$ and $\delta_{\omega}=0.01$. We see that the energy is lowered for all values of $\theta$; had $\delta_{\omega}$ been negative the opposite would have been true.  We see from Figures \ref{epsilonPlot_omega} \textit{b)} and \textit{c)}, which plot the perturbed particle distributions for $\delta_{\omega}=\pm 0.01$, that even for very small perturbations in $\omega$ there is a noticeable change in the particle distribution.  This is in contrast to the non-degenerate perturbation of $\omega$, where even for $\delta_{\omega} = 15$ there was not a large change in the particle distribution.  The induced change in the particle distribution is also qualitatively different from that in the non-degenerate case.  This arises because the correction in the degenerate case is sinusoidal with frequency much greater than that of the unperturbed particle distribution. Regardless of the sign of $\delta_{\omega}$ we see that the central maximum of the particle distributions is shifted to smaller values of $m$.

We plot in Figure \ref{ent_Degen_omega} \textit{a)} the perturbed entanglement as a function of $\theta$ and $m_0$ for $\delta_{\omega}=0.005$ and $N=100$, while \textit{b)} of the same figure plots the difference $\Delta S$ for the same configuration.  Even for a small perturbation strength $\delta_{\omega}=0.005$ the perturbation to the entanglement is  still significant, again showing that a degenerate system is more sensitive to perturbations than is the non-degenerate case.  We also observe that the largest perturbations are present for $m_0$ close to $-N$ and $\theta \approx \frac{\pi}{4}, \frac{3\pi}{4}$. For $m_0$ close to $N$, the perturbations become negligible, regardless of $\theta$. Therefore, we find that high collision rates also help stabilize the condensate against perturbations in the degenerate case. However, in this case, condensates with positive scattering lengths are more stable.

Similarly, we find that for a perturbation $\lambda \mapsto \lambda + \delta_{\lambda}$, assuming again that $\vert N, N \rangle$ and $\vert N, N-2\rangle$ are the degenerate states, the energy corrections are
\begin{equation}
\frac{ \epsilon_{\pm}}{\delta_{\lambda} } = (N-1)\sin\theta \pm \sqrt{N-(N-1)\sin^2\theta},   \label{degen_En_lambda}
\end{equation}
which is just the energy correction equation for perturbations in $\omega$ with the substitution  $\cos\theta \mapsto \sin\theta$; we obtain the wavefunction corrections $\vert \pm \rangle$ for $\delta_{\lambda}$ in the same manner.  Figure \ref{epsilonPlot_lambda} plots the perturbed energy and perturbed particle distributions for $\delta_{\lambda}=1$ and $\delta_{\lambda}=\pm 0.01$.  The same comments made above about Figure \ref{epsilonPlot_omega} for perturbations $\delta_{\omega}$ hold in this case as well. In Figure \ref{ent_Degen_lambda} we plot the perturbed entanglement and entanglement difference $\Delta S$ for $\delta_{\lambda}=0.005$.  The figure shows similar behaviour to Figure \ref{ent_Degen_omega}.  However, we note that in the case of $\lambda$ perturbations $\Delta S$ is strictly positive (for positive $\delta_{\lambda}$).  There are again extrema (this time both maxima) for $m_0=-N$ and $\theta \approx  \frac{\pi}{4}, \frac{3\pi}{4}$, with $\Delta S$ vanishing as $m_0$ approaches $N$.  Also as in the case above, the system is very sensitive to perturbations, with $\Delta S \approx 10$ at some points.

\begin{figure}[t]
\psfrag{a}{\textit{a})}
\psfrag{b}{\textit{b})}
\psfrag{c}{\textit{c})}
\psfrag{th}{$\theta$}
\psfrag{en}{$\epsilon$}
\psfrag{theta}{" "}
\psfrag{m}{$m$}
\psfrag{P}{$P$}
\centering
\includegraphics[width=2in, height=2in]{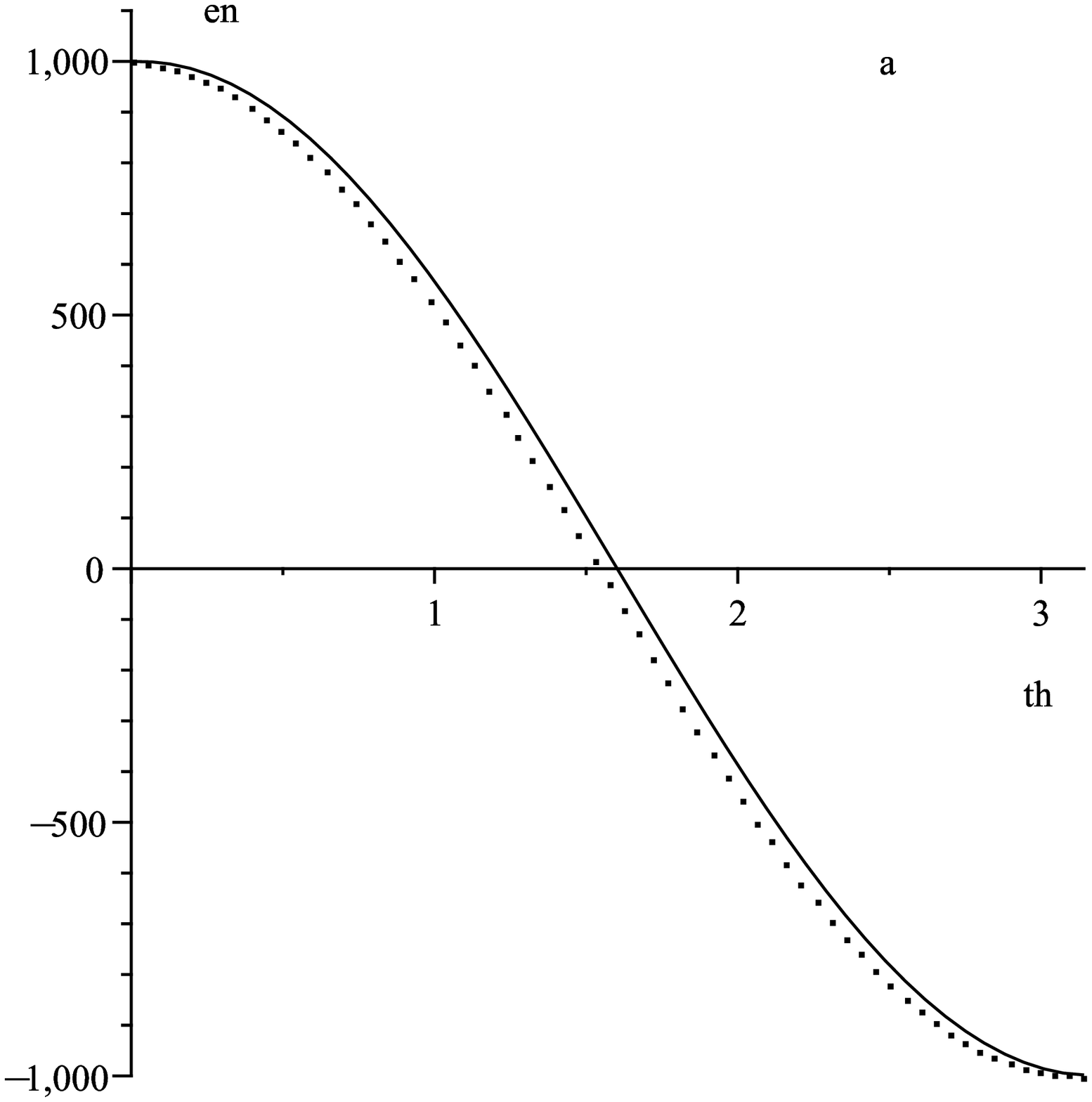}
\hspace{0.4in}
\includegraphics[width=2in, height=2in]{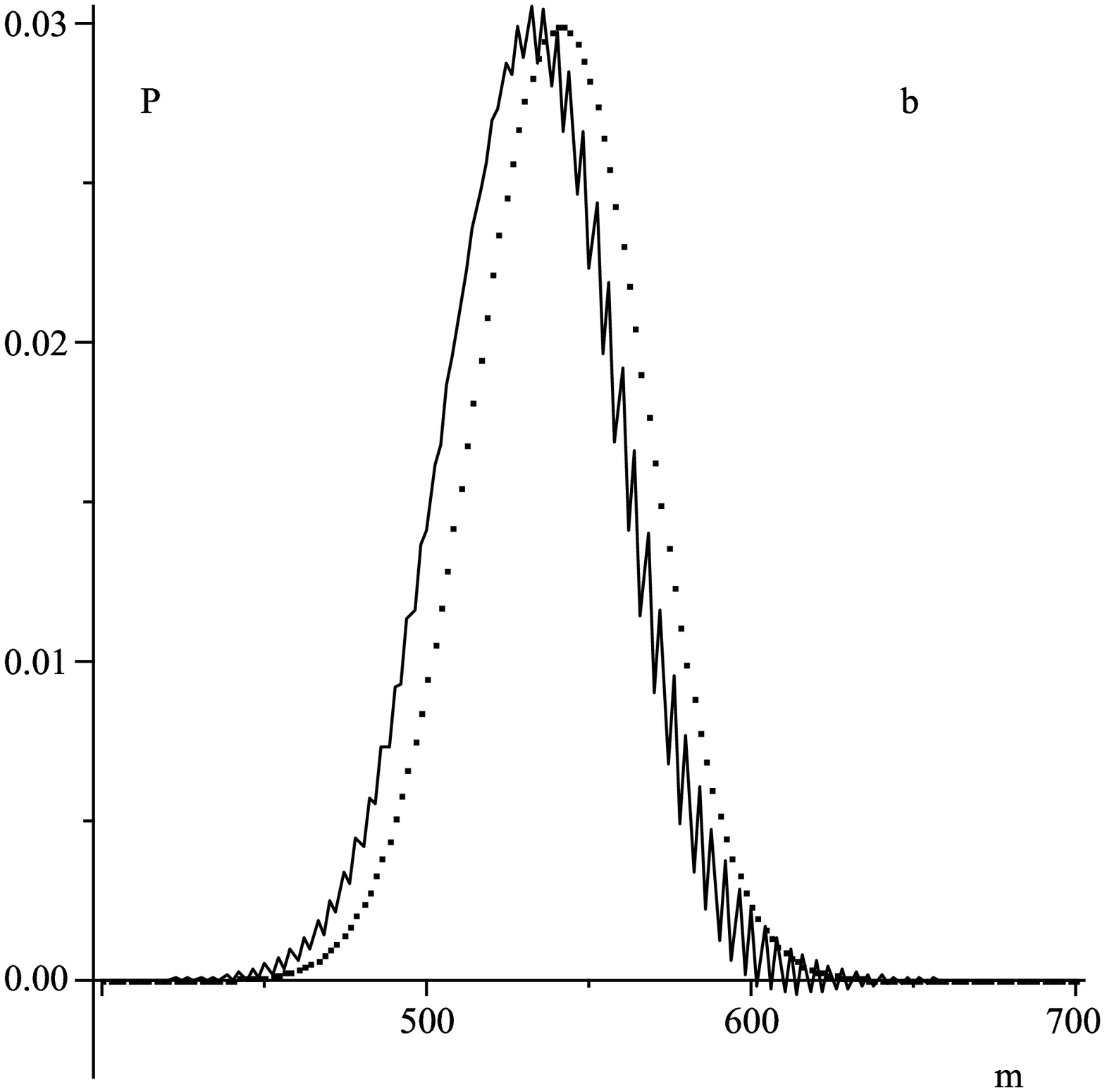}
\hspace{0.4in}
\includegraphics[width=2in, height=2in]{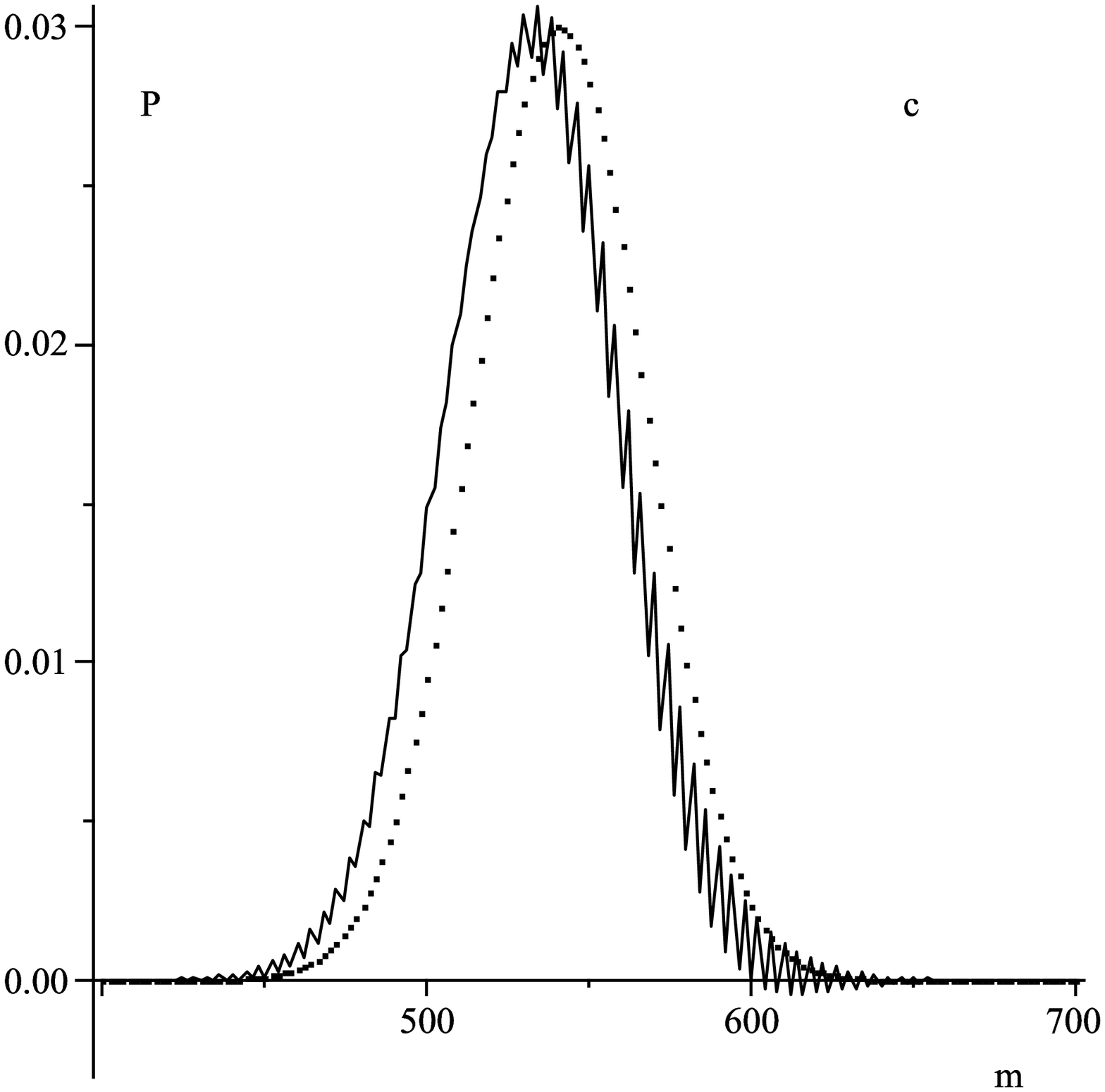}
\caption{Figure \textit{a)} plots $\epsilon_{+}$ (solid) and $\epsilon_{-}$ (dot) from eq. (\ref{degen_En_omega}) for degenerate perturbations with $N=1000$ and $\delta_{\omega}=1$ as a function of $\theta$.   Figures \textit{b)} and \textit{c)} plot the corresponding unperturbed (dot) and perturbed (solid) particle distributions $P$ with $N=1000$, $\theta=1$ for $\delta_{\omega}=0.01$ and $\delta_{\omega}=-0.01$, respectively.}
\label{epsilonPlot_omega}
\end{figure}

\begin{figure}[t]
\centering
\includegraphics[width=2in, height=2in]{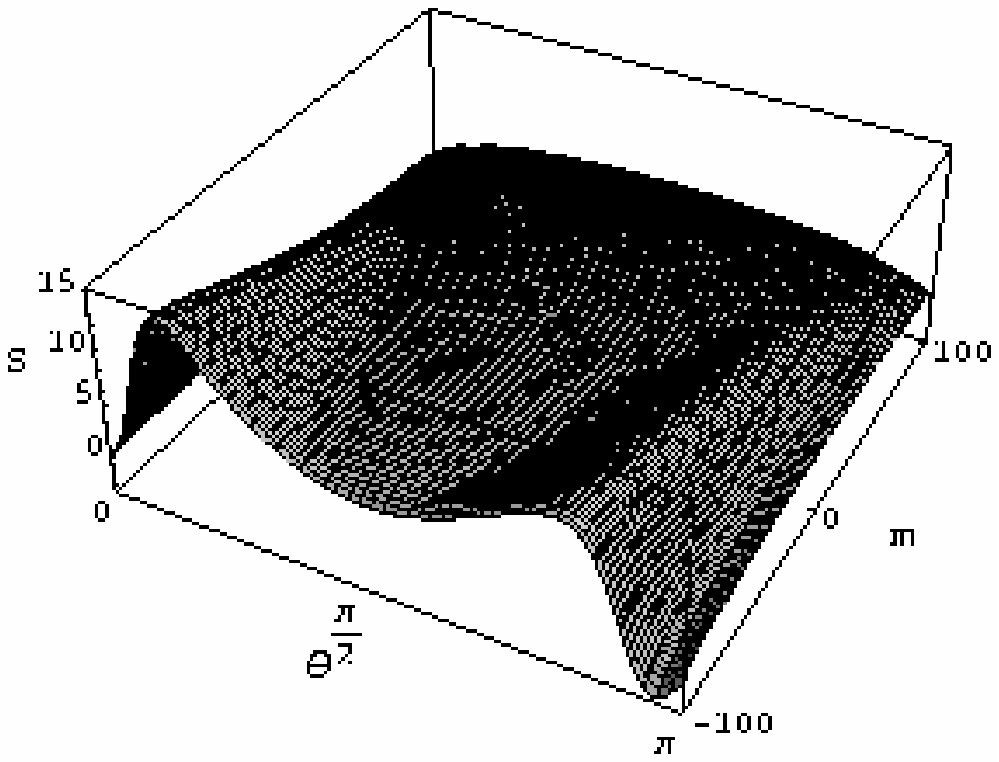}
\hspace{0.4in}
\includegraphics[width=2in, height=2in]{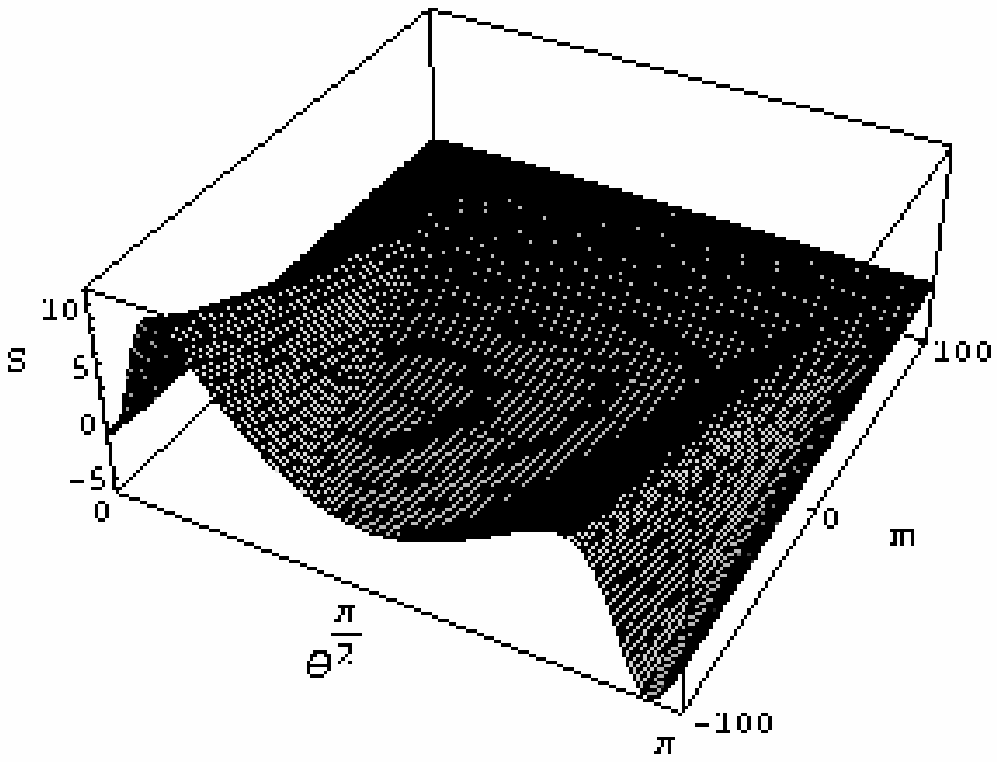}
\hspace{0.4in}
\caption{The figures show the \textit{a)} perturbed entanglement for $\delta_{\omega}=0.005$ as a function of $m_0$ and $\theta$ for $N=100$ and \textit{b)} the difference $S^{(0+1)}-S^{(0)}$ with $\delta_{\omega}=0.005$ for $N=100$.}
\label{ent_Degen_omega}
\end{figure}

\begin{figure}[t]
\psfrag{a}{\textit{a})}
\psfrag{b}{\textit{b})}
\psfrag{c}{\textit{c})}
\psfrag{th}{$\theta$}
\psfrag{en}{$\epsilon$}
\psfrag{theta}{" "}
\psfrag{m}{$m$}
\psfrag{P}{$P$}\centering
\includegraphics[width=2in, height=2in]{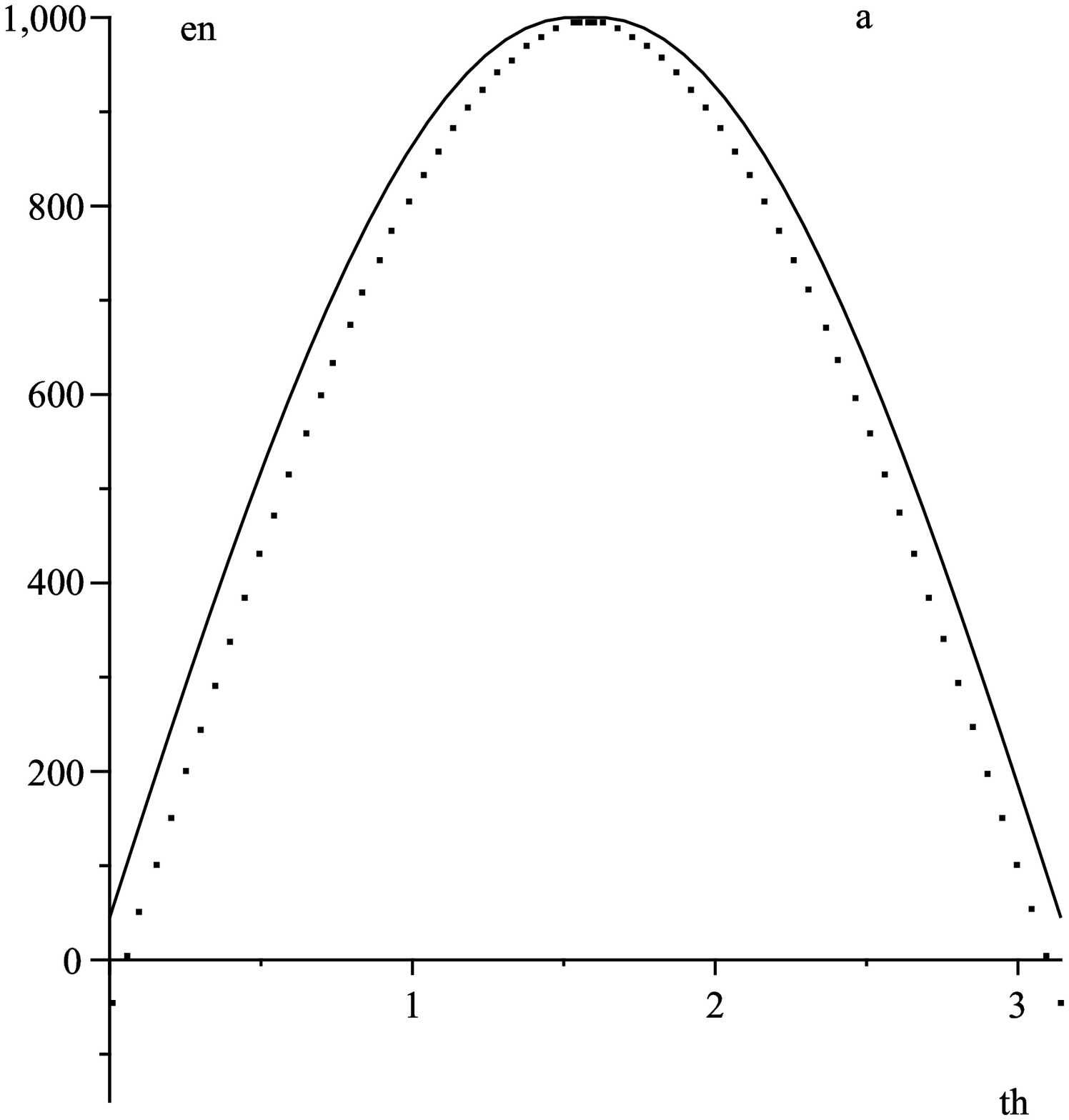}
\hspace{0.4in}
\includegraphics[width=2in, height=2in]{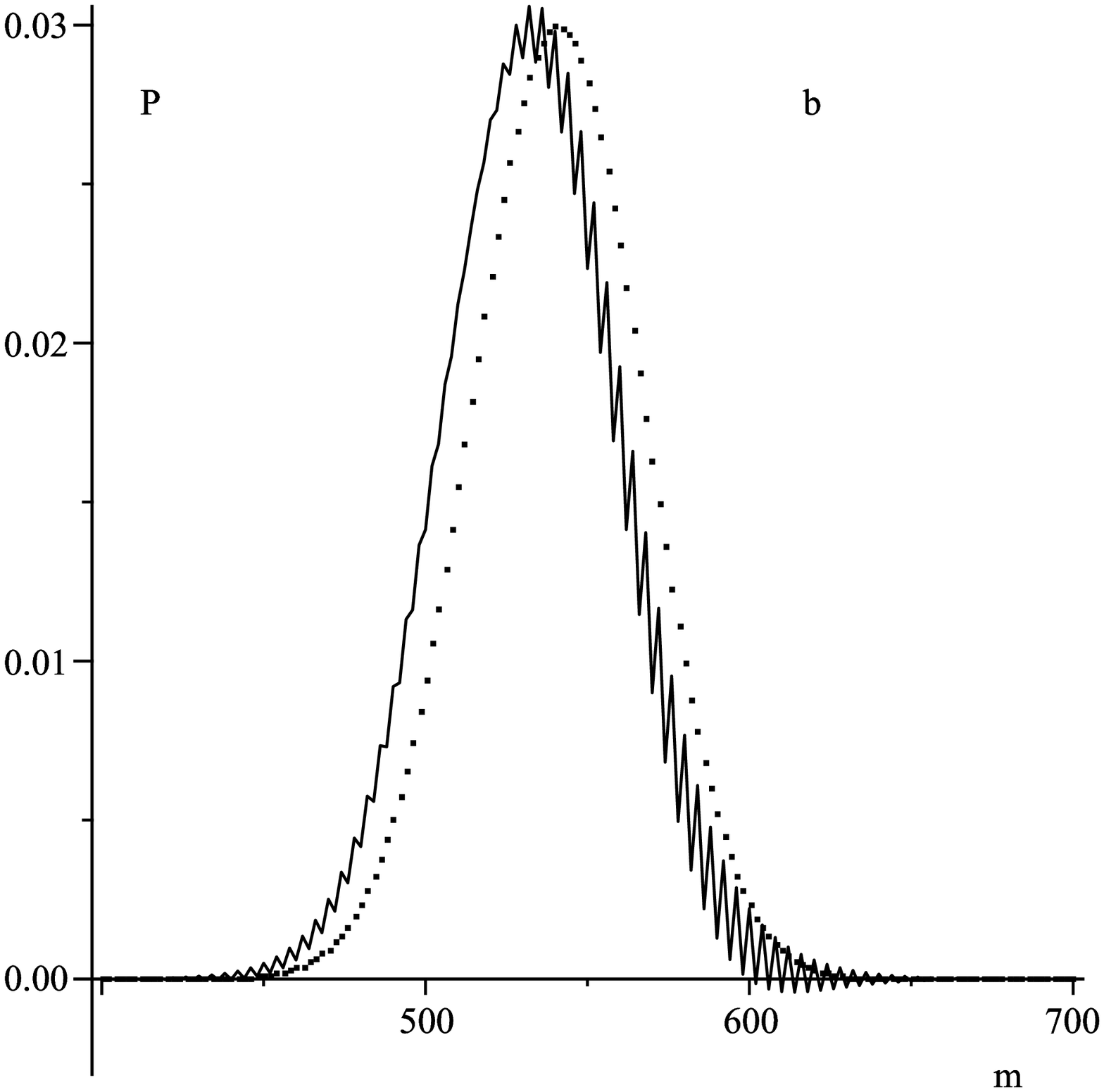}
\hspace{0.4in}
\includegraphics[width=2in, height=2in]{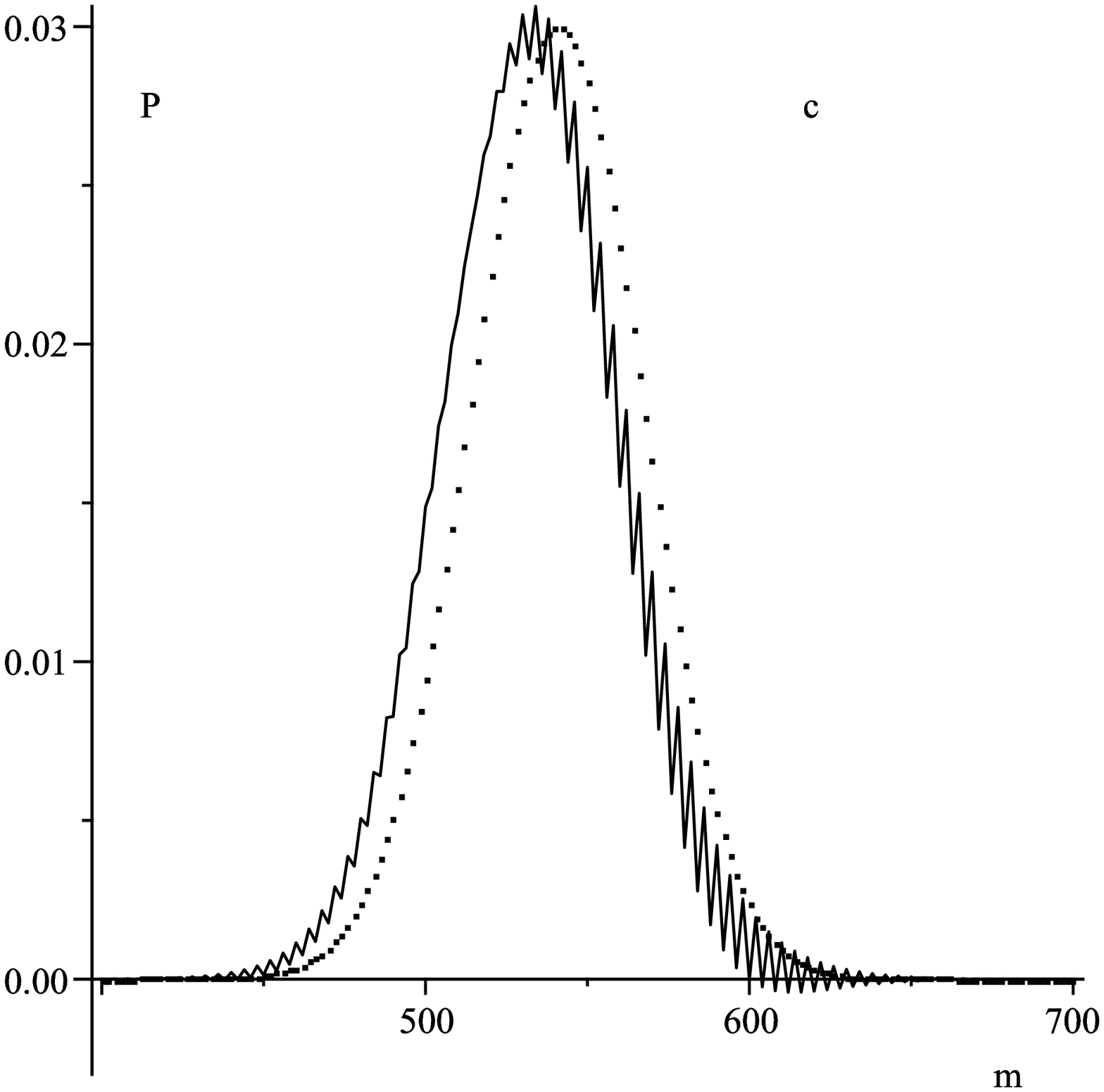}
\caption{Figure \textit{a)} plots $\epsilon_{+}$ (solid) and $\epsilon_{-}$ (dot) from eq. (\ref{degen_En_lambda}) for degenerate perturbations with $N=1000$ and $\delta_{\lambda}=1$ as a function of $\theta$.  Figures \textit{b)} and \textit{c)} plot the corresponding unperturbed (dot) and perturbed (solid) particle distributions $P$ with $N=1000$, $\theta=1$ for $\delta_{\lambda}=0.01$ and $\delta_{\lambda}=-0.01$, respectively.}
\label{epsilonPlot_lambda}
\end{figure}

\begin{figure}[t]
\centering
\includegraphics[width=2in, height=2in]{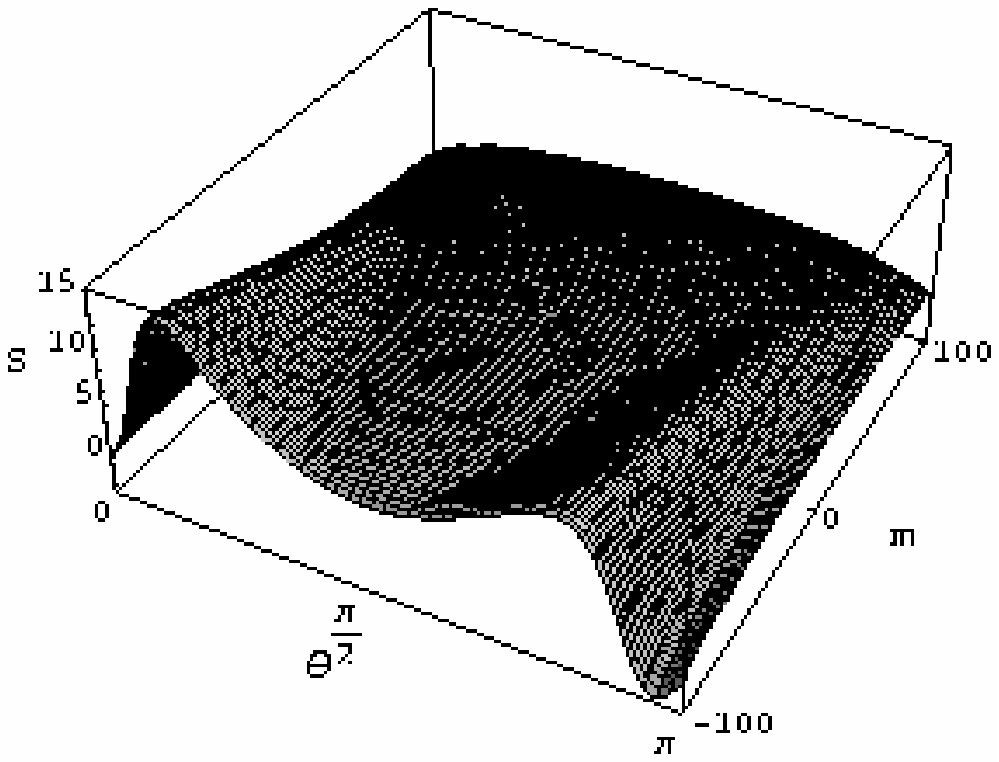}
\hspace{0.4in}
\includegraphics[width=2in, height=2in]{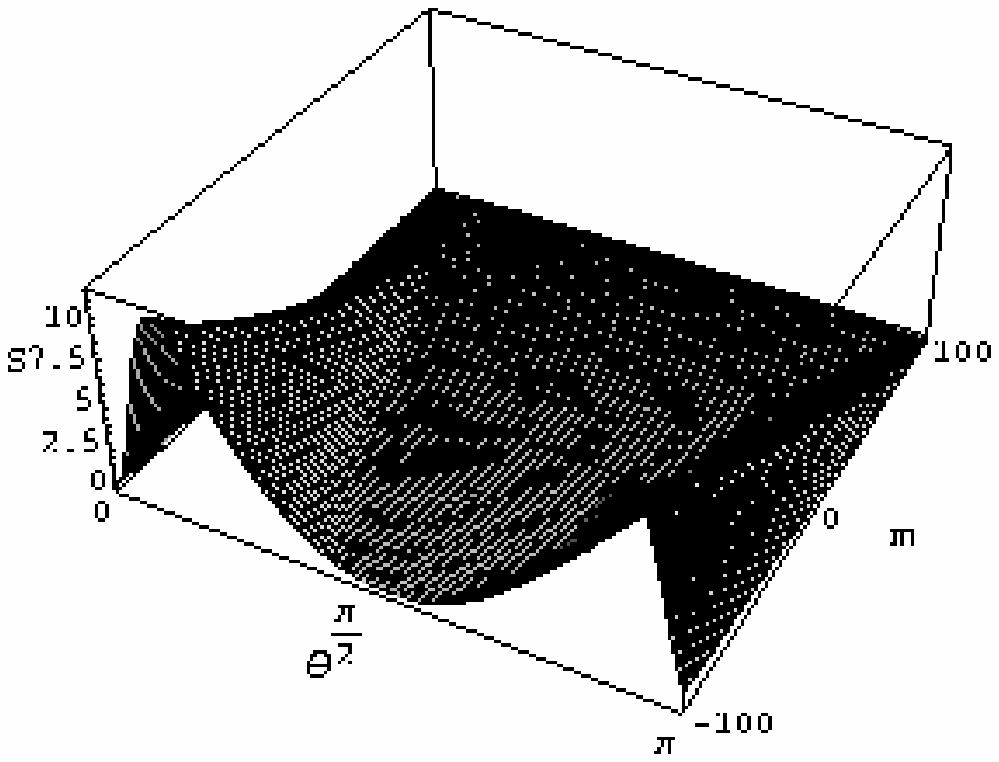}
\hspace{0.4in}
\caption{The figures show the \textit{a)} perturbed entanglement for $\delta_{\lambda}=0.005$ as a function of $m_0$ and $\theta$ for $N=100$ and \textit{b)} the difference $S^{(0+1)}-S^{(0)}$ with $\delta_{\lambda}=0.005$ for $N=100$.}
\label{ent_Degen_lambda}
\end{figure}

The study of degenerate perturbations in the remaining parameters is completed in the same way as is done above, so we omit these. Note that high collision rates help stabilize the condensate against perturbations in $\lambda$ for both positive and negative scattering lengths.

\section{External Perturbations}
As mentioned in the introduction it is of interest to study perturbations that model additional interaction terms not included in the original Hamiltonian eq. (\ref{becHam}).  We will largely be interested in perturbations that do not preserve the total number of particles in the system.  We begin with a general discussion of loss terms and proceed to use this formalism to discuss the effects of background collisions and three-body recombination.

\subsection{General Loss Terms}

Interactions in BECs that do not preserve the total particle number are often minimized in experimental settings as there is currently no known analytical model that involves such terms.  While the Hamiltonian $H_2$ we are studying also does not include such terms, in the case that the effects of particle loss terms are expected to be minimal, we may treat loss terms as perturbations to the system.  A primary source of particle loss is inelastic collisions \cite{corne,holzm}.  In magnetic traps particle-type exchange terms dominate loss mechanisms, while in optical traps, these terms may be neglected \cite{stamp}. It is thus of interest to obtain predictions of the effects of loss terms by treating them as perturbations to the exactly solvable $2$-model.

The most general loss term can be written as
\begin{equation}
H_{loss} = \sum_{k_a=0}^{N_a} \sum_{k_b=0}^{N_b} f_{k_a, k_b} \left( \hat{n}_a, \hat{n}_b \right) a^{k_a}b^{k_b}   \label{lossHam}
\end{equation}
where $f_{k_a, k_b} \left( \hat{n}_a, \hat{n}_b \right)$ are some functions and $N= N_a + N_b$ is the total particle number of the condensate. By probabilistic arguments we expect $f_{k_a, k_b} \rightarrow 0$ as $k_a + k_b$ grows, so that the cases of primary interest are those with $k_a$ and $k_b$ small. However, it should be noted that higher order collisions (i.e. not just $k_a=1,2, k_b=1,2$) are of physical significance, particularly when the condensate is in its coldest phase and of high particle density \cite{corne, holzm}.  By suitably choosing $f_{k_a, k_b}$ we can model specific loss terms.

In order to study loss terms we must first make some adjustments to the analysis performed above.  The full Hilbert space $\mathcal{H}$ of the Hamiltonian eq. (\ref{simpHam}) can be orthogonally decomposed as the Fock space $\mathcal{H} = \bigoplus_{n=0}^{\infty} \mathcal{H}_n$ where, in the $\{ N, m \}$ basis, $\mathcal{H}_n := \left\{ \; \vert n, m \rangle \;\; \vert \;\; m= -n, -n+2, \dots, n-2,n \right\} $. Since the Hamiltonian and perturbations we have considered so far have all commuted with the total number operator we have been able to first choose a total particle number $N$ for the system, or equivalently the subspace $\mathcal{H}_N \subset \mathcal{H}$ of the total Hilbert space, and then proceed with calculations.  In order to study loss terms we must enlarge the state space to be $\mathcal{H}_{\mathcal{A}} = \bigoplus_{n\in \mathcal{A} } \mathcal{H}_n$ where $\mathcal{A} \subset \left\{ 0 \right\} \bigcup \mathbb{N} $ is the set of all accessible total particle numbers.  For example, if $H_{loss} \propto a^2$ then $\mathcal{A} = \left\{ N, N-2 \right\}$.

The energy of the state $\vert N,m \rangle$ is $E_m = A_1m +A_2 m^2$.  Although $E_m$ is only functionally dependent on $m$, it has an implicit dependence on $N$ since $N$ restricts the values of $m$.  Hence, while $\vert N, m \rangle $ is non-degenerate as an element of $\mathcal{H}_N$, it may be degenerate as an element of $\mathcal{H}_{\mathcal{A}}$ depending on $m$ and $\mathcal{A}$.  Again, considering the example in which $H_{loss} \propto a^2$ we see that $\vert  N, m \rangle \in \mathcal{H}_N \oplus \mathcal{H}_{N-2}$ is non-degenerate if $m= \pm N$ and degenerate otherwise.  In general, let $\mathcal{S}_{\mathcal{A}} (m) = \left\{ \vert n, m \rangle \; \vert \; n\in \mathcal{A} \right\}$ and say that $\mathcal{S}_{\mathcal{A}} (m)$ is degenerate if it contains more than one element and say it is non-degenerate otherwise.  We examine the effects of the degeneracy $\mathcal{S}_{\mathcal{A}} (m)$ below. Note that the degeneracy studied in this section is caused by the interactions (which determine the Hilbert space), whereas the degeneracy studied in the previous section was caused by a specific choice of the coupling constants $A_1$ and $A_2$.

In order to study the particle distribution of the perturbed state $\vert N, m_0\rangle^{(0+1)}$ we must modify eq. (\ref{probDens}); if we were to use this formula there would be no perturbative effects on the particle distribution since $\mathcal{H}_i$ and $\mathcal{H}_{j}$ are orthogonal if $i\neq j$.  A suitable generalization is given by
\begin{equation}
P_{gen} =  \left| \sum_{n\in\mathcal{A} } \langle n , m \vert U^{\dagger} \vert N,m_0 \rangle^{(0+1)} \right|^2.  \label{Pgen}
\end{equation}
Observe that in the case of no loss terms $\mathcal{A} = \left\{ N \right\}$ so that $P_{gen}$ reduces to eq. (\ref{probDens}).

\subsection{ Effects of the Degeneracy of $\mathcal{S}_{\mathcal{A}} (m)$}
If $\mathcal{S}_{\mathcal{A}} (m)$ is degenerate it is easy to see that the matrix of interest to degenerate perturbation theory $\left( \langle n_i , m \vert \tilde{H}_{loss} \vert n_j, m \rangle \right)_{n_i,n_j \in \mathcal{A} }$ is triangular with zeros along the diagonal.  Indeed, $H_{loss}$ contains only annihilation terms, and conjugation by $U$, denoted here by $\sim$, maps annihilation operators to annihilation operators.  So $\tilde{H}_{loss} \vert n_j, m \rangle$ is a sum of states with total particle number $n$ less than $n_j$, showing the matrix at hand is triangular, and thus has a trivial spectrum consisting of only zeros.  Hence we can learn nothing from first order perturbation theory.

Alternatively, if $\mathcal{S}_{\mathcal{A}} (m)$ is non-degenerate we may apply the tools of non-degenerate perturbation theory.  Although it is straightforward to compute the matrix elements of $H_{loss}$ given by eq. (\ref{lossHam}) in  general, the requirement that $\mathcal{S}_{\mathcal{A}} (m)$ be non-degenerate severely limits the usefulness of such a calculation.  We instead focus on some specific choices of $f_{k_a, k_b}$ that model interactions of physical interest.

\subsubsection{Background Collisions}

Background collisions most often occur in BECs when particles from the condensate collide with a residual background gas in the condensate chamber, or alternatively, with metastable atoms within the condensate \cite{sacke}. Background collisions become more important as the density of the condensate increases.  As a simple illustration of how we may treat background collisions as perturbations, consider the case in which the initial state is $\vert N, N\rangle$, so that only $a$ mode particles can be ejected. The general loss term\footnote{Explicitly, we take $f_{k_a,k_b}=0$ if $k_b\neq 0$ and $f_{k_a,0}=\alpha_{k_a} \in \mathbb{R}$ otherwise.} eq. (\ref{lossHam}), after conjugation by $U$, is written as $\tilde{H}_{loss} = \sum_{k=1}^N \alpha_k \cos^k \frac{1}{2}\theta a^k + O(b)$ where we use $O(b)$ to denote terms with more $b$ powers than $b^{\dagger}$ powers.  Note that any term of $O(b)$ annihilates $\vert N, N \rangle$ so that the perturbative correction can be found by neglecting all such terms.  We compute the desired matrix elements to find that
\begin{equation}
\vert N, N \rangle^{(1)} = \sum_{k=1}^N \frac{\alpha_k\cos^k \frac{\theta}{2} \sqrt{ \prod_{j=0}^{k-1} \left( N - j \right) } }{kA_1 + k(2N - k)A_2} \vert N-k, N-k \rangle.
\end{equation}
which yields the generalized particle distribution
\begin{equation}
P_{gen}= \left| d_{m,N}^N \right|^2 +2\sum_{k=1}^N\frac{ \alpha_k\cos^k \frac{\theta}{2} \sqrt{ \prod_{j=0}^{k-1} \left( N - j \right) } }{kA_1 + k(2N - k)A_2} d_{m,N}^N d_{m,N-k}^{N-k}  \label{lossPD}
\end{equation}
where we set $d_{m,N-k}^{N-k} =0$ if $\vert m \vert > N-k$.  From this expression we see that we could have omitted terms that eject an odd number of $a$-mode particles.  Indeed, if $k$ is odd, then $d_{m,N}^N d_{m,N-k}^{N-k}$ is identically zero as a function of $m$ since if $d_{m,n}^N \neq 0$ then $N,m$ and $n$ all have the same parity.\footnote{We can, however, use second order perturbation theory. In this case terms annihilating an odd number of particles will have an effect on the particle distribution.}  We plot $P_{gen}$ in Figure \ref{BGCPlots}, considering terms that eject $2,4$ and $6$ particles for $N=1000$ and $\theta=1$.  In Figure \ref{BGCPlots}\textit{a)} we set $\alpha_2=-0.1$, $\alpha_4=-0.001$ and $\alpha_6 =-5 \times 10^{-6}$.  In Figures \ref{BGCPlots}\textit{b)} - \ref{BGCPlots}\textit{d)} we increase each of the $\alpha_i$ by a factor of $2$. The figures show that increasing each $\vert \alpha_i \vert$ decreases the height of the particle distribution. Also evident from the figures and the scale of the $\alpha_i$ is that as $i$ increases the perturbations have a larger effect.  That the perturbations do not blow up reflects the requirement that $f_{k_a, k_b} \rightarrow 0$ for large $k_a + k_b$. Indeed, we have found  using numerical simulations that the perturbations that eject $k$ particles diverge with increasing $k$.

\begin{figure}[t]
\psfrag{a}{\textit{a})}
\psfrag{b}{\textit{b})}
\psfrag{c}{\textit{c})}
\psfrag{d}{\textit{d})}
\psfrag{m}{$m$}
\psfrag{P}{$P$}
\centering
\includegraphics[width=1.2in, height=1.2in]{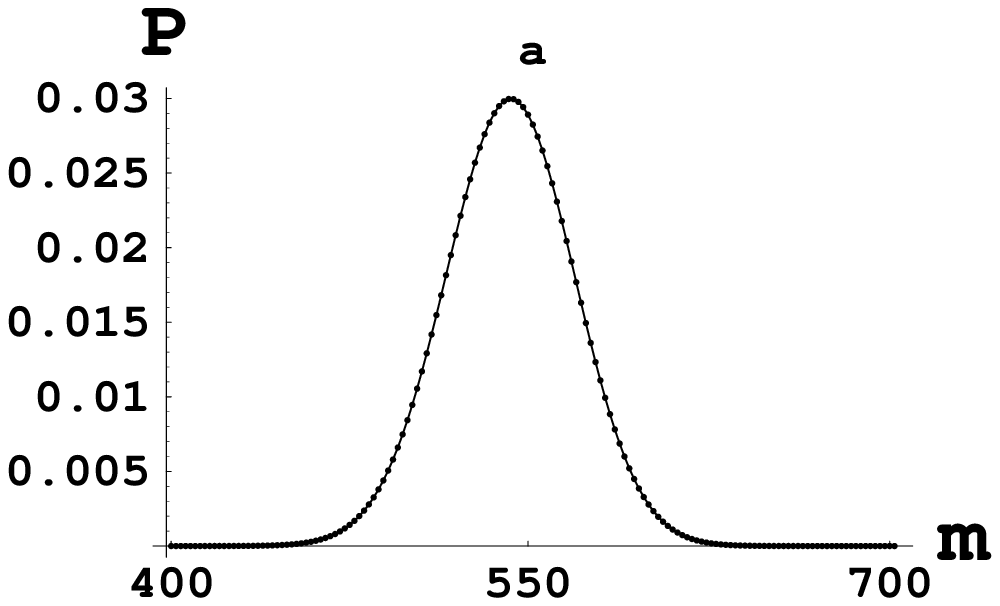}
\hspace{0.3in}
\includegraphics[width=1.2in, height=1.2in]{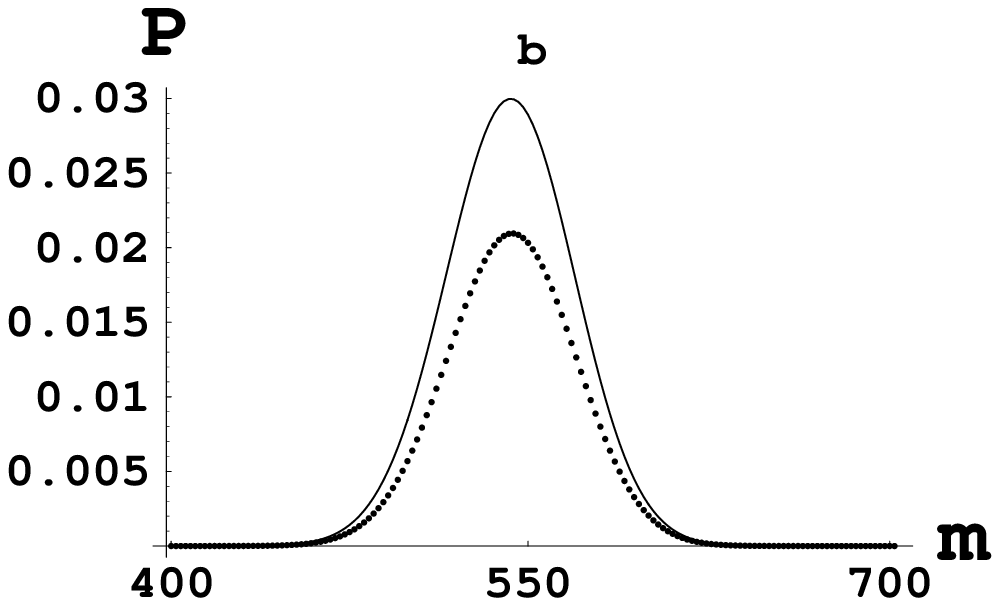}
\hspace{0.3in}
\includegraphics[width=1.2in, height=1.2in]{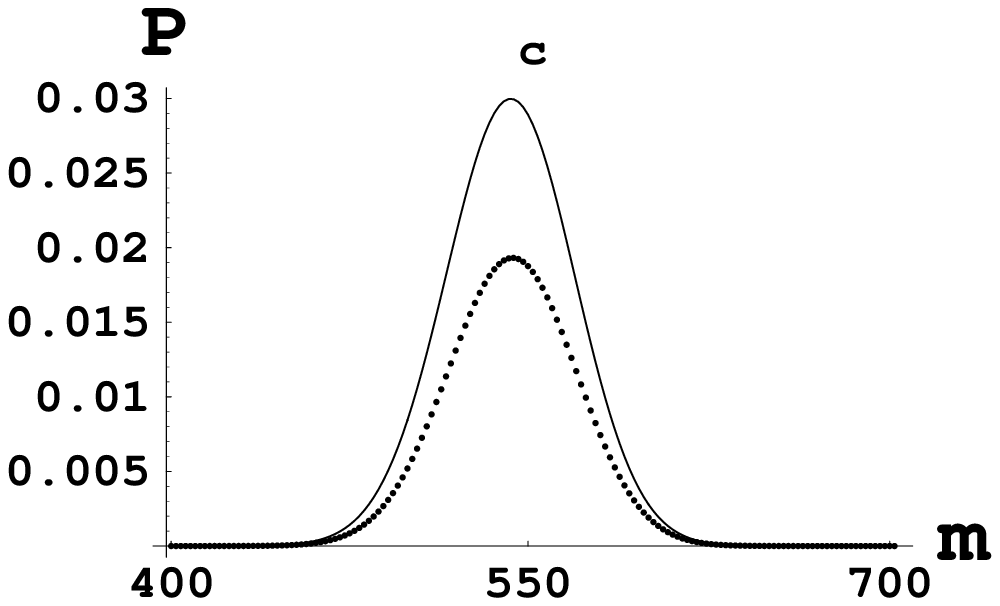}
\hspace{0.3in}
\includegraphics[width=1.2in, height=1.2in]{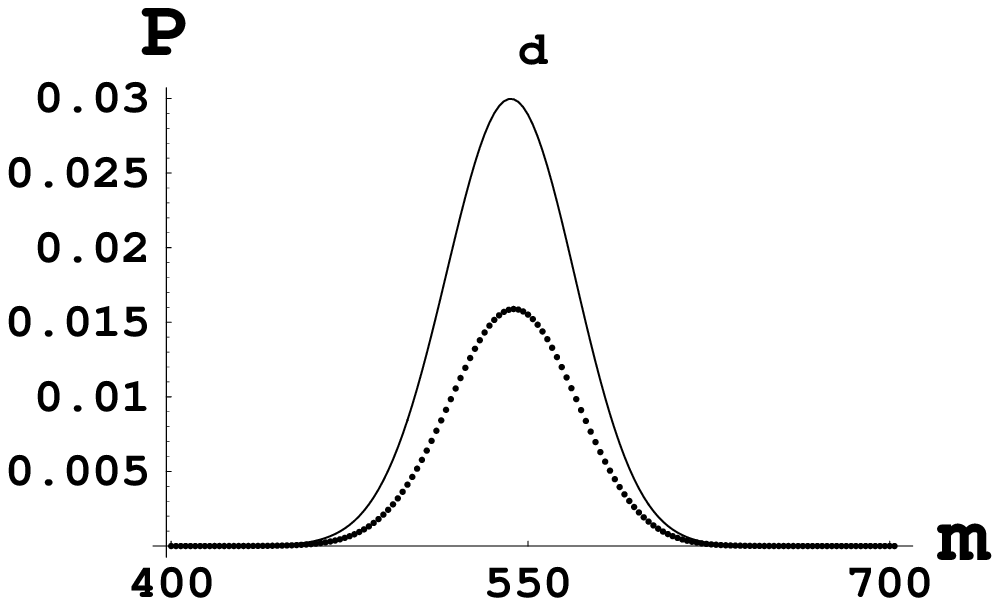}
\caption{The unperturbed (solid) and perturbed (dotted) particle distributions, the latter given by eq. (\ref{lossPD}), for $N=1000$ with \textit{a)} $\alpha_2=-\frac{1}{10}$, $\alpha_4=-\frac{1}{1000}$ and $\alpha_6=-\frac{1}{200000}$, \textit{b)} $\alpha_2=-\frac{2}{10}$, $\alpha_4=-\frac{1}{1000}$, $\alpha_6=-\frac{1}{200000}$,  \textit{c)} $a_2=-\frac{1}{10}$, $\alpha_4=-\frac{2}{1000}$ and $\alpha_6=-\frac{1}{200000}$ and \textit{d)} $\alpha_2=-\frac{1}{10}$, $\alpha_4=-\frac{1}{1000}$ and $\alpha_6=-\frac{2}{200000}$. }
\label{BGCPlots}
\end{figure}

We remark here that as a consequence of $d_{m,N}^N d_{m,N-k}^{N-k}$ vanishing for $k$ odd we cannot learn anything about spin-flip terms from first order perturbation theory.  Indeed, such terms would be modeled by perturbations of the form $b^{\dagger}aa$ and $a^{\dagger}aa$, assuming particles in the $a$ mode have greatest energy; these terms clearly reduce the total number of the system by an odd number.

Setting $P^{(1)}$ equal to the second term on the right hand side of eq. (\ref{lossPD}) we can use eqs. (\ref{pertEnt}-\ref{incEnt}) to compute and increase the entanglement, respectively.  Since $P^{(1)}$ depends linearly on the interaction strengths $\alpha_k$ we see the most obvious manner in which to increase the entanglement is to make $\vert \alpha_k \vert$ large; the sign of $\alpha_k$ depends on $k, \theta$ as well as $A_1$ and $A_2$.

\begin{figure}[t]
\centering
\includegraphics[width=2in, height=2in]{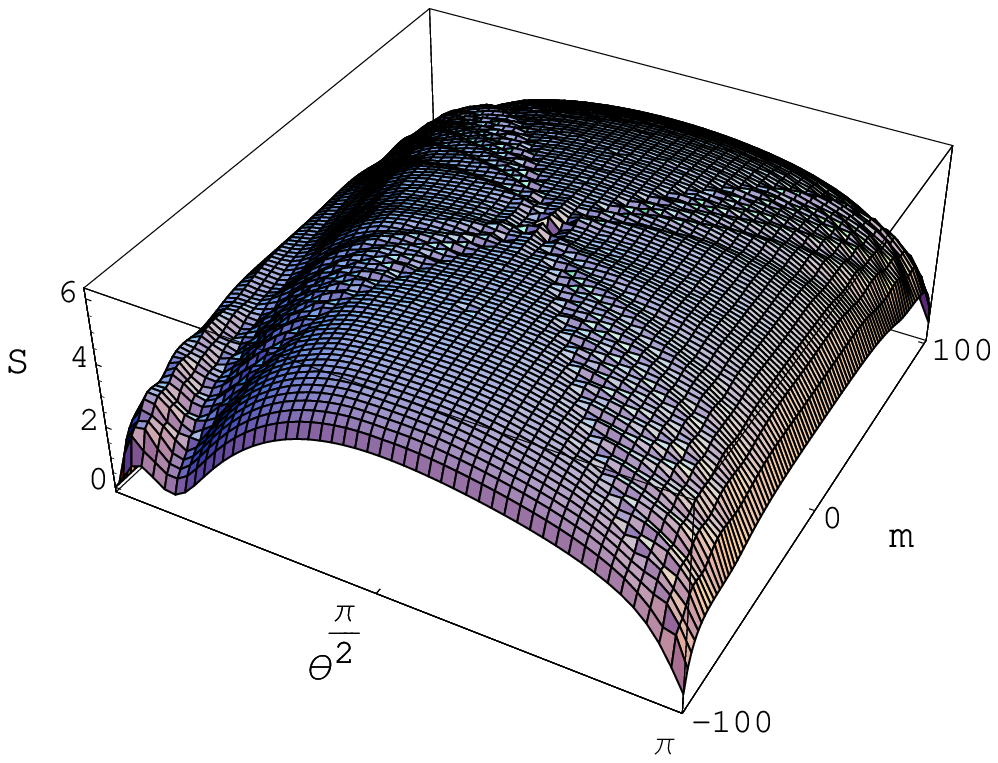}
\hspace{0.4in}
\includegraphics[width=2in, height=2in]{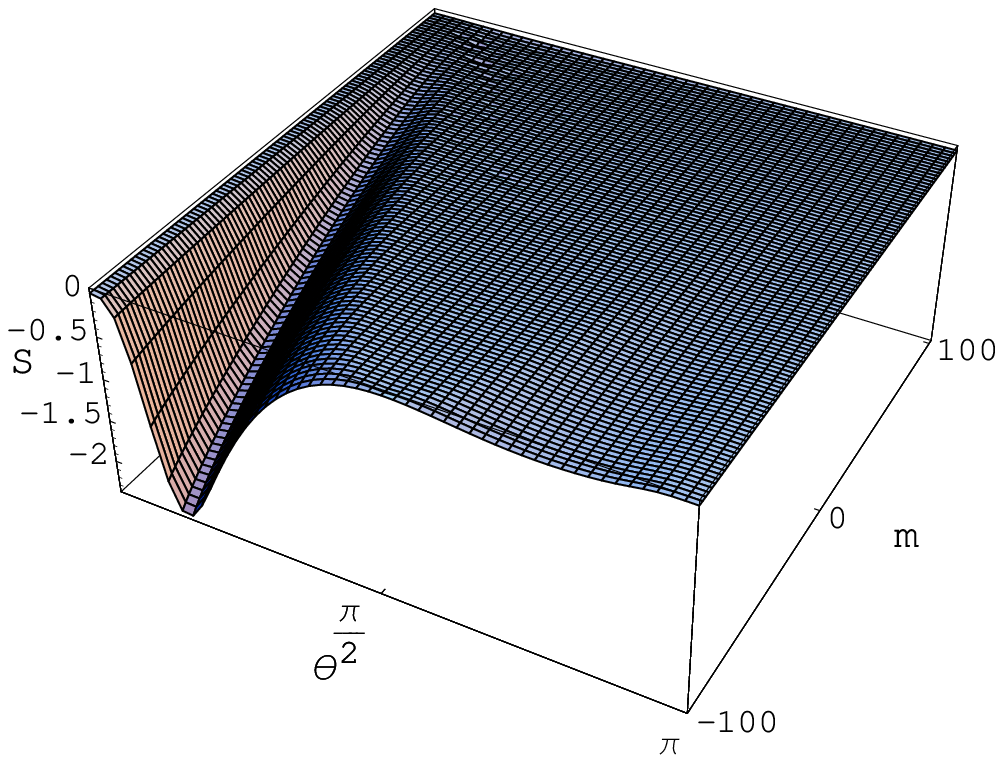}
\hspace{0.4in}
\caption{The figures show the \textit{a)} perturbed entanglement for $\alpha_2=-\frac{1}{10}$, $\alpha_4=-\frac{1}{1000}$ and $\alpha_6=-\frac{1}{200000}$ (as in Figure \ref{BGCPlots} a ) as a function of $m_0$ and $\theta$ for $N=100$ and \textit{b)} the difference $S^{(0+1)}-S^{(0)}$ for the same configuration.}
\label{ent_BGC}
\end{figure}

Figure \ref{ent_BGC} plots the perturbed entanglement caused by the inclusion of background collisions as a function of $m_0$ and $\theta$.  We see that the largest effect of the perturbation occurs in the region $\theta \approx \frac{\pi}{4}$ and small (large negative) values of $m_0$.  There is also a minimum of smaller magnitude around $\theta=\frac{3\pi}{4}$ and $m_0=-N$. The region with large negative $m_0$ corresponds the case where the scattering length between particles is negative, in which most particles lie in the $b$ mode of the condensate. Since the background collisions considered here eject particles from the $a$ mode, they serve to further decrease the value of $m$.  We understand the large effect of the perturbation on the aforementioned region as follows: since most particles lie in the $b$ mode, ejecting any particles from the $a$ mode has a large effect on the system, since it already has only a small number of $a$ mode particles, relative to the number of $b$ mode particles.  Figure \ref{ent_BGC} \textit{b)} shows that for values of $m_0$ above the region in question, where the $b$ mode particles become more scarce, the perturbation has little effect on the system.  Again, we understand this as being because ejecting an $a$ mode particle from a state with large $m$ value is of little significance to the system as a whole.  We also see from this figure that the entanglement is decreased, regardless of $m_0$ and $\theta$, for this specific choice of $\alpha_i$.

\subsubsection{Three-Body Recombination}

Perhaps the most physically important type of inelastic collision leading to particle loss is three-body recombination (TBR) \cite{sodin}.  TBR occurs when three particles in a single mode collide to form a diatomic molecule and a particle of the same mode that carries off any excess energy.  Depending on this energy and the energy of the potential trap, the resultant particle may or may not escape the trap \cite{sodin}.

A perturbation modeling TBR would be most naturally treated in the $\mathfrak{su}(3)$ formalism where the extra bose operator would correspond to the diatomic molecule.  However, in the $\mathfrak{su}(2)$ formalism we have no such third mode.  We thus model TBR by the Hamiltonian
\begin{equation}
H_{TBR} = C \left( \sigma a^{\dagger}aaa + (1-\sigma) aaa \right).
\end{equation}
The parameter $\sigma$ describes the probability of the emitted particle remaining trapped in the condensate.  The term proportional to $aaa$ lowers the total particle number by an odd number and hence, as explained above, will have no effect on the perturbed particle distribution.  We thus neglect this term from our analysis and absorb the constant $C$ into $\sigma$.  With $H_{TBR}$ as a perturbation we have $\mathcal{A} = \{ N, N-2, N-3 \} $.  We take as our unperturbed state $\vert N, N \rangle$, which is non-degenerate.  Proceeding, we find
\begin{equation}
\vert N ,N \rangle^{(1)} = \sigma \frac{\cos^4\frac{\theta}{2} \sqrt{N(N-1)}(N-2)}{2A_1 +4A_2(N-2)} \vert N-2, N-2 \rangle
\end{equation}
which gives the generalized particle distribution
\begin{equation}
P_{gen}(m)= \left| d_{m,N}^N \right|^2 +2\sigma \frac{\cos^4\frac{\theta}{2} \sqrt{N(N-1)}(N-2)}{2A_1 +4A_2(N-2)}  d_{m,N}^N d_{m,N-2}^{N-2}. \label{TBRPD}
\end{equation}

$P_{gen}$ given by eq. (\ref{TBRPD}) is plotted in Figure \ref{TBRPlots}.  As expected the the sign of $\sigma$ determines whether the perturbation vertically shrinks or stretches the particle distribution.  We also see that the system is very sensitive to three-body recombination terms, as a coupling constant of order $0.001$ causes significant changes to the particle distribution.

Once again, setting $P^{(1)}$ equal to the last term in eq. (\ref{TBRPD}), we can use eqs. (\ref{pertEnt}) and  (\ref{incEnt}) to study the perturbed entanglement.  The choice of the sign of $\sigma$ to increase the entanglement again depends on the values of $\theta$, $A_1$ and $A_2$.

\begin{figure}[t]
\psfrag{a}{\textit{a})}
\psfrag{b}{\textit{b})}
\psfrag{c}{\textit{c})}
\psfrag{m}{$m$}
\psfrag{P}{$P$}
\centering
\includegraphics[width=2in, height=2in]{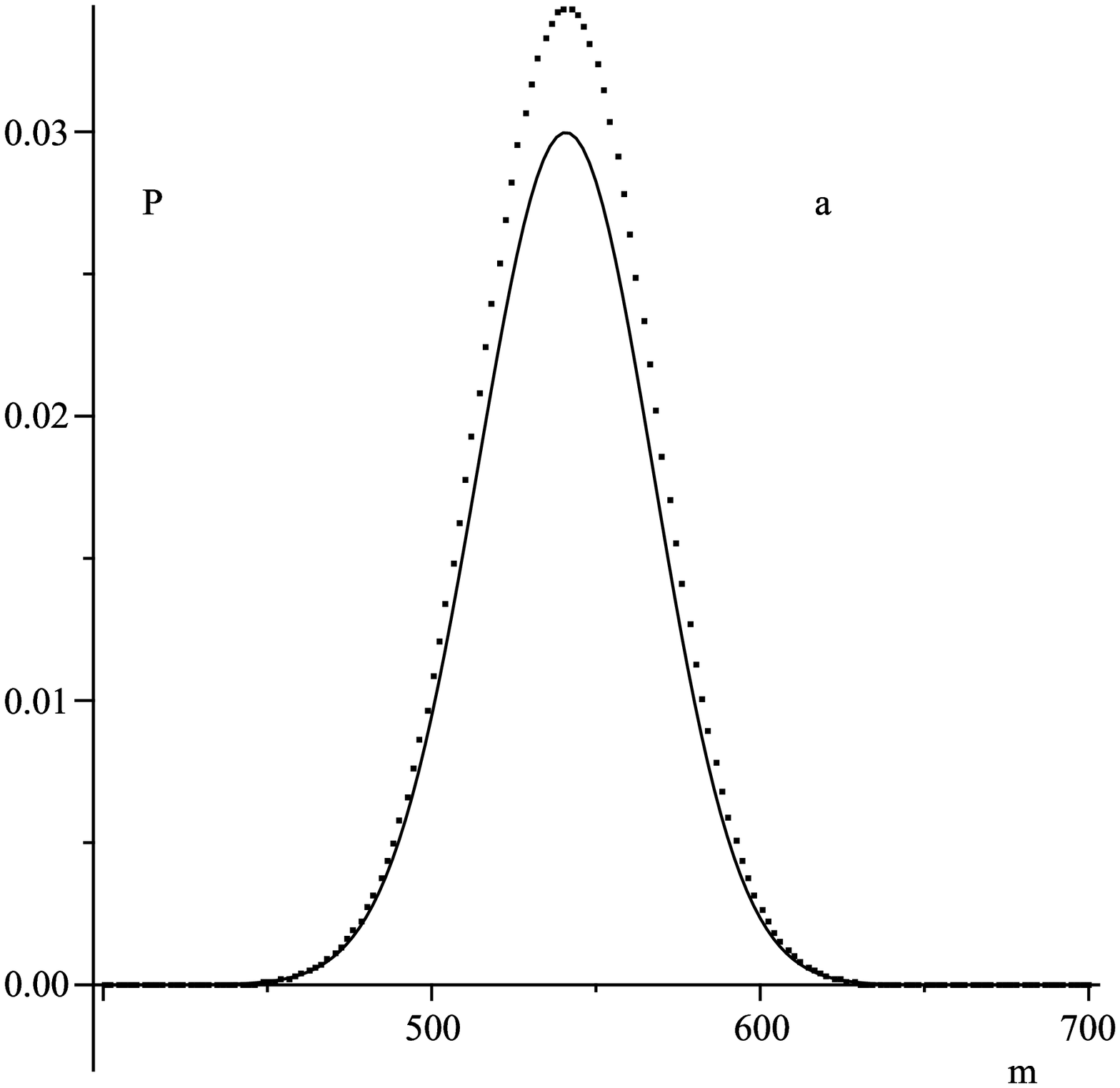}
\hspace{0.4in}
\includegraphics[width=2in, height=2in]{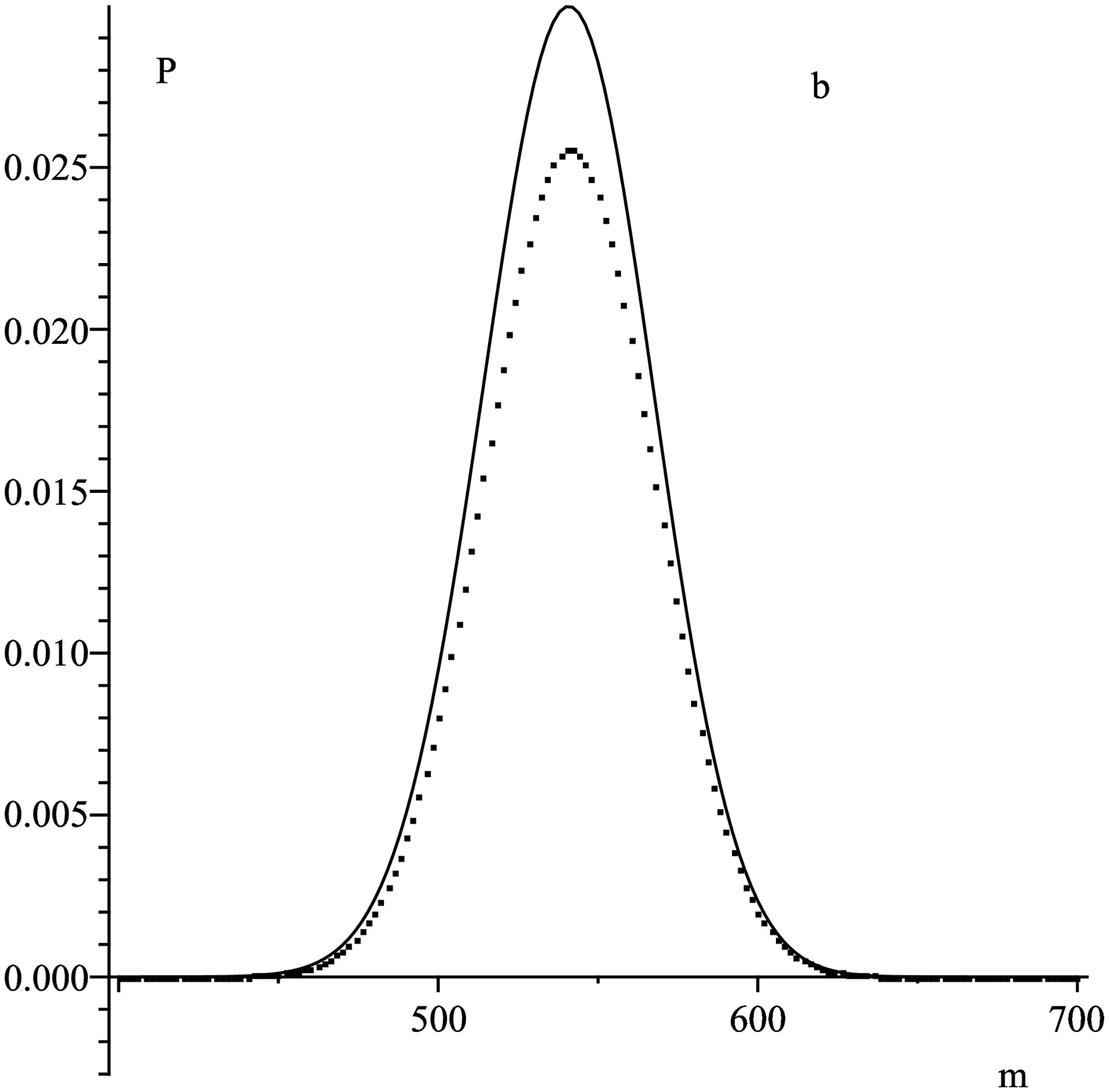}
\hspace{0.4in}
\includegraphics[width=2in, height=2in]{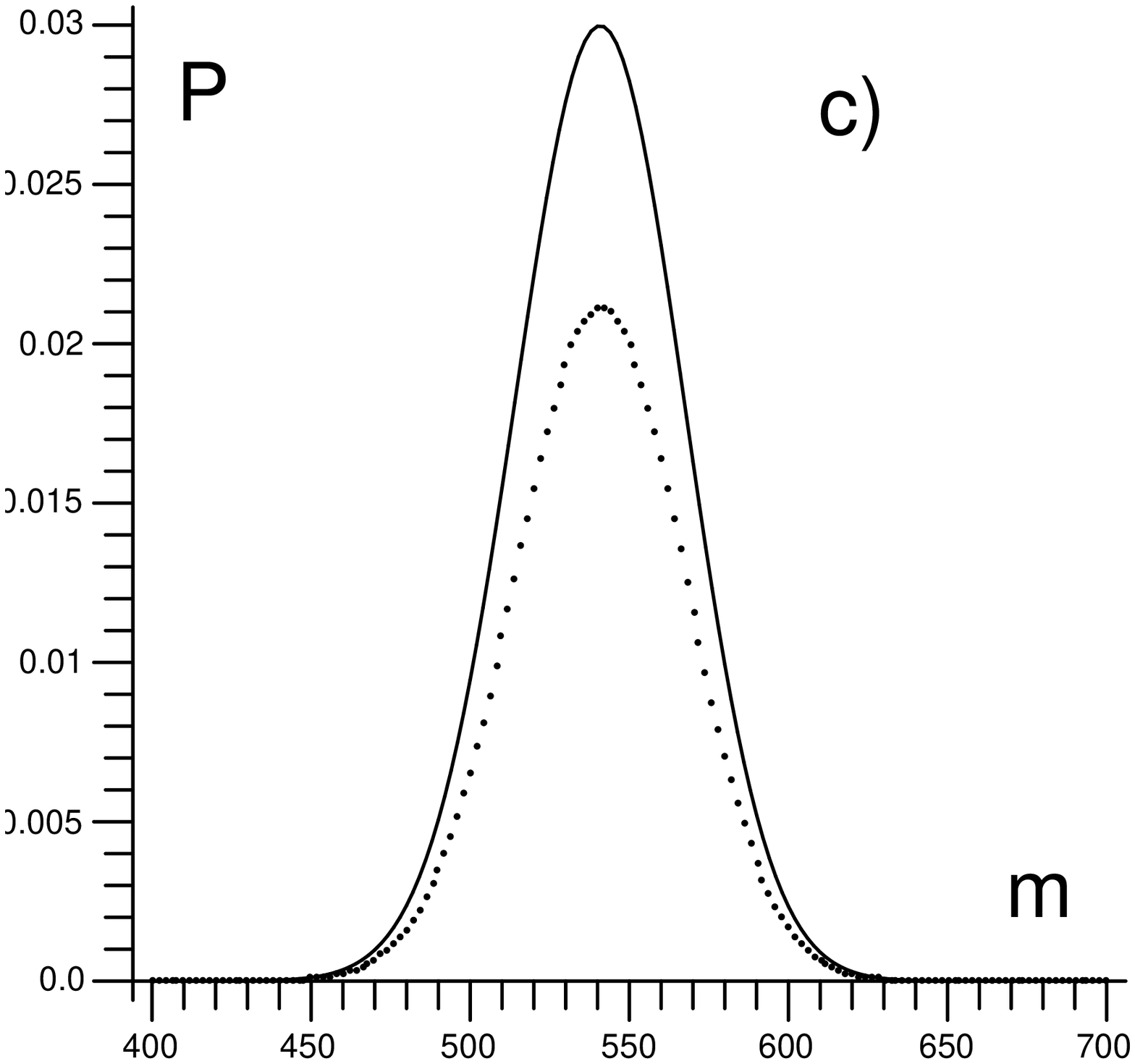}
\caption{The unperturbed (solid) and perturbed (dotted) particle distributions, the latter given by eq. (\ref{TBRPD}), for $N=1000$ and $\theta=1$ with \textit{a)} $\sigma= -\frac{1}{1000}$,  \textit{b)} $\sigma= \frac{1}{1000}$ and \textit{c)} $\sigma= \frac{2}{1000}$.}
\label{TBRPlots}
\end{figure}

\begin{figure}[t]
\centering
\includegraphics[width=2in, height=2in]{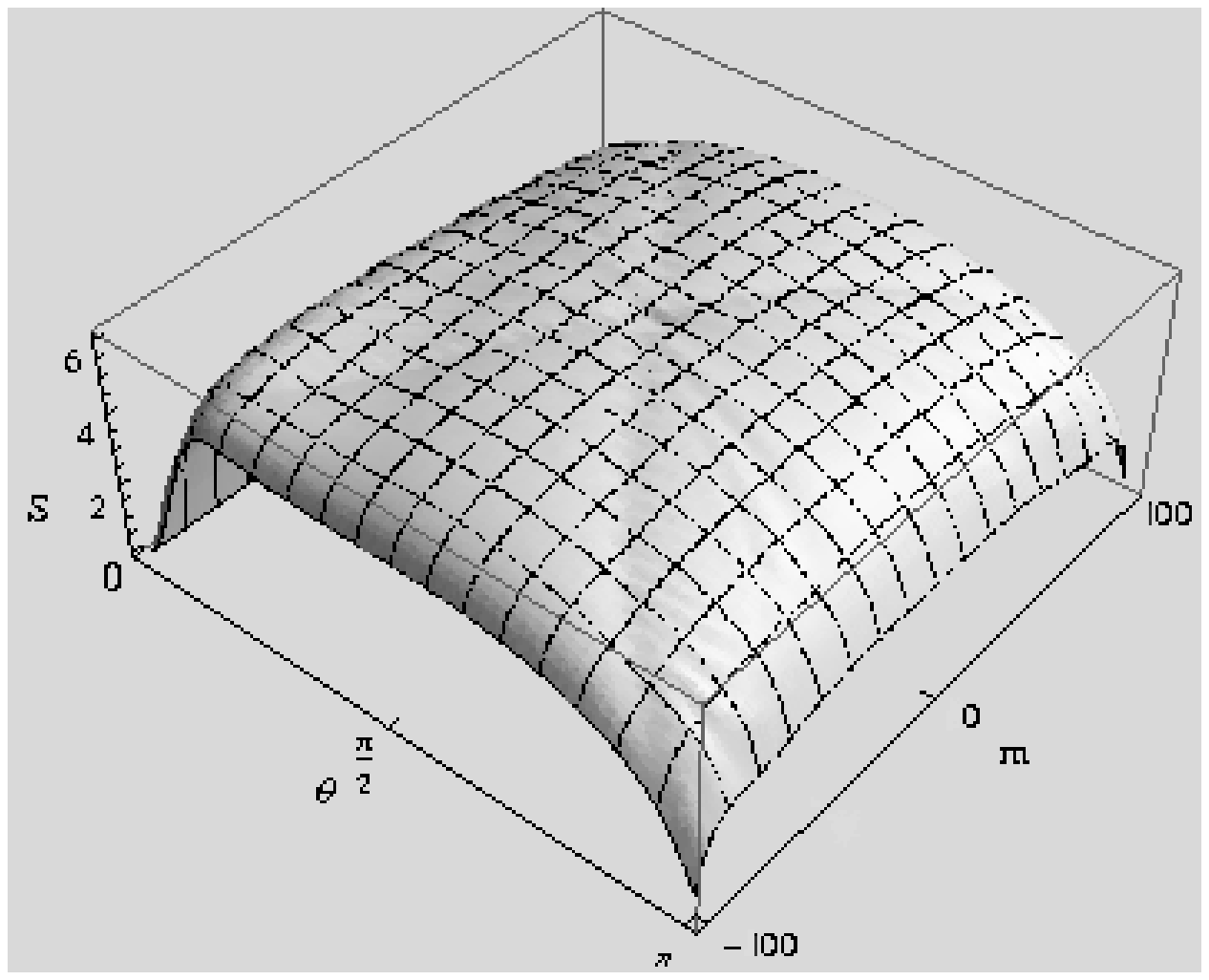}
\hspace{0.4in}
\includegraphics[width=2in, height=2in]{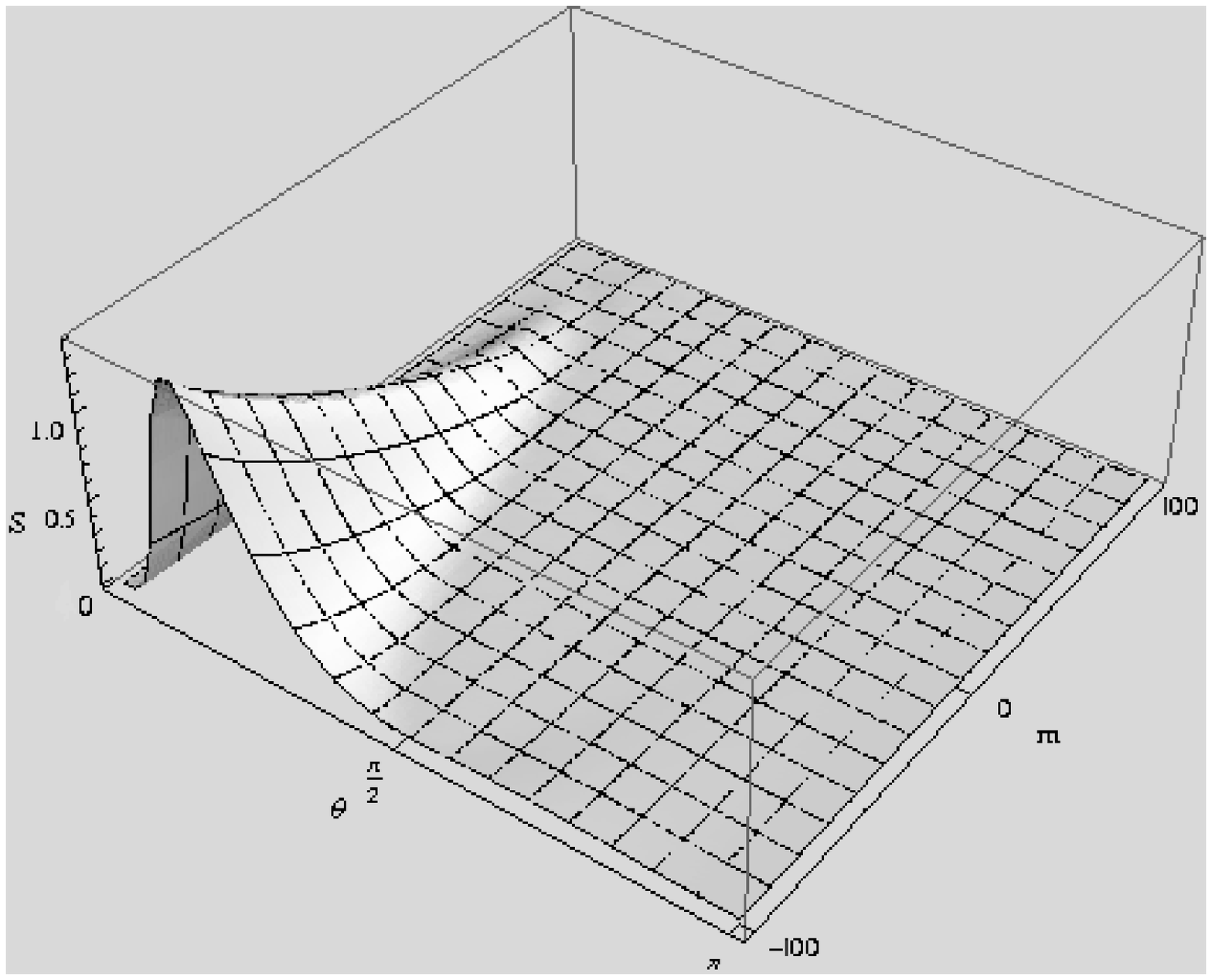}
\hspace{0.4in}
\caption{The figures show the \textit{a)} perturbed entanglement for $\sigma=0.5$ as a function of $m_0$ and $\theta$ for $N=100$ and \textit{b)} the difference $S^{(0+1)}-S^{(0)}$ for $N=100$ and $\sigma=0.5$.}
\label{ent_TBR}
\end{figure}

\section{Discussion}

We have successfully studied the effects of a number of perturbations to the two-mode BEC model considered in \cite{fuent1}.  We have found the corrections to the condensate wave functions, which in turn allowed the determination of the corrections to the particle distribution, time-evolution of the relative number operator and the entanglement.  In the non-degenerate case, we have shown that the model of \cite{fuent1}, in which the coupling constants of the Hamiltonian are constrained, is robust to perturbations in these constants.  The system is most sensitive to perturbations in the elastic scattering length $\mathcal{U}$ and in the mode-exchange parameters $\Lambda$ and $\mu$.  In each of the parameter perturbations the entanglement of the coherent states $\vert N, N\rangle$ and $\vert N, -N\rangle$ is most affected.  Each parameter perturbation was observed to increase the asymmetry in the particle distributions. We also observed that for specific values of $m_0$ and $\theta$, the latter corresponding to certain parameter strengths, the entanglement perturbations are especially large.  It was found that when the condensate is degenerate (because of specific choices of $A_1$ and $A_2$) it is much more sensitive to perturbations, both in terms of particle distributions and entanglement.  The effects on the entanglement are qualitatively different than in the non-degenerate case.  In particular, the perturbations to the entanglement are mainly present only in the regions in which $m_0$ is close to $-N$ which corresponds to a condensate with small negative scattering length.

We have also extended the formalism to include the analysis of interactions involving particle loss.  From these we can predict corrections to the particle distribution, entanglement and evolution of the relative population.  This provides a new class of possible experiments that will allow the model here to be tested. Indeed, interactions involving particle loss have been limited experimentally, partially because they create instabilities in the system.  With our results these interactions could be allowed to occur, and the results compared with the predictions contained above. The changes induced by both three-body recombination and background collisions are qualitatively different than those induced by parameter perturbations.  We find that the system is very sensitive to these external perturbations, which induce large changes in the particle distribution and entanglement from relatively small external coupling strengths, as compared to the induced changes from parameter perturbations.  As with the degenerate parameter perturbation, we found that the perturbative effects on the entanglement become negligible as $m_0$ approaches $N$, i.e. when the scattering length is small and positive. In general, we can conclude that higher collision rates make the condensate more stable to perturbations.

Our results promise to be useful in the experimental realization of two-mode Bose-Einstein condensates which are stable to parameter perturbations and particle loss.  We are planning to extend our analysis to include many-body interactions which are present at cooler stages of the condensate and study the role of such interactions in the stability of the condensate.

\section*{Appendix: The Schwinger $\mathfrak{su}(2)$ Boson Representation}

Let $J_z$ and $J_{\pm}=J_x \pm i J_y$ be the usual generators of the Lie algebra $\mathfrak{su}(2)$, satisfying
\begin{equation}
[J_z ,J_{\pm} ]=\pm J_{\pm}, \;\;\;\; [J_+ , J_- ]=2J_z.
\end{equation}
The eigenstates are labeled as $\vert j,m_{ang} \rangle$ where
\[
J_z \vert j, m_{ang} \rangle= m_{ang} \vert j , m_{ang} \rangle, \;\;\; J^2\vert j, m_{ang} \rangle = j\left(j+1\right)\vert j, m_{ang} \rangle.
\]
Note that for fixed $j$, $m_{ang}$ may take any of the $2j+1$ values $-j, -j +\frac{1}{2}, \dots, j-\frac{1}{2}, j$.  We also have $J_{\pm} \vert j, m_{ang} \rangle= \sqrt{j(j+1)-m_{ang}(m_{ang} \pm 1)} \vert j, m_{ang} \pm 1 \rangle$.  The Schwinger representation of $\mathfrak{su}(2)$ defines a Lie algebra homomorphism between the angular momentum representation (generated by $J_{\pm}$ and $J_z$) and a bosonic representation.  Consider two bose operators $a$ and $b$ with $\left[ a, a^{\dagger} \right] =1 =\left[b,b^{\dagger}\right]$ and all other pairs having vanishing commutator. Defining the mapping
\begin{equation}
J_+ \mapsto a^{\dagger}b, \;\;\; J_- \mapsto ab^{\dagger}, \;\;\; J_z \mapsto \frac{1}{2}\left(a^{\dagger}a - b^{\dagger}b \right)
\end{equation}
and extending linearly to the rest of $\mathfrak{su}(2)$ then gives the desired homomorphism \cite{schwi}.  A short calculation then shows that $J^2$ is mapped to $\frac{1}{4}\hat{N}^2 + \frac{1}{2}\hat{N}$ where we have defined $\hat{N}\equiv a^{\dagger}a +b^{\dagger}b$.  We label the basis states in the bosonic representation as $\left| \frac{N}{2}, \frac{m}{2} \right>$ with $N$ the eigenvalue of $\hat{N}$ and $m$ the eigenvalue of $\hat{m}\equiv a^{\dagger}a-b^{\dagger}b$; this is in complete analogy with the label $\vert j, m_{ang} \rangle$.  Note that $\frac{m}{2}$ may take the $2\left(\frac{N}{2}\right) +1 =N+1$ values $-\frac{N}{2},\; -\frac{N}{2}+1, \dots , \frac{N}{2}-1,\; \frac{N}{2}$.  From the definition of the homomorphism it follows that $a^{\dagger}b \left| \frac{N}{2}, \frac{m}{2} \right> = \frac{1}{2}\sqrt{N(N+2)-m(m+2)}\left| \frac{N}{2}, \frac{m}{2} +1 \right>$.  If we then rescale the state $\left| \frac{N}{2}, \frac{m}{2} \right>$ to $\left| N, m \right>$  the above identity is rewritten as
\begin{equation}
a^{\dagger}b \left| N, m \right> = \frac{1}{2}\sqrt{N(N+2)-m(m+2)}\left| N, m+2 \right>.
\end{equation}
That $\left| N, m \right>$ is mapped to a multiple of $\left| N, m+2 \right>$ can be seen directly from the form of the operator $a^{\dagger}b$, which annihilates a particle in mode $b$ while creating one in mode $a$.  Similarly we have
\begin{equation}
ab^{\dagger} \left| N, m \right> = \frac{1}{2}\sqrt{N(N+2)-m(m-2)}\left| N, m-2 \right>.
\end{equation}
Now rescaled, $m$ may take on the $N+1$ values $-N,\; -N+2, \dots,\; N-2, \; N$.

\section*{Appendix: Counting Terms in the $n$-Model Hamiltonian}

It is of interest to quantify the generality of the $n$-model Hamiltonian under study.  We begin this below by first counting the number of terms in the most general Hamiltonian that would be of interest to us.  To do this we must define precisely the Hamiltonians that are of interest to our study of two-mode BECs.  First, we limit ourselves to Hamiltonians that are polynomials in the bose operators.  The interactions under consideration consist of any total number preserving operations.  Operators corresponding to such interactions must then commute with the total number operator $\hat{N}$.  To restrict the class of interactions, we consider only those that have no intermediate interactions, such as the spontaneous creation and annihilation of a particle.  This imposes the restriction that the Hamiltonian be normal ordered.  As usual, we also require self-adjointness of the Hamiltonian.  Finally, we may decompose the Hamiltonian into its homogeneous parts, i.e. terms of degree $1,2,3,\dots$.  So, it is sufficient to first consider only homogeneous Hamiltonians, and then construct more general Hamiltonians from these.  With these assumptions we prove the following proposition.

\begin{Prop}
Let $H$ be a homogeneous, self-adjoint, normal-ordered polynomial in the bose operators $a$ and $b$ and their adjoints.  Furthermore, assume that each term in $H$ commutes with the total number operator $\hat{N}=a^{\dagger}a+b^{\dagger}b$. Put $\deg (H) = n \geq 0$.  Then the number of terms in $H$ is at most $\frac{(n+2)(n+4)}{8}$.
\end{Prop}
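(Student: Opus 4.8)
The plan is to parametrize all admissible monomials by exponent tuples, impose the two linear constraints coming from homogeneity and number-preservation, and then count the resulting monomials modulo the pairing induced by self-adjointness.

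First I would observe that, since the creation operators of distinct modes commute (as do the annihilation operators), every normal-ordered monomial in $a,b,a^{\dagger},b^{\dagger}$ may be written uniquely as $(a^{\dagger})^p (b^{\dagger})^q a^r b^s$ with $p,q,r,s\geq 0$. Homogeneity of degree $n$ gives $p+q+r+s=n$, and the identity $[\hat{N},(a^{\dagger})^p(b^{\dagger})^q a^r b^s]=\bigl((p+q)-(r+s)\bigr)(a^{\dagger})^p(b^{\dagger})^q a^r b^s$ shows that commuting with $\hat{N}$ forces $p+q=r+s$. Together these give $p+q=r+s=n/2$. In particular, when $n$ is odd there are no admissible monomials and the bound holds vacuously, so I would immediately reduce to $n=2k$ even.

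Next I would count. Since $p+q=k$ admits $k+1$ solutions $(p,q)$, and independently $k+1$ solutions $(r,s)$, there are $(k+1)^2$ admissible normal-ordered monomials before invoking self-adjointness. To bring in self-adjointness I would check that the Hermitian conjugate of $(a^{\dagger})^p(b^{\dagger})^q a^r b^s$, after re-normal-ordering, equals $(a^{\dagger})^r(b^{\dagger})^s a^p b^q$; that is, conjugation acts as the swap $(p,q)\leftrightarrow(r,s)$. Self-adjointness then organizes the monomials into $h.c.$ pairs, each counting as a single term, the exception being the self-conjugate monomials with $(p,q)=(r,s)$, which are individually Hermitian. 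There are $k+1$ of these diagonal monomials, while the remaining $(k+1)^2-(k+1)$ split into $\tfrac12\bigl[(k+1)^2-(k+1)\bigr]$ conjugate pairs.

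Adding these contributions gives at most $(k+1)+\tfrac12 k(k+1)=\tfrac12(k+1)(k+2)$ terms, which is exactly $\frac{(n+2)(n+4)}{8}$ after substituting $n=2k$; I would sanity-check this against the degree-$2$ and degree-$4$ parts of $H_2$ in eq.~(\ref{becHam}), which furnish $3$ and $6$ admissible terms respectively. The inequality is \emph{at most} rather than an equality because a given $H$ may have vanishing coefficients on some admissible monomials. I expect the only delicate step to be the self-adjointness bookkeeping: confirming that conjugation is the swap $(p,q)\leftrightarrow(r,s)$ after re-normal-ordering, and correctly removing the diagonal before halving so as not to double-count the self-conjugate monomials. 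The two linear constraints and the raw monomial count are routine.
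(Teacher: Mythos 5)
Your proof is correct, but it reaches the count by a genuinely different route than the paper. You parametrize all admissible monomials globally as $(a^{\dagger})^p(b^{\dagger})^q a^r b^s$, reduce the constraints to $p+q=r+s=k$, obtain $(k+1)^2$ monomials, and then quotient by the conjugation involution $(p,q)\leftrightarrow(r,s)$, counting its $k+1$ fixed points and $\tfrac12 k(k+1)$ two-element orbits. The paper instead stratifies monomials by the number $p$ of $b$-mode operators, shows each stratum contributes $\lfloor p/2\rfloor+1$ terms (already modulo conjugation), and sums the strata to get the recursion $\chi(2k)=\chi(2(k-1))+k+1$, which it then solves. Your argument buys a closed form in one step and makes the role of self-adjointness completely transparent (it is exactly an involution on exponent tuples, and your verification that conjugation is the swap $(p,q)\leftrightarrow(r,s)$ -- with no genuine re-ordering needed, since creation operators commute among themselves and likewise annihilation operators -- is the one point that must be checked); the paper's stratified count is closer in spirit to how one would organize the terms of a physical Hamiltonian by mode occupation, at the cost of a floor-function bookkeeping identity and a recursion. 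Both use the same convention that a monomial and its Hermitian conjugate count as a single term, and both handle odd $n$ correctly (yours vacuously, the paper's by noting $n=2k$ is forced).

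One peripheral slip: your sanity check asserts that the degree-$4$ part of $H_2$ in eq.~(\ref{becHam}) furnishes $6$ admissible terms, but it furnishes only $4$ (the $\mathcal{U}$ term, the $\Lambda$ pair, and the two $\mu$ pairs); the maximum of $6$ is attained only by the most general degree-$4$ Hamiltonian, which also contains the same-mode elastic collisions $(a^{\dagger})^2a^2$ and $(b^{\dagger})^2b^2$ absent from $H_2$. This does not affect the proof, whose logic nowhere relies on $H_2$ saturating the bound -- indeed the proposition is stated as an upper bound precisely because coefficients may vanish.
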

\begin{proof}
Let $\chi (n)$ denote the number of terms in the Hamiltonian.  Observe that the requirement that each term in $H$ commute with $\hat{N}$ ensures that $n=2k$ for some $k\in\mathbb{N}$.  Consider now a monomial of $H$ with $2k-p$ of the operators being either $a$ or $a^{\dagger}$ while the remaining $p$ being either $b$ or $b^{\dagger}$.  It is easy to check that, assuming self-adjointness, normal ordering and vanishing commutator with $\hat{N}$, there are $\lfloor \frac{p}{2} \rfloor + 1$ such terms, where for $q\in\mathbb{R}$, $\lfloor q \rfloor$ is the greatest integer less than or equal to $q$. We may group the monomials of $H$ into three groups according to whether there are more than, less than, or the same number of $a$ mode terms as $b$ mode terms. Doing so, we find that the maximal number of terms in $H$ is
\[
\chi(n=2k) = 2\sum_{p=0}^{k-1} \left( \lfloor \frac{p}{2} \rfloor +1 \right) + \lfloor \frac{k}{2} \rfloor +1.
\]
Since $\lfloor \frac{k-1}{2} \rfloor + \lfloor \frac{k}{2} \rfloor = k-1$ for all $k\in\mathbb{Z}$ we obtain the recursion relation $\chi(2k)=\chi(2(k-1))+k+1$.  Repeated application of this relation yields $\chi(2k)= \chi(2) + \sum_{j=3}^{k+1} j$, which we may rewrite as $\displaystyle \chi (2k)= \sum_{j=1}^{k+1} j$ since $\chi(2)=3$, from which the proposition follows.
\end{proof}

Since we can decompose a general polynomial in terms of its monomials, the above proposition is sufficient to count the maximal number of terms in the most general Hamiltonian described above.

\begin{Cor}
Let $H$ be as above without the assumption of homogeneity with $\deg (H)= n \geq 0$.  Then the number of terms in $H$ is at most $\frac{1}{48}n^3+\frac{1}{4}n^2+\frac{11}{12}n+1$.
\end{Cor}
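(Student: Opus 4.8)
The plan is to reduce the inhomogeneous case directly to the homogeneous bound already established in the Proposition. First I would recall from the proof of the Proposition that the condition $[H,\hat{N}]=0$ forces each monomial to contain equally many creation and annihilation operators, and hence to have even total degree. Consequently the decomposition of $H$ into homogeneous components involves only even-degree pieces, so I may write $H=\sum_{k=0}^{K}H_{2k}$, where $H_{2k}$ denotes the degree-$2k$ part and $K=\lfloor n/2\rfloor$. Since monomials of distinct total degree are linearly independent, the number of terms of $H$ is exactly the sum of the numbers of terms of the individual $H_{2k}$.

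Next I would invoke the Proposition on each homogeneous piece: the degree-$2k$ part contributes at most $\frac{(2k+2)(2k+4)}{8}=\frac{(k+1)(k+2)}{2}$ terms. Summing over the components, the total number of terms in $H$ is bounded above by $\sum_{k=0}^{K}\frac{(k+1)(k+2)}{2}$. The only remaining task is to put this finite sum into closed form and match it against the stated cubic.

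To evaluate the sum I would rewrite $\frac{(k+1)(k+2)}{2}=\binom{k+2}{2}$ and apply the hockey-stick identity $\sum_{k=0}^{K}\binom{k+2}{2}=\binom{K+3}{3}=\frac{(K+1)(K+2)(K+3)}{6}$. For a top-degree term of degree $n$ to exist one needs $n$ even, so $K=n/2$; substituting $x=n/2$ into the expansion $(x+1)(x+2)(x+3)=x^{3}+6x^{2}+11x+6$ and dividing by $6$ produces $\frac{1}{48}n^{3}+\frac{1}{4}n^{2}+\frac{11}{12}n+1$, which is precisely the asserted bound.

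There is no genuine obstacle in this argument; the only points demanding care are the parity observation that removes all odd-degree homogeneous pieces from the sum, and the verification that the cubic remains a valid upper bound when $n$ is odd as well, where $K=\lfloor n/2\rfloor<n/2$ so the true count is even smaller than the formula. I would finish by double-checking the hockey-stick evaluation and the closing polynomial algebra that converts $\binom{K+3}{3}$ into the displayed expression.
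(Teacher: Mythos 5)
Your proof is correct and follows essentially the same route as the paper's: decompose $H$ into its (necessarily even-degree) homogeneous parts, bound each by the Proposition, and sum. The paper compresses the final step into ``after some algebra,'' whereas you carry it out explicitly via the hockey-stick identity $\sum_{k=0}^{K}\binom{k+2}{2}=\binom{K+3}{3}$, and you also note the minor point that the bound still holds (with room to spare) when $n$ is odd.
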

\begin{proof}
Again, we can write $n=2k$ for some integer $k\geq 0$.  The number of terms in the Hamiltonian is the sum of the number of terms in each of its homogeneous parts, i.e.  $\sum_{j=0}^k \chi(2j)$. After some algebra the corollary follows.
\end{proof}

As noted in the proofs of the above proposition and corollary, the assumptions on the Hamiltonian imply $deg(H)=n$ is even. We see that the total number of terms in the most general $n$-model Hamiltonian grows like $n^3$.  We would like to compare this result to the number of terms in the $n$-model of \cite{fuent1,fuent2}.  Unfortunately, we have not been able to successfully count the number of terms in the $n$-model. In order to do so, one would first need to conjugate
\[
H_{0,n}=\sum_{i=0}^n A_i \left( a^{\dagger}a-b^{\dagger}b \right)^i
\]
by the displacement operator $U (\xi)$.  The number of terms in this calculation grows quickly with $n$ which makes the conjecture of a formula for the number of terms difficult.  Moreover, general arguments to count the number of terms, as in the proposition above, seem difficult to make in this case.  Regardless, the results for $n=1,2$ and $3$, summarized in Table \ref{count_tab}, suggest that there are a significant number of terms missed by the $n$-model. Whether or not  these terms are important in an experimental setting is another question.

\begin{table}[ht]
\caption{Comparison of the number of terms in $n$-model and the most general model}
\centering
\begin{tabular}{c c c c}
\hline\hline
$n$ & Terms in $n$-model & Terms in general model & Missed Terms  \\ [0.5ex]

\hline 
0 & 1 & 1 & 0 \\ 
1 & 3 & 4 & 1 \\
2 & 6 & 10 & 4 \\
3 &	13 & 20 & 7\\
 [1ex] 
\hline 
\end{tabular}
\label{count_tab}
\end{table}

\section*{Acknowledgments}

This work was supported by the Natural Sciences \& Engineering Research
Council of Canada. M.B.Y. was also partially supported by a Renaissance Technologies fellowship, and would like to thank Andrew Louca and Paul McGrath for discussions.
I. F-S was supported by the Alexander von Humboldt Foundation and would like to thank Tobias Brandes, Carsten Henkel and Martin Wilkens for their hospitality and P. Barberis-Blostein for useful comments and discussions.

\end{document}